\journal{Theoretical Computer Science}
\def\template{elsarticle}
\renewcommand{\problem}[1]{\textsc{#1}\xspace}
\newenvironment{problemStatement}[1]{\begin{oldproblem}[#1]}{\end{oldproblem}}
\newcommand{\problem}[1]{\textsc{#1}\xspace}
\theoremstyle{remark}
\newtheorem*{problemdef}{Problem}
\newenvironment{problemStatement}[1]{\begin{problemdef}[#1]}{\end{problemdef}}
\newcommand{\kcore}{$k$-core\xspace}
\newcommand{\kcores}{$k$-cores\xspace}
\newcommand{\klcore}{$(k, l)$-core\xspace}
\newcommand{\ktruss}{$k$-truss\xspace}
\newcommand{\dynTime}[3]{\langle #1,\allowbreak #2,\allowbreak #3 \rangle}
\newcommand{\dynOh}[3]{\dynTime{\Oh(#1)}{\Oh(#2)}{\Oh(#3)}}
\DeclareMathOperator{\polylog}{polylog}
\newcommand{\1}{\mathds{1}}
\newcommand{\eps}{\epsilon}
\newcommand{\Bzo}{\{0,1\}}
\newcommand{\BOR}{{\textsc{\small OR}}\xspace}
\newcommand{\BAND}{{\textsc{\small AND}}\xspace}
\newcommand{\DSP}{\problem{DSP}}
\newcommand{\threeCore}{\problem{3-Core}}
\newcommand{\CoreValue}{\problem{CoreValue}}
\newcommand{\TrussValue}{\problem{TrussValue}}
\newcommand{\twoCore}{\problem{Full-2-Core}}
\newcommand{\fourTruss}{\problem{4-Truss}}
\newcommand{\KcoreMaint}{\problem{FCM}}
\newcommand{\TrussMaint}{\problem{FTM}}
\newcommand{\kSAT}{\problem{$k$-SAT}}
\newcommand{\MCV}{\problem{MCVP}}
\newcommand{\A}{\ensuremath{\mathcal{A}}\xspace}
\newcommand{\complexityClass}[1]{\ensuremath{\mathbf{#1}}\xspace}
\renewcommand{\P}{\complexityClass{P}}
\newcommand{\NP}{\complexityClass{NP}}
\newcommand{\NC}{\complexityClass{NC}}
\newcommand{\IPL}{\complexityClass{IPL}}
\newcommand{\OMv}{\problem{OMv}}
\newcommand{\OuMv}{\problem{OuMv}}
\newcommand{\OMvlink}{\hyperref[conj:omv]{OMv}\xspace}
\newcommand{\OMvConj}{\hyperref[conj:omv]{OMv conjecture}\xspace}
\newcommand{\SETH}{\hyperref[conj:seth]{SETH}\xspace}
\newcommand{\ApproxCore}[1]{\problem{#1-ApproxCoreValue}}
\author
\title
\affiliation
\author #1#2#3#4#5
\RenewDocumentCommand \title { O{} m }
{
  \tl_set:Nn \g_elsarticle_template_short_title_tl {#1}
  \tl_set:Nn \g_elsarticle_template_title_tl {#2}
}
\keywords
{
  \cs_undefine:N \keywords
}
\keywords #1
\institution #1 { organization={#1}, }
\providecommand{\lastnameauthors}[1]{}
\providecommand{\firstlastnameauthors}[1]{}
\providecommand{\Description}[1]{}
\providecommand{\ccsdesc}[2][]{}
\newcommand{\defineciteauthor}[2]{}
\newcommand{\citet}[1]{%
  \ifcsname citeauthor@#1\endcsname
    \csname citeauthor@#1\endcsname~\cite{#1}%
  \else
    \textbf{[Missing author]}~\cite{#1}%
    \PackageWarning{common-template}{Missing author for #1.}
  \fi
}
\newcommand{\defineciteauthor}[2]{%
  \expandafter\def\csname citeauthor@#1\endcsname{#2}%
}
\newcommand{\msccodes}[1]{}%
\title{Hardness of Dynamic Core and Truss Decompositions}
\date{April 2026}
    \author{Yan S. Couto\inst{1}\orcidID{0009-0005-5850-8696} \and
    Cristina~G.~Fernandes\inst{1}\orcidID{0000-0002-5259-2859}}
    \institute{University of São Paulo, São Paulo, SP, Brazil \\
    \email{\{yancouto,cris\}@ime.usp.br}}
    \author
    {Yan S. Couto} 
    { 
      \institution{University of São Paulo}%
      \city{São Paulo}%
      \state{SP}%
      \country{Brazil}%
    }{yancouto@ime.usp.br} 
    {0009-0005-5850-8696} 
    {} 
    \author
    {Cristina G. Fernandes}
    {
      \institution{University of São Paulo}%
      \city{São Paulo}%
      \state{SP}%
      \country{Brazil}%
    }{cris@ime.usp.br}
    {0000-0002-5259-2859}
    {}
\keywords{community search, dynamic graphs, core decomposition, fine-grained complexity}
\begin{document}

\ExplSyntaxOn

\seq_new:N \l_elsarticle_template_unique_affiliations_seq
\seq_new:N \l_elsarticle_template_author_affiliation_idx_seq
\prop_new:N \l_elsarticle_template_affiliation_to_idx_prop

\int_new:N \l_elsarticle_template_loop_int
\tl_new:N \l_elsarticle_template_current_affiliation_tl
\tl_new:N \l_elsarticle_template_current_affiliation_idx_tl
\tl_new:N \l_elsarticle_template_render_affiliation_tl
\tl_new:N \l_elsarticle_template_author_name_tl
\tl_new:N \l_elsarticle_template_author_affiliation_idx_tl
\tl_new:N \l_elsarticle_template_author_email_tl
\int_new:N \l_elsarticle_template_author_count_int
\seq_new:N \l_elsarticle_template_emit_affiliations_seq

\cs_new_protected:Npn \elsarticle_template_emit_author:nn #1#2
{
  \tl_if_blank:nTF {#2}
  {
    \elsarticle_template_old_author{#1}
  }
  {
    \elsarticle_template_old_author[#2]{#1}
  }
}

\cs_new_protected:Npn \elsarticle_template_emit_affiliation:nn #1#2
{
  \elsarticle_template_old_affiliation[#1]{#2}
}

\begin{frontmatter}

\tl_if_blank:VTF \g_elsarticle_template_short_title_tl
{
  \elsarticle_template_old_title{\g_elsarticle_template_title_tl}
}
{
  \elsarticle_template_old_title[\g_elsarticle_template_short_title_tl]{\g_elsarticle_template_title_tl}
}

\int_set:Nn \l_elsarticle_template_author_count_int { \seq_count:N \g_elsarticle_template_author_names_seq }
\int_step_variable:nnNn { 1 } { \l_elsarticle_template_author_count_int } \l_elsarticle_template_loop_int
{
  \seq_pop_left:NN \g_elsarticle_template_author_names_seq \l_elsarticle_template_author_name_tl
  \seq_pop_left:NN \g_elsarticle_template_author_affiliations_seq \l_elsarticle_template_current_affiliation_tl
  \seq_pop_left:NN \g_elsarticle_template_author_emails_seq \l_elsarticle_template_author_email_tl
  \tl_set:Nx \l_elsarticle_template_current_affiliation_tl
  {
    \tl_trim_spaces:V \l_elsarticle_template_current_affiliation_tl
  }
  \tl_if_blank:VTF \l_elsarticle_template_current_affiliation_tl
  {
    \tl_clear:N \l_elsarticle_template_author_affiliation_idx_tl
  }
  {
    \prop_get:NVNTF
      \l_elsarticle_template_affiliation_to_idx_prop
      \l_elsarticle_template_current_affiliation_tl
      \l_elsarticle_template_current_affiliation_idx_tl
    {
      \tl_set_eq:NN \l_elsarticle_template_author_affiliation_idx_tl \l_elsarticle_template_current_affiliation_idx_tl
    }
    {
      \tl_set:Nx \l_elsarticle_template_current_affiliation_idx_tl
      {
        l\int_eval:n { \seq_count:N \l_elsarticle_template_unique_affiliations_seq + 1 }
      }
      \prop_put:NVV
        \l_elsarticle_template_affiliation_to_idx_prop
        \l_elsarticle_template_current_affiliation_tl
        \l_elsarticle_template_current_affiliation_idx_tl
      \seq_put_right:NV
        \l_elsarticle_template_unique_affiliations_seq
        \l_elsarticle_template_current_affiliation_tl
      \tl_set_eq:NN \l_elsarticle_template_author_affiliation_idx_tl \l_elsarticle_template_current_affiliation_idx_tl
    }
  }
  \exp_args:NVV
    \elsarticle_template_emit_author:nn
    \l_elsarticle_template_author_name_tl
    \l_elsarticle_template_author_affiliation_idx_tl

  \tl_set:Nx \l_elsarticle_template_author_email_tl
  {
    \tl_trim_spaces:V \l_elsarticle_template_author_email_tl
  }
  \tl_if_blank:VF \l_elsarticle_template_author_email_tl
  {
    \ead{\l_elsarticle_template_author_email_tl}
  }
}

\seq_set_eq:NN \l_elsarticle_template_emit_affiliations_seq \l_elsarticle_template_unique_affiliations_seq
\int_step_inline:nn { \seq_count:N \l_elsarticle_template_emit_affiliations_seq }
{
  \seq_pop_left:NN \l_elsarticle_template_emit_affiliations_seq \l_elsarticle_template_render_affiliation_tl
  \tl_set:Nx \l_elsarticle_template_current_affiliation_idx_tl { l\int_to_arabic:n { #1 } }
  \exp_args:NVV
    \elsarticle_template_emit_affiliation:nn
    \l_elsarticle_template_current_affiliation_idx_tl
    \l_elsarticle_template_render_affiliation_tl
}

\ExplSyntaxOff
\begin{abstract}
The \kcore of a graph is its maximal subgraph with minimum degree at least~$k$, and the core value of a vertex~$u$ is the largest~$k$ for which~$u$ is contained in the~\kcore of the graph. Among cohesive subgraphs,~\kcore and its variants have received a lot of attention recently, particularly on dynamic graphs, 
as reported by Hanauer,  Henzinger, and Schulz in their recent survey on dynamic graph algorithms. 
We answer questions on $k$-core stated in the survey, proving that there is no efficient dynamic algorithm for~\kcore or to find~$(2 - \epsilon)$-approximations for the core values, unless we can improve decades-old state-of-the-art algorithms in many areas including matrix multiplication and satisfiability, based on the established OMv conjecture and SETH.
Our results show that there is no dynamic algorithm for $k$-core asymptotically faster than the trivial ones, and we derive them via proving hardness results for a dynamic version of the circuit value problem. This explains why most recent research papers in this area focus not on a generic efficient dynamic algorithm, but on finding a bounded algorithm, which is fast when few core values change per update. However, we also prove that a bounded algorithm does not exist, based on the OMv conjecture.
We obtain lower bounds for more variants of~$k$-core, such as on directed and bipartite graphs, as well as for the edge variant, known as $k$-truss. We also show no planarizing gadget for~$k$-core exists, so a proof of hardness for planar~$k$-core is still out of reach.
On the positive side, we present a polylogarithmic dynamic algorithm for 2-core.

\ifthenelse{\equal{\template}{lcns}}{%
\keywords{community search \and dynamic graphs \and core decomposition}}
{%
}

\balance
\end{abstract}
\ExplSyntaxOn

\tl_if_blank:VF \g_elsarticle_template_keywords_tl
{
  \begin{keyword}
    \g_elsarticle_template_keywords_tl
  \end{keyword}
}

\end{frontmatter}

\ExplSyntaxOff


\defineciteauthor{hanauer_recent_2022}{Hanauer et al.}
\defineciteauthor{abboud_popular_2014}{Abboud and Williams}
\defineciteauthor{liu_parallel_2022}{Liu et al.}
\defineciteauthor{li_efficient_2014}{Li et al.}
\defineciteauthor{sariyuce_incremental_2016}{Sarıyüce et al.}
\defineciteauthor{zhang_fast_2017}{Zhang et al.}
\defineciteauthor{zhang_unboundedness_2019}{Zhang and Yu}
\defineciteauthor{tian_maximal_2023}{Tian et al.}
\defineciteauthor{anderson_p-complete_1984}{Anderson and Mayr}
\defineciteauthor{miltersen_complexity_1994}{Miltersen et al.}
\defineciteauthor{sun_fully_2020}{Sun et al.}
\defineciteauthor{giatsidis_d-cores_2011}{Giatsidis et al.}
\defineciteauthor{larsen_super-logarithmic_2023}{Larsen and Yu}
\defineciteauthor{ahmed_visualisation_2007}{Ahmed et al.}
\defineciteauthor{sawlani_near-optimal_2020}{Sawlani and Wang}
\defineciteauthor{henzinger_fine-grained_2022}{Henzinger, Paz and Sricharan}

\section{Introduction}

Graphs are widely used to model data in different applications, and the search for communities, through the extraction of cohesive subgraphs, has a variety of applications in the analysis of social, biological, and other networks~\cite{huang_community_2019}. Most classic cohesive subgraphs, such as maximum cliques, quasi-cliques, and~$k$-clans are NP-hard to find or sometimes even to approximate~\cite{hastad_clique_1999}. A looser definition of a cohesive subgraph of a graph is the~\kcore, which is the maximum subgraph of minimum degree at least~$k$~\cite{seidman_network_1983}. There exists an efficient~$\Oh(m)$ algorithm to compute the cores of a graph on~$m$ edges~\cite{batagelj_m_2003}. Due to this, cores have been widely used in many community search applications, and they have been studied extensively~\cite{malliaros_core_2020,du_core_2020}. In the~\deff{core decomposition} problem, the task is to compute, for each vertex~$v$, the maximum~$k$ for which~$v$ is in the~\kcore, called the~\deff{core value} of~$v$. We explore several variants of this problem, such as approximating the core value within an interval, and searching for the~$k$-truss, a subgraph where all edges are in at least~$k-2$ triangles~\cite{cohen_trusses_2008}.

In many real-world applications, data is constantly changing, and thus there has been increased interest in dynamic graph algorithms, where vertices/edges can be inserted/deleted over time, and one wishes to answer queries on the graph that is being modified. 
It is undesirable, at each query, to apply a static algorithm from scratch on the current graph. 
\citet{hanauer_recent_2022} recently presented a broad survey on dynamic graph problems that addresses, among others, the~\kcore, summarizing what was known and stating open questions.
There exists a dynamic algorithm for 
the~$(4+\epsilon)$-approximate 
core decomposition problem that takes polylogarithmic time per graph update and query~\cite{sun_fully_2020}. 
Mostly, research has turned to the~\deff{core maintenance} problem, in which the core value for all vertices must be maintained explicitly after each update (edge insertion or deletion) in the graph. For this problem, research focused on polynomially bounded algorithms~\cite{li_efficient_2014,sariyuce_incremental_2016,zhang_fast_2017}, whose running time per update is bounded by a polynomial in the size of the closed neighborhood of the vertices whose core value changed due to the update. This is under the assumption that, in practice, on real dynamic graphs, the core values of vertices change very rarely, though there are examples of inputs in which all core values change with every update~\cite{sun_fully_2020}.

Parallel to the search for efficient algorithms, it is desirable to find good lower bounds to either prove that an algorithm is optimal or that it is pointless to look for better algorithms for some problems. 
The survey by \citet{hanauer_recent_2022} explicitly notes that there are no known lower bounds for the~\kcore and related problems matching their upper bounds. The only known hardness result is that the core maintenance problem is unbounded considering only edge insertions, and only when under a specific limited model of computation~\cite{zhang_unboundedness_2019}.

For other dynamic graph problems, lower bounds are usually achieved by using well-established conjectures.
The OMv conjecture~\cite{henzinger_unifying_2015} states that there is no truly subcubic algorithm for ``online'' Boolean matrix multiplication, even when one has arbitrary polynomial preprocessing time. 
The OMv conjecture is as strong as many other conjectures around matrix multiplication, triangle detection in graphs, and problems such as \problem{3SUM}~\cite{huang_community_2019}. It is widely used in dynamic graph problems to give conditional lower bounds, such as for reachability, shortest path, and triangle detection problems~\cite{hanauer_recent_2022,henzinger_unifying_2015}.
The well-known SETH~\cite{impagliazzo_complexity_2001} states that there is no truly sub-exponential algorithm for~\kSAT. This hypothesis has been used to prove complexity results for both NP-hard~\cite{cygan_problems_2016} and polynomial~\cite{abboud_popular_2014,patrascu_possibility_2010} problems, including many dynamic graph problems~\cite{abboud_popular_2014}.
Falsifying the OMv conjecture or the SETH would improve decade-long state-of-the-art algorithms in several areas. See~\Cref{sec:undirected_omv_lower,sec:undirected_seth_lower} for the full statements of the OMv conjecture and the SETH, respectively.

\subsection{Our contributions}

We show lower bounds for the dynamic core decomposition problem, as well as for its~$(2-\epsilon)$-approximate version, and its directed, bipartite, and~\ktruss variants, with the following results:
\begin{itemize}
    \item These problems cannot be solved in sub-logarithmic time per update.
    \item These problems cannot be solved efficiently (in polylog time per update and query) unless dynamic versions of any problem in~$\P$ can be solved efficiently.
    \item Under the OMv conjecture, these problems require~$\Omega(m^{\frac12-\epsilon})$ time per update or~$\Omega(m^{1-\epsilon})$ time per query.
    \item Under the SETH, these problems require~$\Omega(m^{1-\epsilon})$ time per update or  query.
    \item Under the OMv conjecture, no bounded algorithm exists for dynamic core or truss maintenance.
\end{itemize}

The complexities stated here assume~$k$ is a constant. For the generalized results when~$k$ might depend on the size of the graph, see~\Cref{res:kcore_bound_all}.
We also show a polylog algorithm for answering the query on whether the core value of a given vertex in a dynamic graph is at least 2, so our reductions are tight in a sense, as they apply to the~\kcore for any~$k \geq 3$.
For the planar version of~$k$-core, we show that a planarizing gadget does not exist, which implies our techniques cannot be directly used to prove hardness of the planar~$k$-core.

\subsection{Our techniques}

The main technical contribution of this work is to obtain a set of lower bounds for dynamic graph problems via reductions from a dynamic version of the classic \emph{circuit value problem}, in which a circuit of AND-, OR-, 0- and 1-gates must be evaluated under dynamic updates to its wires. The circuit value problem is said to be \emph{inherently sequential}, or~\P-complete, and has been used extensively to prove lower bounds in parallel computation~\cite{greenlaw_limits_1995}.

We show it is also possible to obtain lower bounds for dynamic computation from this problem, using several different techniques. Unconditional lower bounds are obtained by using the cell probe model, via reduction from the dynamic XOR problem. Recent literature has shown dynamic lower bounds using famous conjectures such as OMv or SETH, and we also derive such lower bounds for circuit value via reductions from matrix multiplication and the satisfiability problem. We also bring to light a complexity result which has not been used in the recent literature for dynamic lower bounds, stating that circuit value is complete for~\P under dynamic reductions, that is, it is as hard as any dynamic problem in~\P.

With these results in hand, we reduce the circuit value problem to the dynamic core decomposition and related problems, which allows us to derive all the same lower bounds for them. The main benefit from this technique is reusing well-known complexity results when possible, instead of developing them from scratch in a slightly different model. We believe this technique has great potential to be used to prove lower bounds for other dynamic graph problems as well.

\subparagraph*{Organization.}
The rest of the paper is organized as follows. \Cref{sec:preliminaries} provides some preliminaries. \Cref{sec:mcv_to_3core} describes reductions from a dynamic version of the classic circuit value problem to our core and truss decomposition problems.
\Cref{sec:all_lower} proves the stated lower bounds, while \Cref{sec:kcore_unboundedness} presents the unboundedness result for core and truss maintenance. \Cref{sec:2core_alg} presents a polylog algorithm for the~$2$-core and \Cref{sec:no_crossing} proves the nonexistence of a planarizing gadget for~\kcore. Finally, \Cref{sec:conclusion} summarizes the work and highlights remaining open problems in the area.

\subsection{Related work}
We organize the related work by topics.

\subparagraph*{Dynamic core decomposition.}
A polylogarithmic algorithm for the~$(4+\epsilon)$-approximate core decomposition on dynamic graphs~\cite{sun_fully_2020} uses similar techniques previously applied to dynamic densest subgraphs~\cite{hu_maintaining_2017}. Another algorithm for the~$(4+\epsilon)$-approximate core decomposition was presented by \citet{liu_parallel_2022}, with a focus on parallelism and batching queries.

\subparagraph*{Core maintenance.}
A bounded deletions-only algorithm for core maintenance~\cite{zhang_fast_2017} exists, but no bounded algorithm supporting insertions exists. There are, however, progressively better heuristics aiming for boundedness: \citet{li_efficient_2014} showed an algorithm based on the fact that each insertion or deletion changes by one the core value of at most a connected set of vertices. This was later optimized by \citet{sariyuce_incremental_2016} with an elaborate graph traversal, and further optimized by \citet{zhang_fast_2017} by explicitly storing a deletion order for the vertices using an ordered-list data structure. 
This idea has since been reused~\cite{guo_simplified_2024} and applied broadly in the area, to parallel algorithms~\cite{guo_parallel_2023}, weighted graphs~\cite{zhang_order_2023}, hierarchical core decompositions~\cite{gabert_batch_2022}, partial maintenance problems~\cite{zhang_maintaining_2024}, truss maintenance problems~\cite{zhang_unboundedness_2019},~etc.

\subparagraph*{Lower bounds and unboundedness.}
\citet{zhang_unboundedness_2019} proved the unboundedness of core and truss maintenance problems under the locally persistent computation model, a very limited model in which an algorithm can store data only on nodes and move only from a node to an adjacent one. This result was extended to the directed case of truss maintenance by \citet{tian_maximal_2023}. Even though lower bounds for dynamic graph problems have been studied extensively~\cite{abboud_popular_2014,hanauer_recent_2022,henzinger_unifying_2015}, to our knowledge, there is no such result for any dynamic core decomposition variant.

\subparagraph*{Densest subgraph.}
In the densest subgraph problem (\DSP), one must find a subgraph of maximum density, defined as half its average degree. This problem is closely related to the~$k$-core problem, as a~$k$-core has density at least~$\frac{k}{2}$. In fact, the~$k$-core is a~2-approximation for the densest subgraph, so the classic~$\Oh(m)$ algorithm for~$k$-core is an algorithm for the~2-approximate \DSP, though there exist polynomial algorithms for exact \DSP~\cite{goldberg_finding_1984}. In the dynamic setting,~\citet{henzinger_fine-grained_2022} obtained polynomial lower bounds on the running time per update for exact \DSP, while~\citet{sawlani_near-optimal_2020} showed an algorithm for the~$(1+\epsilon)$-approximate \DSP with polylogarithmic running time per update. Therefore, it is surprising that we show here that no similar~$(1+\epsilon)$-approximation for dynamic~$k$-core exists.

\section{Preliminaries} \label{sec:preliminaries}

We use~$G = (V(G), E(G))$ to denote a graph, and~$uv$ to denote an edge~$\{u, v\} \in E(G)$. The number of edges incident to a vertex is its~\deff{degree}. For directed graphs (digraphs), each vertex has both an~\deff{in-degree} and an~\deff{out-degree}, for incoming and outgoing arcs, respectively.
When the (di)graph is clear from the context, we use~$n$ and~$m$ for its number of vertices and edges (arcs), respectively. For complexity functions, we assume~$m = \Omega(n)$.

\subparagraph*{Core decomposition problems.}
A~\deff{\kcore} of a graph~$G$ is a maximal subgraph of~$G$ with minimum degree at least~$k$. The~\kcore is unique, otherwise the union of two~\kcores would be a larger~\kcore. 
The~\deff{core value} of a vertex~$u$, denoted as~$K_u$, is the largest~$k$ for which~$u$ is contained in the~\kcore. \Cref{fig:kcore} shows an example of the cores of a graph. The~\deff{core decomposition problem} is the task of determining~$K_u$ for each vertex~$u$, and it can be solved in~$\Oh(m)$ time~\cite{batagelj_m_2003} by repeatedly removing a lowest degree vertex.

\begin{figure}[h]
    \centering
    \begin{tikzpicture}[
    vertex/.style={circle, draw=black, fill=white, thick, inner sep=2pt, minimum size=7pt},
    core1/.style={fill=green!20, rounded corners=20pt},
    core2/.style={fill=blue!20, rounded corners=15pt},
    core3/.style={fill=red!20, rounded corners=12pt}
]

\node[vertex] (v1) at (0,0)   {H};
\node[vertex] (v2) at (2,0)   {G};
\node[vertex] (v3) at (0,2)   {F};
\node[vertex] (v4) at (2,2)   {E};

\node[vertex] (v5) at (4,0.5) {D};
\node[vertex] (v6) at (4,1.5) {C};

\node[vertex] (v7) at (5.5,1) {B};
\node[vertex] (v8) at (-1.5,1){A};

\pgfdeclarelayer{background}
\pgfsetlayers{background,main}

\begin{pgfonlayer}{background}
    \fill[core1] ($(v8.west)+(-0.4,0)$) 
              -- ($(v3.north)+(-0.2,0.4)$) 
              -- ($(v4.north)+(0.2,0.4)$) 
              -- ($(v6.north)+(0.4,0.4)$) 
              -- ($(v7.east)+(0.4,0)$) 
              -- ($(v5.south)+(0.4,-0.4)$) 
              -- ($(v2.south)+(0.2,-0.4)$)
              -- ($(v1.south)+(-0.2,-0.4)$)
              -- cycle;

    \fill[core2] ($(v1.south west)+(-0.3,-0.3)$) 
              -- ($(v2.south east)+(0.3,-0.3)$) 
              -- ($(v5.south east)+(0.3,-0.3)$) 
              -- ($(v6.north east)+(0.3,0.3)$) 
              -- ($(v4.north west)+(-0.3,0.3)$) 
              -- ($(v3.north west)+(-0.3,0.3)$) 
              -- cycle;

    \fill[core3] ($(v1.south west)+(-0.2,-0.2)$) 
              -- ($(v2.south east)+(0.2,-0.2)$) 
              -- ($(v4.north east)+(0.2,0.2)$) 
              -- ($(v3.north west)+(-0.2,0.2)$) 
              -- cycle;
\end{pgfonlayer}

\draw[thick] (v1) -- (v2);
\draw[thick] (v1) -- (v3);
\draw[thick] (v1) -- (v4);
\draw[thick] (v2) -- (v3);
\draw[thick] (v2) -- (v4);
\draw[thick] (v3) -- (v4);

\draw[thick] (v2) -- (v5);
\draw[thick] (v4) -- (v6);
\draw[thick] (v5) -- (v6);

\draw[thick] (v6) -- (v7);
\draw[thick] (v1) -- (v8);

\end{tikzpicture}
    \caption{A graph with its \textcolor{green!80!black}{1-core}, \textcolor{blue}{2-core}, and \textcolor{red}{3-core} highlighted.}
    \Description{Example graph with the 1-core, 2-core, and 3-core highlighted.}
    \label{fig:kcore}
\end{figure}

\subparagraph*{Dynamic graph problems.}
In a~\deff{dynamic graph problem}, we are given a graph subject to edge insertions/deletions, and we have to answer queries about the current state of the graph.
We are interested in~\deff{partially} and~\deff{fully dynamic}  variants of the problems. Partially dynamic variants might be~\deff{incremental} (insertions-only) or~\deff{decremental} (deletions-only), while fully dynamic variants allow both insertions and deletions.
The following is a dynamic version of the core decomposition problem:

\begin{problemStatement}{\CoreValue}
    Given a graph, process a series of edge insertions/deletions, and queries: Given a vertex~$u$, what is the value of~$K_u$?
\end{problemStatement}

The running time of a dynamic graph algorithm is~$\dynTime{p(n, m)}{u(n, m)}{q(n, m)}$ if~$p(n, m)$ is the time the algorithm takes preprocessing a graph on~$n$ vertices and~$m$ edges before receiving any updates or queries,~$u(n, m)$ is the time per update, and~$q(n, m)$ is the time per query.

We say the running time of a dynamic graph algorithm is amortized if the update and the query running times are amortized over all updates and queries. Specifically, the total running time over~$x$ updates and~$y$ queries is~${\Oh((x + m_0) u(n, m_0) + (y+m_0) q(n, m_0))}$, where~$m_0$ is the maximum number of edges in the graph at any time.\footnotemark{}
\footnotetext{Our definition of amortization is more lenient than some in the literature, which yields better lower bounds. See~\cite[Section~1.2]{henzinger_unifying_2015} for more details.}

\subparagraph*{Approximation variants.}
The goal of~\CoreValue is to answer queries on the core value of any vertex. If an efficient algorithm for that is hard to find or unlikely to exist, it is natural to consider algorithms that give approximations to the core values. 

Given $0 < \alpha_1 \leq 1$ and~$\alpha_2 \geq 1$, a value~$s$ is an~\deff{$(\alpha_1, \alpha_2)$-approximation} for some non-negative value~$c$ if~$\alpha_1 c \leq s \leq \alpha_2 c$. We also say~$s$ is an~$\frac{\alpha_2}{\alpha_1}$-approximation for~$c$.

\begin{lemma}[Gap reduction] \label{res:gap_reduction}
    If an unknown value~$c$ satisfies~$0 \leq c \leq X$ or $c \geq Y$, then we can use any~$\alpha$-approximation for~$c$, with~$\alpha < \frac{Y}{X}$, to determine whether $c \leq X$ or~$c \geq Y$.
\end{lemma}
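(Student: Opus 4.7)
The plan is to convert the multiplicative approximation into a threshold test on $s$. Write $s$ as an $(\alpha_1, \alpha_2)$-approximation for $c$ with $\alpha_2/\alpha_1 = \alpha$, so $\alpha_1 c \le s \le \alpha_2 c$ with $\alpha_1 \le 1 \le \alpha_2$. First I would bound $s$ in each of the two possible cases: if $c \le X$, then $s \le \alpha_2 c \le \alpha_2 X$; if $c \ge Y$, then $s \ge \alpha_1 c \ge \alpha_1 Y$.

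Next, since $\alpha_1 > 0$ and (implicitly) $X > 0$, the hypothesis $\alpha = \alpha_2/\alpha_1 < Y/X$ is equivalent to $\alpha_2 X < \alpha_1 Y$, so the two intervals $[0, \alpha_2 X]$ and $[\alpha_1 Y, +\infty)$ are disjoint. Therefore the decision procedure is to pick any threshold $T$ in the gap $(\alpha_2 X, \alpha_1 Y)$ (for instance $T := \alpha_2 X$, using $\le$ on one side and $>$ on the other) and report $c \le X$ when $s \le T$, and $c \ge Y$ otherwise; by the case analysis above, $s$ lands in the correct interval. The degenerate case $c = 0$ forces $s = 0$ and is handled by the first branch.

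The main obstacle here is essentially nonexistent; the argument is a one-line deduction from the definition of $\alpha$-approximation together with positivity of $\alpha_1$ and $X$. The statement is isolated as a lemma only because this ``gap'' pattern is invoked repeatedly by the reductions in later sections whenever an approximation algorithm needs to be shown to solve a decision version of \CoreValue.
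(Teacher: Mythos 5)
Your argument is correct: the case analysis $s\le\alpha_2 X$ versus $s\ge\alpha_1 Y$, together with $\alpha_2 X<\alpha_1 Y$ (from $\alpha_2/\alpha_1<Y/X$, $\alpha_1>0$, $X>0$), gives a clean threshold test, which is exactly the intended one-line justification; the paper in fact states this lemma without proof and uses it directly (e.g.\ in Proposition~\ref{result:mcv_to_2-e_kcore_approx}). The only implicit assumption you rely on, that the split $(\alpha_1,\alpha_2)$ realizing the ratio $\alpha$ is known so the threshold can be computed, is consistent with how the lemma is applied, since the approximation algorithm knows its own guarantee.
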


\section{Reductions from circuit value to core decomposition} \label{sec:mcv_to_3core}

One of the most basic and complete problems in \P is the circuit value problem, which is as hard as its bounded monotone version~\cite{greenlaw_limits_1995}. 
A \emph{bounded monotone Boolean circuit} is a directed acyclic graph (DAG) with a label on each vertex taken from the set~$\{0,1,\BAND,\BOR\}$ and a specified output vertex. We refer to vertices and arcs of the DAG as \deff{gates} and \deff{wires}, respectively. The output gate has out-degree~0, 0- and 1-gates have in-degree~0 and out-degree at most~1, and all other gates have in- and out-degree at most~2. See \Cref{fig:example_circuit} for an example of a monotone circuit.

\begin{figure}[h]
    \centering
    \documentclass[tikz, border=5mm]{standalone}
\usepackage[american]{circuitikz}
\usetikzlibrary{positioning}

\begin{document}
\begin{tikzpicture}[
    node distance=0.8cm and 1.8cm,
    gate/.style={draw, minimum size=0.8cm, no input leads, no output leads},
    inputnode/.style={draw, rectangle, minimum size=0.5cm},
    gateone/.style={very thick},
    gatezero/.style={},
    label/.style={font=\bfseries},
    zero_gate_color/.style={fill=gray!30},
    one_gate_color/.style={fill=green!30, line width=2pt},
    and_gate_color/.style={fill=orange!30},
    or_gate_color/.style={fill=cyan!30}
]

\node[inputnode, gatezero, label, zero_gate_color] (in1a) at (0, 1.38) {0};
\node[inputnode, gateone, label, one_gate_color] (in1b) at (0, 0.80) {1};

\node[inputnode, gatezero, label, zero_gate_color] (in2a) at (0, -0.42) {0};
\node[inputnode, gateone, label, one_gate_color] (in2b) at (0, -1) {1};

\node[inputnode, gateone, label, one_gate_color] (in3) at (6.5, -1.0) {1};

\node[american and port, gate, gatezero, and_gate_color] (and1) at (3, 1.1) {};
\node[american or port, gate, gateone, or_gate_color] (or1) at (3, -0.7) {};

\node[american or port, gate, gateone, or_gate_color] (or2) at (6, 0.2) {};

\node[american and port, gate, gateone, and_gate_color] (output) at (9, -0.08) {};

\draw (in1a.east) -- (and1.bin 1);
\draw (in1b.east) -- (and1.bin 2);
\draw (in2a.east) -- (or1.bin 1);
\draw (in2b.east) -- (or1.bin 2);

\draw (and1.bout) -- ++(0.5,0) |- (or2.bin 1);
\draw (or1.bout) -- ++(0.5,0) |- (or2.bin 2);

\draw (or2.bout) -- ++(0,0) |- (output.bin 1);
\draw (in3.east) -- ++(0.25,0) |- (output.bin 2);

\end{tikzpicture}
\end{document}
    \caption{A monotone circuit. \BAND-gates are in \textcolor{orange}{orange} and have a flat left side, while \BOR-gates are in \textcolor{blue}{blue} and have a curved left side. The gates in \textbf{bold} have value~1.}
    \Description{Example monotone circuit containing AND, OR, 0 and 1 gates, with wires connecting them.}
    \label{fig:example_circuit}
\end{figure}

The propagation of the values of the 0- and 1-gates through all the \BAND- and \BOR-gates results in a 0 or 1 value at each gate, and the value of the output gate is the \emph{circuit value}. \BAND- and \BOR-gates with in-degree less than 2 work as if their remaining inputs were~0, and gates with out-degree 2 output the same value on both wires. The \emph{size} of the circuit is its number of gates. 
The classic decision problem consists of determining whether the circuit value is 1. In the literature, it is referred to as the fanin 2, fanout 2 monotone circuit value problem~\cite{greenlaw_limits_1995}. Here we describe its dynamic version:

\begin{problemStatement}{Monotone circuit value (\MCV)}
  Given a bounded monotone Boolean circuit and a fixed output gate, process a series of wire insertions/deletions, and queries: Is the circuit value~1?
\end{problemStatement}

In this section, we show reductions from~\MCV to several variants of \kcore, which are used in the following sections to prove conditional and unconditional lower bounds for these variants via \MCV.

\subsection{Exact core decomposition}

We will now show how to solve \MCV using a simple decision variant of \CoreValue. Note all our dynamic reductions in this section preserve the type of time complexity. That is, if the original algorithm had worst-case (amortized) time complexity, the derived algorithm for~\MCV preserves worst-case (amortized, resp.) time complexity.

\begin{problemStatement}{\problem{$k$-Core}}
    Given a graph with a fixed vertex~$s^*$, process a series of edge insertions/ deletions, and queries: Is~$K_{s^*} \geq k$?
\end{problemStatement}

\begin{proposition} \label{result:mcv_to_3core}
  For any~$k \geq 3$, if fully (partially) dynamic \problem{$k$-Core} on a graph on $n$ vertices and $m$ edges can be solved in 
  $\dynTime{p(n, m)}{u(n, m)}{q(n, m)}$ time, then fully (partially) dynamic \MCV on a circuit of size~$N$ can be solved in $\dynOh{p(N', M')}{u(N',M')}{q(N',M')}$ time, with~$N'=\Oh(N+k)$ and~$M'=\Oh(Nk)$.
\end{proposition}
\begin{proof}
    Let us consider the case~$k = 3$. Given a circuit $C$ with output gate~$g^*$, we will describe a graph $G_C$ with a vertex $s^*$ such that $K_{s^*} \geq 3$ if and only if the circuit value of $C$ is~1.

    Each 0-gate in $C$ corresponds to a vertex in $G_C$, and each 1-gate in $C$ corresponds to a~$K_4$ in~$G_C$ 
    with a specified output vertex~$O$, as in Figures~\ref{fig:3core_0_cvp} and~\ref{fig:3core_1_cvp}. 
    Each \BOR- and \BAND-gate in $C$ corresponds to a graph as depicted in Figures~\ref{fig:3core_or_cvp} 
    and~\ref{fig:3core_and_cvp}, with their specified input and output vertices.

\begin{figure*}[h]
\centering
\captionsetup[subfigure]{justification=centering}
\begin{subfigure}{0.28\textwidth}
    \centering
    \begin{tikzpicture}
	\begin{pgfonlayer}{nodelayer}
		\node [style=node 0] (4) at (2, 0) {};
		\node [style=none] (6) at (2, 0.25) {$O$};
		\node [style=none] (9) at (3, 0) {};
		\node [style=none] (10) at (1, 0.5) {};
		\node [style=none] (11) at (1, -0.5) {};
	\end{pgfonlayer}
	\begin{pgfonlayer}{edgelayer}
		\draw [style=output edge] (4) to (9.center);
	\end{pgfonlayer}
\end{tikzpicture}
    \captionsetup{textformat=simple}
    \subcaption{$0$-gate}
    \Description{A single vertex with 1 output edge.}
    \label{fig:3core_0_cvp}
\end{subfigure}%
\begin{subfigure}{0.30\textwidth}
    \centering
    \begin{tikzpicture}
	\begin{pgfonlayer}{nodelayer}
		\node [style=node 0] (0) at (0, 0) {};
		\node [style=node 0] (1) at (1, 0.5) {};
		\node [style=node 0] (2) at (1, -0.5) {};
		\node [style=node 0] (4) at (2, 0) {};
		\node [style=none] (6) at (2, 0.25) {$O$};
		\node [style=none] (9) at (3, 0) {};
	\end{pgfonlayer}
	\begin{pgfonlayer}{edgelayer}
		\draw (0) to (1);
		\draw (0) to (2);
		\draw (1) to (4);
		\draw (2) to (4);
		\draw [style=output edge] (4) to (9.center);
		\draw (0) to (4);
		\draw (1) to (2);
	\end{pgfonlayer}
\end{tikzpicture}
    \captionsetup{textformat=simple}
    \subcaption{$1$-gate}
    \Description{A complete graph on four vertices with one output edge from one of the vertices.}
    \label{fig:3core_1_cvp}
\end{subfigure}%
\begin{subfigure}{0.40\textwidth}
    \centering
    \begin{tikzpicture}
	\begin{pgfonlayer}{nodelayer}
		\node [style=node 0] (0) at (0, 0) {};
		\node [style=node 0] (1) at (1, 0.5) {};
		\node [style=node 0] (2) at (1, -0.5) {};
		\node [style=node 0] (3) at (1, 0) {};
		\node [style=node 0] (4) at (2, 0) {};
		\node [style=node 0] (5) at (1, 0) {};
		\node [style=none] (6) at (2, 0.25) {$O$};
		\node [style=none] (7) at (-1, 0.25) {$I$};
		\node [style=hollow] (8) at (-1, 0) {};
		\node [style=none] (9) at (3, 0) {};
	\end{pgfonlayer}
	\begin{pgfonlayer}{edgelayer}
		\draw (0) to (1);
		\draw (0) to (2);
		\draw (1) to (5);
		\draw (5) to (2);
		\draw (5) to (4);
		\draw (1) to (4);
		\draw (2) to (4);
		\draw [style=input edge] (0) to (8);
		\draw [style=output edge] (4) to (9.center);
	\end{pgfonlayer}
\end{tikzpicture}
    \captionsetup{textformat=simple}
    \subcaption{``Arrow''}
    \Description{Graph with vertices A through E. Edges AB, AD, BC, CD, BE, CE and DE. Input edge to A and output edge from E.}
    \label{fig:3core_arrow_cvp}
\end{subfigure}%
\\ \vspace{3mm}
\begin{subfigure}{0.40\textwidth}
    \centering
    \begin{tikzpicture}
	\begin{pgfonlayer}{nodelayer}
		\node [style=node 0] (0) at (0, 0) {};
		\node [style=node 0] (1) at (1, 0.75) {};
		\node [style=node 0] (2) at (2, 0.25) {};
		\node [style=node 0] (3) at (2, 1.25) {};
		\node [style=node 0] (4) at (3, 0.75) {};
		\node [style=node 0] (5) at (2, 0.75) {};
		\node [style=node 0] (6) at (1, -0.75) {};
		\node [style=node 0] (7) at (2, -1.25) {};
		\node [style=node 0] (8) at (2, -0.25) {};
		\node [style=node 0] (9) at (3, -0.75) {};
		\node [style=node 0] (10) at (2, -0.75) {};
		\node [style=none] (11) at (3, 1) {$O_1$};
		\node [style=none] (12) at (3, -0.5) {$O_2$};
		\node [style=none] (13) at (-0.75, 1) {$I_1$};
		\node [style=hollow] (14) at (-0.75, 0.75) {};
		\node [style=hollow] (15) at (-0.75, -0.75) {};
		\node [style=none] (16) at (3.75, 0.75) {};
		\node [style=none] (19) at (3.75, -0.75) {};
		\node [style=none] (20) at (-0.75, -0.45) {$I_2$};
	\end{pgfonlayer}
	\begin{pgfonlayer}{edgelayer}
		\draw (0) to (1);
		\draw (1) to (3);
		\draw (1) to (2);
		\draw (5) to (3);
		\draw (2) to (5);
		\draw (3) to (4);
		\draw (4) to (2);
		\draw (5) to (4);
		\draw (6) to (8);
		\draw (6) to (7);
		\draw (10) to (8);
		\draw (7) to (10);
		\draw (8) to (9);
		\draw (9) to (7);
		\draw (10) to (9);
		\draw (0) to (6);
		\draw [style=input edge] (0) to (14);
		\draw [style=input edge] (0) to (15);
		\draw [style=output edge] (9) to (19.center);
		\draw [style=output edge] (4) to (16.center);
	\end{pgfonlayer}
\end{tikzpicture}
    \captionsetup{textformat=simple}
    \subcaption{\BOR-gate}
    \Description{Vertex with two input edges, connected to the vertices A' and A'' of two arrow graphs.}
    \label{fig:3core_or_cvp}
\end{subfigure}%
\begin{subfigure}{0.58\textwidth}
    \centering
    \begin{tikzpicture}
	\begin{pgfonlayer}{nodelayer}
		\node [style=node 0] (0) at (1, 0.75) {};
		\node [style=node 0] (1) at (2, 0.25) {};
		\node [style=node 0] (2) at (2, 0.75) {};
		\node [style=node 0] (3) at (2, 1.25) {};
		\node [style=node 0] (4) at (3, 0.75) {};
		\node [style=node 0] (5) at (2, 0.25) {};
		\node [style=node 0] (6) at (1, -0.75) {};
		\node [style=node 0] (7) at (2, -1.25) {};
		\node [style=node 0] (8) at (2, -0.75) {};
		\node [style=node 0] (9) at (2, -0.25) {};
		\node [style=node 0] (10) at (3, -0.75) {};
		\node [style=node 0] (11) at (2, -1.25) {};
		\node [style=node 0] (12) at (0, 0) {};
		\node [style=node 0] (13) at (-1, 0) {};
		\node [style=node 0] (14) at (-2, 0.75) {};
		\node [style=node 0] (15) at (-2, -0.75) {};
		\node [style=none] (16) at (-2.75, 1) {$I_1$};
		\node [style=none] (17) at (-2.75, -0.5) {$I_2$};
		\node [style=none] (18) at (3, 1) {$O_1$};
		\node [style=none] (19) at (3, -0.5) {$O_2$};
		\node [style=hollow] (20) at (-2.75, 0.75) {};
		\node [style=hollow] (21) at (-2.75, -0.75) {};
		\node [style=none] (22) at (3.75, 0.75) {};
		\node [style=none] (23) at (3.75, -0.75) {};
	\end{pgfonlayer}
	\begin{pgfonlayer}{edgelayer}
		\draw (0) to (1);
		\draw (2) to (1);
		\draw (2) to (3);
		\draw (0) to (3);
		\draw (3) to (4);
		\draw (2) to (4);
		\draw (5) to (4);
		\draw (6) to (7);
		\draw (8) to (7);
		\draw (8) to (9);
		\draw (6) to (9);
		\draw (9) to (10);
		\draw (8) to (10);
		\draw (11) to (10);
		\draw (14) to (15);
		\draw (15) to (13);
		\draw (14) to (13);
		\draw (13) to (12);
		\draw (12) to (0);
		\draw (12) to (6);
		\draw [style=input edge] (15) to (21);
		\draw [style=input edge] (14) to (20);
		\draw [style=output edge] (10) to (23.center);
		\draw [style=output edge] (4) to (22.center);
	\end{pgfonlayer}
\end{tikzpicture}
    \captionsetup{textformat=simple}
    \subcaption{\BAND-gate}
    \Description{Triangle ABC, with A and B having an input edge each, C connected to a vertex D which is connected to the vertices A' and A'' of two arrow graphs.}
    \label{fig:3core_and_cvp}
\end{subfigure}%
\caption{Building blocks of the reduction from \MCV to \threeCore. Input edges are \textcolor{red}{red} and output edges are \textcolor{blue}{blue}. Input vertices either do not exist or are output vertices of other gates.}
\label{fig:3core_cvp}
\end{figure*}

    Additionally, for each wire from a gate~$g$ 
    to a gate~$h$ in~$C$, identify an output vertex of the graph for~$g$ and an input vertex of the graph for~$h$, and let this wire correspond
    to the edge between these two vertices.
    This completes the description of the graph~$G_C$, and an example is shown in~\Cref{fig:example_3core_mcvp}. 
    Let $s^*$ be an output vertex of the graph for~$g^*$.
    The insertion or deletion of wires in~$C$ will be simulated by the insertion or deletion of the corresponding edge of~$G_C$,
    preserving the relation between~$C$ and $G_C$.

\begin{figure} 
    \centering
    \documentclass[tikz, border=10mm]{standalone}
\usetikzlibrary{backgrounds, fit}

\pgfdeclarelayer{background}
\pgfdeclarelayer{edgelayer}
\pgfdeclarelayer{nodelayer}
\pgfsetlayers{background,edgelayer,nodelayer,main}

\begin{document}
\begin{tikzpicture}[
    node 0/.style={circle, draw=black, fill=black, inner sep=1.5pt},
    hollow/.style={circle, draw=black, fill=white, inner sep=1.5pt},
    none/.style={},
    input edge/.style={dashed, red, thick},
    output edge/.style={dashed, blue, thick},
    region/.style={rounded corners, fill opacity=0.15, draw, inner sep=2pt},
	one_region/.style={line width=1.5pt},
    zero_gate_region/.style={region, fill=gray},
    one_gate_region/.style={region, fill=green},
    and_gate_region/.style={region, fill=orange},
    or_gate_region/.style={region, fill=cyan},
    3core/.style={},
    yscale=0.7,
    xscale=0.6,
]
	\begin{pgfonlayer}{nodelayer}
      \begin{scope}[xshift=-0.5cm]
		\node [style=node 0] (0) at (1, 1.25) {};
		\node [style=node 0] (1) at (2, 0.75) {};
		\node [style=node 0] (2) at (2, 1.25) {};
		\node [style=node 0] (3) at (2, 1.75) {};
		\node [style=node 0] (4) at (3, 1.25) {};
		\node [style=node 0] (5) at (2, 0.75) {}; 
		\node [style=node 0] (6) at (1, -0.25) {};
		\node [style=node 0] (7) at (2, -0.75) {};
		\node [style=node 0] (8) at (2, -0.25) {};
		\node [style=node 0] (9) at (2, 0.25) {};
		\node [style=node 0] (10) at (3, -0.25) {};
		\node [style=node 0] (11) at (2, -0.75) {}; 
		\node [style=node 0] (12) at (0, 0.5) {};
		\node [style=node 0] (13) at (-1, 0.5) {};
		\node [style=node 0] (14) at (-2, 1.25) {};
		\node [style=node 0] (15) at (-2, -0.25) {};
      \end{scope}

      \begin{scope}[xshift=0.5cm]
		\node [style=node 0, 3core] (24) at (10.5, -2.25) {};
		\node [style=node 0, 3core] (25) at (11.5, -2.75) {};
		\node [style=node 0, 3core] (26) at (11.5, -2.25) {};
		\node [style=node 0, 3core] (27) at (11.5, -1.75) {};
		\node [style=node 0, 3core] (28) at (12.5, -2.25) {};
		\node [style=node 0, 3core] (29) at (11.5, -2.75) {}; 
		\node [style=node 0, 3core] (30) at (10.5, -3.75) {};
		\node [style=node 0, 3core] (31) at (11.5, -4.25) {};
		\node [style=node 0, 3core] (32) at (11.5, -3.75) {};
		\node [style=node 0, 3core] (33) at (11.5, -3.25) {};
		\node [style=node 0, 3core] (34) at (12.5, -3.75) {};
		\node [style=node 0, 3core] (35) at (11.5, -4.25) {}; 
		\node [style=node 0, 3core] (36) at (9.5, -3) {};
        
		\node [style=node 0, 3core] (37) at (8.5, -3) {};
		\node [style=node 0, 3core] (38) at (7.5, -2) {};
		\node [style=node 0, 3core] (39) at (7.5, -4) {};
      \end{scope}
      
		\node [style=hollow, 3core] (44) at (6.75, -2) {};
		\node [style=hollow, 3core] (45) at (6.75, -4) {};
		
      \begin{scope}[xshift=-0.5cm]
		\node [style=node 0, 3core] (48) at (0, -3) {};
		\node [style=node 0, 3core] (49) at (1, -2.25) {};
		\node [style=node 0, 3core] (50) at (2, -2.75) {};
		\node [style=node 0, 3core] (51) at (2, -1.75) {};
		\node [style=node 0, 3core] (52) at (3, -2.25) {};
		\node [style=node 0, 3core] (53) at (2, -2.25) {};
		\node [style=node 0, 3core] (54) at (1, -3.75) {};
		\node [style=node 0, 3core] (55) at (2, -4.25) {};
		\node [style=node 0, 3core] (56) at (2, -3.25) {};
		\node [style=node 0, 3core] (57) at (3, -3.75) {};
		\node [style=node 0, 3core] (58) at (2, -3.75) {};
      \end{scope}
		
		\node [style=node 0, 3core] (67) at (3.75, -1.25) {};
		\node [style=node 0, 3core] (68) at (4.75, -0.5) {};
		\node [style=node 0, 3core] (69) at (5.75, -1) {};
		\node [style=node 0, 3core] (70) at (5.75, 0) {};
		\node [style=node 0, 3core] (71) at (6.75, -0.5) {};
		\node [style=node 0, 3core] (72) at (5.75, -0.5) {};
		\node [style=node 0, 3core] (73) at (4.75, -2) {};
		\node [style=node 0, 3core] (74) at (5.75, -2.5) {};
		\node [style=node 0, 3core] (75) at (5.75, -1.5) {};
		\node [style=node 0, 3core] (76) at (6.75, -2) {};
		\node [style=node 0, 3core] (77) at (5.75, -2) {};

      \begin{scope}[xshift=-1cm]
		\node [style=node 0, 3core] (86) at (-4.75, -0.25) {};
		\node [style=node 0, 3core] (87) at (-3.75, 0.25) {};
		\node [style=node 0, 3core] (88) at (-3.75, -0.75) {};
		\node [style=node 0, 3core] (89) at (-2.75, -0.25) {};
		
		\node [style=node 0, 3core] (92) at (-2.75, -3.75) {};
		\node [style=node 0, 3core] (93) at (-1.75, -3.25) {};
		\node [style=node 0, 3core] (94) at (-1.75, -4.25) {};
		\node [style=node 0, 3core] (95) at (-0.75, -3.75) {};
		
		\node [style=node 0] (104) at (-2.75, 1.25) {};
		
		\node [style=node 0] (107) at (-0.75, -2.25) {};
      \end{scope}
		
		\node [style=node 0, 3core] (98) at (4.75, -4) {};
		\node [style=node 0, 3core] (99) at (5.75, -3.5) {};
		\node [style=node 0, 3core] (100) at (5.75, -4.5) {};
		\node [style=node 0, 3core] (101) at (6.75, -4) {};
		
		\node [style=none] (108) at (13.5, -0.75) {Output};
	\end{pgfonlayer}
	
	\begin{pgfonlayer}{edgelayer}
		\draw (0) to (1); \draw (2) to (1); \draw (2) to (3); \draw (0) to (3);
		\draw (3) to (4); \draw (2) to (4); \draw (5) to (4); \draw (6) to (7);
		\draw (8) to (7); \draw (8) to (9); \draw (6) to (9); \draw (9) to (10);
		\draw (8) to (10); \draw (11) to (10); \draw (14) to (15); \draw (15) to (13);
		\draw (14) to (13); \draw (13) to (12); \draw (12) to (0); \draw (12) to (6);
		\draw (24) to (25); \draw (26) to (25); \draw (26) to (27); \draw (24) to (27);
		\draw (27) to (28); \draw (26) to (28); \draw (29) to (28); \draw (30) to (31);
		\draw (32) to (31); \draw (32) to (33); \draw (30) to (33); \draw (33) to (34);
		\draw (32) to (34); \draw (35) to (34); \draw (38) to (39); \draw (39) to (37);
		\draw (38) to (37); \draw (37) to (36); \draw (36) to (24); \draw (36) to (30);
		\draw [style=input edge] (39) to (45); \draw [style=input edge] (38) to (44);
		\draw (48) to (49); \draw (49) to (51); \draw (49) to (50); \draw (53) to (51);
		\draw (50) to (53); \draw (51) to (52); \draw (52) to (50); \draw (53) to (52);
		\draw (54) to (56); \draw (54) to (55); \draw (58) to (56); \draw (55) to (58);
		\draw (56) to (57); \draw (57) to (55); \draw (58) to (57); \draw (48) to (54);
		\draw (67) to (68); \draw (68) to (70); \draw (68) to (69); \draw (72) to (70);
		\draw (69) to (72); \draw (70) to (71); \draw (71) to (69); \draw (72) to (71);
		\draw (73) to (75); \draw (73) to (74); \draw (77) to (75); \draw (74) to (77);
		\draw (75) to (76); \draw (76) to (74); \draw (77) to (76); \draw (67) to (73);
		\draw (86) to (87); \draw (86) to (88); \draw (87) to (89); \draw (88) to (89);
		\draw (86) to (89); \draw (87) to (88); \draw (92) to (93); \draw (92) to (94);
		\draw (93) to (95); \draw (94) to (95); \draw (92) to (95); \draw (93) to (94);
		\draw (98) to (99); \draw (98) to (100); \draw (99) to (101); \draw (100) to (101);
		\draw (98) to (101); \draw (99) to (100);
		\draw [style=input edge] (15) to (89); \draw [style=input edge] (14) to (104);
		\draw [style=input edge] (67) to (10); \draw [style=input edge] (67) to (52);
		\draw [style=input edge] (48) to (107); \draw [style=input edge] (48) to (95);
		\draw [style=<-, shorten <= 2pt, shorten >= -5pt] (28) to (108);
	\end{pgfonlayer}

	\begin{pgfonlayer}{background}
        \node[zero_gate_region, fit=(104)] {};
        \node[zero_gate_region, fit=(107)] {};

        \node[one_gate_region, one_region, fit=(86)(87)(88)(89)] {};
        \node[one_gate_region, one_region, fit=(92)(93)(94)(95)] {};
        \node[one_gate_region, one_region, fit=(98)(99)(100)(101)] {};

        \node[and_gate_region, fit=(0)(1)(2)(3)(4)(6)(7)(8)(9)(10)(12)(13)(14)(15)] {};

        \node[or_gate_region, one_region, fit=(48)(49)(50)(51)(52)(53)(54)(55)(56)(57)(58)] {};

        \node[or_gate_region, one_region, fit=(67)(68)(69)(70)(71)(72)(73)(74)(75)(76)(77)] {};

        \node[and_gate_region, one_region, fit=(38)(39)(37)(24)(25)(26)(27)(28)(30)(31)(32)(33)(34)(36)] {};
    \end{pgfonlayer}
\end{tikzpicture}
\end{document}
    \caption{An example of the reduction described in~\Cref{result:mcv_to_3core} using the example circuit from~\Cref{fig:example_circuit}.}
    \Description{Example of the reduction from monotone circuit value to 3-core using the example circuit.}
    \label{fig:example_3core_mcvp}
\end{figure}

    Let us argue that, at any time, $K_{s^*} \geq 3$ if and only if the circuit value is~1.
    First note that all vertices in the graph for a 1-gate are in the 3-core of $G_C$,
    and no vertex corresponding to a 0-gate is in the 3-core, because each such vertex has maximum degree~1.
    Now note that the arrow graph shown in~\Cref{fig:3core_arrow_cvp}, which is used in both the \BAND- and \BOR-gates,
    has the property that its output vertex is in the 3-core of~$G_C$ if and only if its input vertex is in the 3-core of~$G_C$.
    This implies that the output vertices of the graph for an \BOR-gate are in the 3-core of~$G_C$ if and only if at least one of its two input 
    vertices is in the 3-core of~$G_C$. Similarly, the output vertices of the graph for an \BAND-gate are in the 3-core 
    of~$G_C$ if and only if both its input vertices are in the 3-core of~$G_C$. 
    The way the graphs for the gates of~$C$ are connected in~$G_C$ implies that, in the same way the 0- and 1-gate values propagate towards the output gate, the fact that all vertices from the 1-gate graphs are in the 3-core of~$G_C$ propagates the 3-core, reaching vertex $s^*$ if and only if the circuit value is~1.
    When~$k > 3$, the graph~$G_C$ contains~$k-3$ additional vertices connected to each other and to each vertex in~$G_C$. Note that~$K_{s^*} \geq k$ if and only if~$K_{s^*} \geq 3$ in the original graph.
        
    If $C$ has $N$ gates, then~$C$ has~$\Oh(N)$ wires since it is a bounded circuit, and~$G_C$ has~${N'=\Oh(N+k)}$ vertices and~${M'=\Oh(Nk)}$ edges. 
    The preprocessing of $C$ will be the construction of~$G_C$ plus its preprocessing, which will take~$p(N',M')$ time.
    Each wire insertion/deletion in~$C$ corresponds to a single edge insertion/deletion in~$G_C$, 
    thus each update will take~$u(N',M')$ time.
    Finally, each query in $C$ will be answered by a single query in $G_C$, which will take $q(N',M')$ time.
\end{proof}

Since our proof requires~$k \geq 3$, one might wonder whether a similar result holds for \problem{2-Core}.
In~\Cref{sec:2core_alg},
we show that this is not the case, since we present a~polylog time algorithm for~\problem{$2$-Core}, and in~\Cref{sec:all_lower} we prove that a polylog algorithm would be unlikely to exist if this result held for~\problem{$2$-Core}.

\subsection{Approximate core decomposition}
\Cref{result:mcv_to_3core,res:gap_reduction} imply
that finding any $(\frac32-\epsilon)$-approximation is as hard as~\MCV. In fact, using a more intricate reduction, we can solve~\MCV using even a~$(2-\epsilon)$-approximation. To show this, let us first define a simplified approximation version of~\CoreValue.

\begin{problemStatement}{\ApproxCore{$\alpha$}}
    Given a graph with a fixed vertex~$s^*$, process a series of edge insertions/deletions, and queries: Compute an~$\alpha$-approximation for~$K_{s^*}$.
\end{problemStatement}

\begin{proposition} \label{result:mcv_to_2-e_kcore_approx}
    For any~$\epsilon > 0$, if fully (partially) dynamic \ApproxCore{$(2-\epsilon)$} on a graph with~$n$ vertices and~$m$ edges can be solved in
      $\dynTime{p(n, m)}{u(n, m)}{q(n, m)}$ time, then fully (partially, resp.) dynamic \MCV on a circuit of size~$N$ can be solved in $\dynOh{p(N', M')}{\epsilon^{-2} u(N', M')}{q(N', M')}$ time, with~$N'=\Oh(N\epsilon^{-2})$ and~$M'=\Oh(N \epsilon^{-3})$.
\end{proposition}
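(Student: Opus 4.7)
The plan is to apply \Cref{res:gap_reduction} on a strengthened variant of the reduction from \Cref{result:mcv_to_3core}, engineered so that the ratio between the two possible values of $K_{s^*}$ reaches~$2$. Given such a construction, any $\alpha$-approximation with $\alpha = 2-\epsilon < 2$ can distinguish the yes and no cases and thus solve \MCV.

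To achieve the factor-$2$ gap I would pick a constant $k \geq 3$ and rebuild the gadgets of \Cref{result:mcv_to_3core} at core level~$2k$ instead of~$3$: the 1-gate becomes a $K_{2k+1}$ (whose vertices have core value exactly~$2k$), and the arrow, \BOR and \BAND gadgets are generalized analogously, using larger cliques with carefully pruned edges between the ``mediator'' vertices so that the output of each gadget lies in the $(2k)$-core iff the input configuration matches the value~$1$ of the underlying subcircuit. This part is a direct, if somewhat tedious, adaptation of the degree-count arguments in \Cref{result:mcv_to_3core}. In parallel, I would attach to every vertex~$v$ of the construction a private $K_{k+1}$ through~$v$, giving every vertex a baseline core value of exactly~$k$. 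Since each private vertex has degree~$k$, the private cliques drop out of the computation of any core of level~$\geq k+1$ (including the $(2k)$-core) before interfering with the main gadgets, so the yes-case analysis carries over unchanged and $K_{s^*} \geq 2k$.

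The crux of the argument is then the no case: one must show that when the $(2k)$-propagation is disabled, the cascading removals triggered by the missing input completely dismantle the main gadget around~$s^*$, so that $s^*$ keeps only its private $K_{k+1}$ and hence $K_{s^*} = k$. This is where the internal structure of the mediator cliques must be tuned: the gadgets must be rich enough to propagate in the yes case but ``fragile'' enough that without the input contribution their residue does not sustain any $(k+1)$-or-higher core containing~$s^*$. Verifying this for each gadget (arrow, \BOR, \BAND), by iteratively tracking which vertices survive the $(k{+}1)$-, $(k{+}2)$-, \dots\ core passes, is the main technical obstacle.

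Once the gap is established, \Cref{res:gap_reduction} with $Y = 2k$, $X = k$, and $\alpha = 2-\epsilon < 2 = Y/X$ converts any $(2-\epsilon)$-approximation of $K_{s^*}$ into the correct \MCV answer. Each gate turns into a gadget of $\Theta(k^2)$ vertices and edges, a constant for fixed~$k$, and each wire insertion or deletion corresponds to a single edge insertion or deletion in the constructed graph, so the preprocessing, update, and query bounds are preserved up to constant factors, yielding the claimed $\dynOh{p(N,N)}{u(N,N)}{q(N,N)}$ time.
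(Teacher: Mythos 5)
Your high-level strategy (gap reduction via a scaled-up version of the \threeCore gadgets) is the same as the paper's, but the proposal has two concrete problems. First, your accounting claim that ``each wire insertion or deletion corresponds to a single edge insertion or deletion'' is incompatible with the gap you are trying to create: deleting or inserting a single edge changes every core value by at most one, yet toggling a single wire can flip the circuit value, which in your construction must move $K_{s^*}$ between a value $\leq k$ and a value $\geq 2k$. Hence wires must be simulated by bundles of $\Omega(k)$ edges; the paper uses $k^2$ parallel edges per wire, distributed over $k$ input vertices of the receiving gadget, precisely so that removing a wire strips $k$ units of degree from each input vertex and lets the core collapse. (This is still $\Oh(1)$ updates per wire for fixed $k$, so the complexity claim survives, but the construction as you describe it cannot work.) Second, and more fundamentally, the heart of the proof is missing: you never specify the level-$2k$ arrow/\BOR/\BAND gadgets, and you explicitly defer the no-case analysis (showing that when propagation fails the residue around $s^*$ sustains no $(k{+}1)$-core) as ``the main technical obstacle.'' That verification is exactly what the proposition requires, so as written the proposal is a plan rather than a proof.

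It is also worth noting how the paper sidesteps the part you found hardest. It does not insist on a no-case value of exactly $k$ (your private $K_{k+1}$ attachments are not needed and do not help with the real difficulty); instead its gadgets --- $0$-gates as $k$ vertices with $k$ output edges each, $1$-gates as $K_{2k+1}$'s with $k^2$ output edges, \BAND-gates as chains of $2k$ groups of $k$ vertices, \BOR-gates similarly with an auxiliary $K_{2k+1}$ --- give $K_{s^*} \geq 2k$ in the yes case and only $K_{s^*} \leq k+1$ in the no case, where the $k+1$ bound follows from a short degree argument rather than a delicate collapse to the baseline. The ratio $\frac{2k}{k+1}$ tends to $2$, so for any $\epsilon>0$ one picks the constant $k$ large enough that $2-\epsilon < \frac{2k}{k+1}$ and applies \Cref{res:gap_reduction}. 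If you relax your target gap from exactly $2$ to $\frac{2k}{k+1}$ and switch to multi-edge wires, your outline essentially becomes the paper's proof, but the gadget definitions and the yes/no verification still have to be written out.
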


\begin{proof}
    Given a circuit~$C$ with output gate~$g^*$, for any~$k \geq 2$, we will describe a graph~$G_C$ with a vertex~$s^*$ such that~$K_{s^*} \geq 2k$ if the circuit value of~$C$ is~1, otherwise~$K_{s^*} \leq k+1$. Then, by Lemma~\ref{res:gap_reduction}, any~$\alpha$-approximation for~$K_{s^*}$ for $\alpha < \frac{2k}{k+1}$ can be used to determine the circuit value of~$C$. For any~$\epsilon > 0$, we can then choose a sufficiently large~$k$, in fact any~$k > \frac{2-\epsilon}{\epsilon} = \Oh(\epsilon^{-1})$, so that~$\alpha=2-\epsilon < \frac{2k}{k+1}$, and solve \MCV using an algorithm for the~\ApproxCore{$(2-\epsilon)$}.

    Each~$0$-gate in~$C$ corresponds to~$k$ vertices in~$G_C$ with~$k$ output edges each, and each~$1$-gate corresponds to a~$K_{2k+1}$, also with~$k^2$ output edges, as in~\Cref{fig:2eps_kcore_0_cvp,fig:2eps_kcore_1_cvp}. Each~\BAND-gate corresponds to a graph like~\Cref{fig:2eps_kcore_and_cvp}, with inputs~$I_1$ and~$I_2$, and outputs~$O_1$ and~$O_2$, and it may be generalized as a sequence of~$2k$ groups of~$k$ vertices, where vertices from adjacent groups are connected by an edge. It has~$2k^2$ output edges, one from each of its $2k^2$ vertices. Each~\BOR-gate corresponds to a graph like~\Cref{fig:2eps_kcore_or_cvp}, and may similarly be generalized as a sequence of~$2k$ groups of~$k$ vertices with~$2k^2$ output edges, plus a~$K_{2k+1}$ connected to each vertex in the last group with~$k$ edges.
    
\begin{figure*}[h]
\centering
\captionsetup[subfigure]{justification=centering}
\begin{subfigure}{0.28\textwidth}
    \centering
    \begin{tikzpicture}
	\begin{pgfonlayer}{nodelayer}
		\node [style=none] (0) at (0, 1) {};
		\node [style=none] (1) at (0, -1) {};
		\node [style=node 0] (12) at (1, -0.75) {};
		\node [style=node 0] (13) at (1, 0.75) {};
		\node [style=node 0] (14) at (1, 0) {};
		\node [style=none] (15) at (1.25, 1.25) {};
		\node [style=none] (16) at (1.25, -1.25) {};
		\node [style=none] (17) at (1.3, 0) {$O_1$};
		\node [style=node 0] (18) at (0, 0.5) {};
		\node [style=node 0] (19) at (0, 0) {};
		\node [style=node 0] (20) at (0, -0.5) {};
	\end{pgfonlayer}
	\begin{pgfonlayer}{edgelayer}
		\draw [bend right=90, looseness=1.25] (1.center) to (0.center);
		\draw [bend right=90, looseness=1.25] (0.center) to (1.center);
		\draw [bend right=90, looseness=0.40] (16.center) to (15.center);
		\draw [bend right=90, looseness=0.40] (15.center) to (16.center);
		\draw [style=output edge] (18) to (13);
		\draw [style=output edge] (18) to (14);
		\draw [style=output edge] (18) to (12);
		\draw [style=output edge] (19) to (13);
		\draw [style=output edge] (19) to (14);
		\draw [style=output edge] (19) to (12);
		\draw [style=output edge] (20) to (13);
		\draw [style=output edge] (20) to (14);
		\draw [style=output edge] (20) to (12);
	\end{pgfonlayer}
\end{tikzpicture}
    \captionsetup{textformat=simple}
    \subcaption{$0$-gate}
    \Description{Zero gate gadget for the approximate core reduction. Described in proof.}
    \label{fig:2eps_kcore_0_cvp}
\end{subfigure}%
\begin{subfigure}{0.7\textwidth}
    \centering
    \begin{tikzpicture}
	\begin{pgfonlayer}{nodelayer}
		\node [style=node 0] (0) at (-2, .75) {};
		\node [style=node 0] (1) at (-2, 0) {};
		\node [style=node 0] (2) at (-2, -.75) {};
		\node [style=node 0] (40) at (-2, 1.5) {};
		\node [style=node 0] (41) at (0, .75) {};
		\node [style=node 0] (42) at (0, 0) {};
		\node [style=node 0] (43) at (0, -.75) {};
		\node [style=node 0] (44) at (-1, .75) {};
		\node [style=node 0] (45) at (-1, 0) {};
		\node [style=node 0] (46) at (-1, -.75) {};
		\node [style=node 0] (47) at (-0.5, 1.5) {};
		\node [style=node 0] (51) at (1, .75) {};
		\node [style=node 0] (52) at (1, 0) {};
		\node [style=node 0] (53) at (1, -.75) {};
		\node [style=node 0] (55) at (-1.25, 1.5) {};
		\node [style=none] (58) at (-4.25, 1) {};
		\node [style=none] (59) at (-4.25, -1) {};
		\node [style=none] (60) at (-4.25, 0) {$I_1$};
		\node [style=none] (61) at (-4.25, -0.75) {};
		\node [style=none] (63) at (-4.25, -0.25) {};
		\node [style=none] (64) at (-4.25, -0.5) {};
		\node [style=none] (65) at (-4, -0.25) {};
		\node [style=none] (66) at (-4, 0) {};
		\node [style=none] (67) at (-4, 0.25) {};
		\node [style=none] (68) at (-4.25, 0.25) {};
		\node [style=none] (69) at (-4.25, 0.5) {};
		\node [style=none] (70) at (-4.25, 0.75) {};
		\node [style=node 0] (86) at (-3, -.75) {};
		\node [style=node 0] (87) at (-3, 0) {};
		\node [style=node 0] (88) at (-3, .75) {};
		\node [style=none] (92) at (3.25, 1) {};
		\node [style=none] (93) at (3.25, -1) {};
		\node [style=none] (94) at (3.25, 0) {$I_2$};
		\node [style=none] (95) at (3.25, -0.75) {};
		\node [style=none] (96) at (3.25, -0.25) {};
		\node [style=none] (97) at (3.25, -0.5) {};
		\node [style=none] (98) at (3, -0.25) {};
		\node [style=none] (99) at (3, 0) {};
		\node [style=none] (100) at (3, 0.25) {};
		\node [style=none] (101) at (3.25, 0.25) {};
		\node [style=none] (102) at (3.25, 0.5) {};
		\node [style=none] (103) at (3.25, 0.75) {};
		\node [style=node 0] (104) at (2, -.75) {};
		\node [style=node 0] (105) at (2, 0) {};
		\node [style=node 0] (106) at (2, .75) {};
		\node [style=none] (107) at (0.25, 1.75) {};
		\node [style=none] (108) at (-2.75, 1.75) {};
		\node [style=none] (109) at (-1.25, 1.75) {$O_1$};
		\node [style=node 0] (110) at (1, 1.5) {};
		\node [style=node 0] (111) at (2.5, 1.5) {};
		\node [style=node 0] (112) at (1.75, 1.5) {};
		\node [style=none] (113) at (3.25, 1.75) {};
		\node [style=none] (114) at (0.25, 1.75) {};
		\node [style=none] (115) at (1.75, 1.75) {$O_2$};
	\end{pgfonlayer}
	\begin{pgfonlayer}{edgelayer}
		\draw (44) to (41);
		\draw (44) to (42);
		\draw (44) to (43);
		\draw (45) to (41);
		\draw (45) to (42);
		\draw (45) to (43);
		\draw (46) to (41);
		\draw (46) to (43);
		\draw [style=output edge] (44) to (47);
		\draw [style=output edge] (45) to (47);
		\draw [style=output edge] (46) to (47);
		\draw (41) to (51);
		\draw (41) to (52);
		\draw (41) to (53);
		\draw (42) to (51);
		\draw (42) to (52);
		\draw (42) to (53);
		\draw (43) to (51);
		\draw (43) to (52);
		\draw (43) to (53);
		\draw (0) to (44);
		\draw (0) to (45);
		\draw (0) to (46);
		\draw (1) to (44);
		\draw (1) to (45);
		\draw (2) to (45);
		\draw (1) to (46);
		\draw (2) to (44);
		\draw (2) to (46);
		\draw [style=output edge] (0) to (55);
		\draw [style=output edge] (1) to (55);
		\draw [style=output edge] (2) to (55);
		\draw (59.center)
			 to [bend left=90, looseness=0.75] (58.center)
			 to [bend left=90, looseness=0.75] cycle;
		\draw [style=input edge] (61.center) to (86);
		\draw [style=input edge] (64.center) to (86);
		\draw [style=input edge] (63.center) to (86);
		\draw [style=input edge] (65.center) to (87);
		\draw [style=input edge] (66.center) to (87);
		\draw [style=input edge] (67.center) to (87);
		\draw [style=input edge] (68.center) to (88);
		\draw [style=input edge] (69.center) to (88);
		\draw [style=input edge] (70.center) to (88);
		\draw [bend right=90, looseness=0.75] (93.center) to (92.center);
		\draw [bend right=90, looseness=0.75] (92.center) to (93.center);
		\draw [style=input edge] (95.center) to (104);
		\draw [style=input edge] (97.center) to (104);
		\draw [style=input edge] (96.center) to (104);
		\draw [style=input edge] (98.center) to (105);
		\draw [style=input edge] (99.center) to (105);
		\draw [style=input edge] (100.center) to (105);
		\draw [style=input edge] (101.center) to (106);
		\draw [style=input edge] (102.center) to (106);
		\draw [style=input edge] (103.center) to (106);
		\draw (51) to (106);
		\draw (51) to (105);
		\draw (51) to (104);
		\draw (52) to (106);
		\draw (52) to (105);
		\draw (52) to (104);
		\draw (53) to (106);
		\draw (53) to (105);
		\draw (53) to (104);
		\draw (88) to (0);
		\draw (88) to (1);
		\draw (88) to (2);
		\draw (87) to (0);
		\draw (87) to (1);
		\draw (87) to (2);
		\draw (86) to (0);
		\draw (86) to (2);
		\draw [style=output edge] (88) to (40);
		\draw [style=output edge] (87) to (40);
		\draw [style=output edge] (86) to (40);
		\draw (108.center)
			 to [bend left=90, looseness=0.38] (107.center)
			 to [bend left=90, looseness=0.40] cycle;
		\draw (114.center)
			 to [bend left=90, looseness=0.38] (113.center)
			 to [bend left=90, looseness=0.40] cycle;
		\draw [style=output edge] (51) to (112);
		\draw [style=output edge] (52) to (112);
		\draw [style=output edge] (53) to (112);
		\draw [style=output edge] (41) to (110);
		\draw [style=output edge] (42) to (110);
		\draw [style=output edge] (43) to (110);
		\draw [style=output edge] (106) to (111);
		\draw [style=output edge] (105) to (111);
		\draw [style=output edge] (104) to (111);
		\draw (46) to (42);
		\draw (86) to (1);
	\end{pgfonlayer}
\end{tikzpicture}
    \captionsetup{textformat=simple}
    \subcaption{\BAND-gate}
    \Description{And gate gadget for the approximate core reduction. Described in proof.}
    \label{fig:2eps_kcore_and_cvp}
\end{subfigure}%
\\ \vspace{2mm}
\begin{subfigure}{0.28\textwidth}
    \centering
    \begin{tikzpicture}
	\begin{pgfonlayer}{nodelayer}
		\node [style=none] (3) at (0, 1) {};
		\node [style=none] (4) at (0, -1) {};
		\node [style=none] (5) at (0, 1.25) {$K_7$};
		\node [style=node 0] (12) at (0.25, 0.5) {};
		\node [style=node 0] (13) at (0.25, 0) {};
		\node [style=node 0] (14) at (0.25, -0.5) {};
		\node [style=node 0] (15) at (1, -0.75) {};
		\node [style=node 0] (16) at (1, 0.75) {};
		\node [style=node 0] (17) at (1, 0) {};
		\node [style=none] (18) at (1.25, 1.25) {};
		\node [style=none] (19) at (1.25, -1.25) {};
		\node [style=none] (20) at (1.3, 0) {$O_1$};
		\node [style=node 0] (21) at (-0.25, 0.75) {};
		\node [style=node 0] (22) at (-0.25, 0.25) {};
		\node [style=node 0] (23) at (-0.25, -0.25) {};
		\node [style=node 0] (24) at (-0.25, -0.75) {};
	\end{pgfonlayer}
	\begin{pgfonlayer}{edgelayer}
		\draw [bend right=90, looseness=1.25] (4.center) to (3.center);
		\draw [bend right=90, looseness=1.25] (3.center) to (4.center);
		\draw [bend right=90, looseness=0.40] (19.center) to (18.center);
		\draw [bend right=90, looseness=0.40] (18.center) to (19.center);
		\draw [style=output edge] (15) to (13);
		\draw [style=output edge] (15) to (12);
		\draw [style=output edge] (15) to (14);
		\draw [style=output edge] (16) to (12);
		\draw [style=output edge] (16) to (14);
		\draw [style=output edge] (16) to (13);
		\draw [style=output edge] (17) to (12);
		\draw [style=output edge] (17) to (13);
		\draw [style=output edge] (17) to (14);
	\end{pgfonlayer}
\end{tikzpicture}
    \captionsetup{textformat=simple}
    \subcaption{$1$-gate}
    \Description{One gate gadget for the approximate core reduction. Described in proof.}
    \label{fig:2eps_kcore_1_cvp}
\end{subfigure}%
\begin{subfigure}{0.7\textwidth}
    \centering
    \begin{tikzpicture}
	\begin{pgfonlayer}{nodelayer}
		\node [style=node 0] (6) at (1, 1.5) {};
		\node [style=node 0] (7) at (1, .75) {};
		\node [style=node 0] (8) at (1, 0) {};
		\node [style=node 0] (9) at (1, -.75) {};
		\node [style=node 0] (14) at (2, .75) {};
		\node [style=node 0] (15) at (2, 0) {};
		\node [style=node 0] (16) at (2, -.75) {};
		\node [style=node 0] (20) at (2.5, 1.5) {};
		\node [style=node 0] (21) at (3, .75) {};
		\node [style=node 0] (22) at (3, 0) {};
		\node [style=node 0] (23) at (3, -.75) {};
		\node [style=node 0] (28) at (4, .75) {};
		\node [style=node 0] (29) at (4, 0) {};
		\node [style=node 0] (30) at (4, -.75) {};
		\node [style=none] (48) at (-1.25, 1.5) {};
		\node [style=none] (49) at (-1.25, 0) {};
		\node [style=none] (50) at (-1.3, 1.2) {$I_1$};
		\node [style=none] (53) at (-1.5, 0.25) {};
		\node [style=none] (57) at (-1.25, 0) {};
		\node [style=none] (63) at (-1.5, 0.75) {};
		\node [style=none] (64) at (-1.5, 0.5) {};
		\node [style=none] (65) at (-1.25, 0.5) {};
		\node [style=none] (66) at (-1.25, 0.75) {};
		\node [style=none] (67) at (-1.25, 1) {};
		\node [style=none] (68) at (-1, 0.75) {};
		\node [style=none] (69) at (-1, 1) {};
		\node [style=none] (70) at (-1, 1.25) {};
		\node [style=none] (71) at (-1.25, 0) {};
		\node [style=none] (78) at (-1.25, -1.5) {};
		\node [style=none] (79) at (-1.25, 0) {};
		\node [style=none] (80) at (-1.3, -1.2) {$I_2$};
		\node [style=none] (81) at (-1.5, -0.25) {};
		\node [style=none] (82) at (-1.25, 0) {};
		\node [style=none] (83) at (-1.5, -0.75) {};
		\node [style=none] (84) at (-1.5, -0.5) {};
		\node [style=none] (85) at (-1.25, -0.5) {};
		\node [style=none] (86) at (-1.25, -0.75) {};
		\node [style=none] (87) at (-1.25, -1) {};
		\node [style=none] (88) at (-1, -0.75) {};
		\node [style=none] (89) at (-1, -1) {};
		\node [style=none] (90) at (-1, -1.25) {};
		\node [style=none] (91) at (-1.25, 0) {};
		\node [style=node 0] (92) at (0, -.75) {};
		\node [style=node 0] (93) at (0, 0) {};
		\node [style=node 0] (94) at (0, .75) {};
		\node [style=node 0] (113) at (1.75, 1.5) {};
		\node [style=none] (114) at (3.25, 1.75) {};
		\node [style=none] (115) at (0.25, 1.75) {};
		\node [style=none] (116) at (1.75, 1.75) {$O_1$};
		\node [style=node 0] (117) at (4, 1.5) {};
		\node [style=node 0] (118) at (5.5, 1.5) {};
		\node [style=node 0] (119) at (4.75, 1.5) {};
		\node [style=none] (120) at (6.25, 1.75) {};
		\node [style=none] (121) at (3.25, 1.75) {};
		\node [style=none] (122) at (4.75, 1.75) {$O_2$};
		\node [style=none] (123) at (6, 1) {};
		\node [style=none] (124) at (6, -1) {};
		\node [style=none] (125) at (6, 1.25) {$K_7$};
		\node [style=node 0] (126) at (5.75, 0.5) {};
		\node [style=node 0] (127) at (5.75, 0) {};
		\node [style=node 0] (128) at (5.75, -0.5) {};
		\node [style=node 0] (129) at (5, -.75) {};
		\node [style=node 0] (130) at (5, .75) {};
		\node [style=node 0] (131) at (5, 0) {};
		\node [style=node 0] (132) at (6.25, 0.75) {};
		\node [style=node 0] (133) at (6.25, 0.25) {};
		\node [style=node 0] (134) at (6.25, -0.25) {};
		\node [style=node 0] (135) at (6.25, -0.75) {};
	\end{pgfonlayer}
	\begin{pgfonlayer}{edgelayer}
		\draw [style=output edge] (14) to (20);
		\draw [style=output edge] (15) to (20);
		\draw [style=output edge] (16) to (20);
		\draw (7) to (14);
		\draw (7) to (15);
		\draw (7) to (16);
		\draw (8) to (14);
		\draw (8) to (15);
		\draw (8) to (16);
		\draw (9) to (14);
		\draw (9) to (15);
		\draw (9) to (16);
		\draw (14) to (21);
		\draw (14) to (22);
		\draw (14) to (23);
		\draw (15) to (21);
		\draw (15) to (22);
		\draw (15) to (23);
		\draw (16) to (21);
		\draw (16) to (22);
		\draw (16) to (23);
		\draw (21) to (28);
		\draw (21) to (29);
		\draw (21) to (30);
		\draw (22) to (28);
		\draw (22) to (29);
		\draw (22) to (30);
		\draw (23) to (28);
		\draw (23) to (29);
		\draw (23) to (30);
		\draw (49.center)
			 to [bend left=90] (48.center)
			 to [bend left=90] cycle;
		\draw (79.center)
			 to [bend right=90] (78.center)
			 to [bend right=90] cycle;
		\draw [style=input edge] (88.center) to (92);
		\draw [style=input edge] (89.center) to (92);
		\draw [style=input edge] (90.center) to (92);
		\draw [style=input edge] (85.center) to (93);
		\draw [style=input edge] (86.center) to (93);
		\draw [style=input edge] (87.center) to (93);
		\draw [style=input edge] (81.center) to (94);
		\draw [style=input edge] (84.center) to (94);
		\draw [style=input edge] (83.center) to (94);
		\draw [style=output edge] (92) to (6);
		\draw (92) to (7);
		\draw (92) to (8);
		\draw (92) to (9);
		\draw [style=input edge] (53.center) to (92);
		\draw [style=input edge] (64.center) to (92);
		\draw [style=input edge] (63.center) to (92);
		\draw [style=output edge] (93) to (6);
		\draw (93) to (7);
		\draw (93) to (8);
		\draw (93) to (9);
		\draw [style=input edge] (65.center) to (93);
		\draw [style=input edge] (66.center) to (93);
		\draw [style=input edge] (67.center) to (93);
		\draw [style=output edge] (94) to (6);
		\draw (94) to (7);
		\draw (94) to (8);
		\draw (94) to (9);
		\draw [style=input edge] (68.center) to (94);
		\draw [style=input edge] (69.center) to (94);
		\draw [style=input edge] (70.center) to (94);
		\draw (115.center)
			 to [bend left=90, looseness=0.38] (114.center)
			 to [bend left=90, looseness=0.40] cycle;
		\draw [bend left=90, looseness=0.38] (121.center) to (120.center);
		\draw [bend left=90, looseness=0.40] (120.center) to (121.center);
		\draw [style=output edge] (7) to (113);
		\draw [style=output edge] (8) to (113);
		\draw [style=output edge] (9) to (113);
		\draw [style=output edge] (21) to (117);
		\draw [style=output edge] (22) to (117);
		\draw [style=output edge] (23) to (117);
		\draw [style=output edge] (28) to (119);
		\draw [style=output edge] (29) to (119);
		\draw [style=output edge] (30) to (119);
		\draw [bend left=90, looseness=1.25] (124.center) to (123.center);
		\draw [bend left=90, looseness=1.25] (123.center) to (124.center);
		\draw (129) to (127);
		\draw (129) to (126);
		\draw (129) to (128);
		\draw (130) to (126);
		\draw (130) to (128);
		\draw (130) to (127);
		\draw (131) to (126);
		\draw (131) to (127);
		\draw (131) to (128);
		\draw (28) to (130);
		\draw (28) to (131);
		\draw (28) to (129);
		\draw (29) to (130);
		\draw (29) to (131);
		\draw (29) to (129);
		\draw (30) to (130);
		\draw (30) to (131);
		\draw (30) to (129);
		\draw [style=output edge] (130) to (118);
		\draw [style=output edge] (131) to (118);
		\draw [style=output edge] (129) to (118);
		\draw [style=output edge] (130) to (118);
		\draw [style=output edge] (131) to (118);
		\draw [style=output edge] (129) to (118);
	\end{pgfonlayer}
\end{tikzpicture}
    \captionsetup{textformat=simple}
    \subcaption{\BOR-gate}
    \Description{Or gate gadget for the approximate core reduction. Described in proof.}
    \label{fig:2eps_kcore_or_cvp}
\end{subfigure}%
\caption{Building blocks of the reduction from \MCV to \ApproxCore{$(2-\epsilon)$} for~$k=3$. Input edges are \textcolor{red}{red} and output edges are \textcolor{blue}{blue}.}
\label{fig:2eps_kcore_cvp}
\end{figure*}

    For each wire from a gate~$g$ to a gate~$h$ in~$C$, the~$k^2$ output edges from~$O_1$ or~$O_2$ of the graph for~$g$ are identified with the~$k^2$ input edges from either~$I_1$ or~$I_2$ of the graph for~$h$, that is, $O_1$ or $O_2$ from~$g$ will play the role of $I_1$ or $I_2$ for~$h$. This completes the description of~$G_C$, and an example is shown in~\Cref{fig:example_core_approx_mcvp}. Each vertex in the graph for a gate incident to some of its output edges is called an output vertex for the gate. Let~$s^*$ be an output vertex in the graph for~$g^*$. The insertion or deletion of wires in~$C$ will be simulated by the insertion or deletion of~$k^2$ edges in~$G_C$ (from the input to the output of the wire), preserving the relation between~$C$ and~$G_C$.

\begin{figure}[h] 
    \centering
    \documentclass[tikz, border=10mm]{standalone}
\usetikzlibrary{backgrounds, fit, graphs, graphs.standard}

\pgfdeclarelayer{background}
\pgfdeclarelayer{edgelayer}
\pgfdeclarelayer{nodelayer}
\pgfsetlayers{background,edgelayer,nodelayer,main}

\begin{document}
\begin{tikzpicture}[
    node 0/.style={circle, draw=black, fill=black, inner sep=1.5pt},
    hollow/.style={circle, draw=black, fill=white, inner sep=1.5pt},
    none/.style={},
    input edge/.style={dashed, red, thick},
    output edge/.style={dashed, blue, thick},
    region/.style={rounded corners, fill opacity=0.15, draw, inner sep=2pt},
	one_region/.style={line width=1.5pt},
    zero_gate_region/.style={region, fill=gray},
    one_gate_region/.style={region, fill=green},
    and_gate_region/.style={region, fill=orange},
    or_gate_region/.style={region, fill=cyan},
    3core/.style={},
    scale=0.7,
]
	\begin{pgfonlayer}{nodelayer}
		\node [style=node 0] (116) at (0, 1) {};
		\node [style=node 0] (117) at (0, 0.5) {};

        \begin{scope}[xshift=0cm,yshift=-2.5cm]
            \foreach \x in {0,1,2,3,4} {
        	   \node [style=node 0] (\x) at (-90+\x*72:.5) {};
            };
        \end{scope}

		\node [style=node 0] (123) at (0, -3.75) {};
		\node [style=node 0] (124) at (0, -4.25) {};
        
        \begin{scope}[xshift=0cm,yshift=-6cm]
            \foreach \l[count=\x] in {118,119,120,121,122} {
            	   \node [style=node 0] (\l) at (-162+\x*72:.5) {};
            };
        \end{scope}
        
		\node [style=node 0] (69) at (1.25, 0) {};
		\node [style=node 0] (70) at (1.25, -0.5) {};
		\node [style=node 0] (71) at (2.25, -0.5) {};
		\node [style=node 0] (72) at (2.25, 0) {};
		\node [style=node 0] (73) at (1.25, -1) {};
		\node [style=node 0] (74) at (1.25, -1.5) {};
		\node [style=node 0] (75) at (2.25, -1.5) {};
		\node [style=node 0] (76) at (2.25, -1) {};

        \begin{scope}[xshift=4.3cm,yshift=-5cm]
            \foreach \l[count=\x] in {30,31,32,33,34} {
            	   \node [style=node 0] (\l) at (72+\x*72:.5) {};
            };
        \end{scope}        
		\node [style=node 0] (35) at (1.75, -5.5) {};
		\node [style=node 0] (36) at (2.25, -5.5) {};
		\node [style=node 0] (37) at (2.25, -4.5) {};
		\node [style=node 0] (38) at (1.75, -4.5) {};
		\node [style=node 0] (41) at (2.75, -5.5) {};
		\node [style=node 0] (42) at (3.25, -5.5) {};
		\node [style=node 0] (43) at (3.25, -4.5) {};
		\node [style=node 0] (44) at (2.75, -4.5) {};

        \begin{scope}[xshift=0.5cm]
        \begin{scope}[xshift=5.8cm,yshift=-0.75cm]
            \foreach \l[count=\x] in {103,104,105,106,107} {
            	   \node [style=node 0] (\l) at (72+\x*72:.5) {};
            };
        \end{scope}        
		\node [style=node 0] (108) at (3.25, -1.25) {};
		\node [style=node 0] (109) at (3.75, -1.25) {};
		\node [style=node 0] (110) at (3.75, -0.25) {};
		\node [style=node 0] (111) at (3.25, -0.25) {};
		\node [style=node 0] (112) at (4.25, -1.25) {};
		\node [style=node 0] (113) at (4.75, -1.25) {};
		\node [style=node 0] (114) at (4.75, -0.25) {};
		\node [style=node 0] (115) at (4.25, -0.25) {};
        \end{scope}

        \begin{scope}[xshift=6.3cm,yshift=-5.2cm]
            \foreach \l[count=\x] in {125,126,127,128,129} {
            	   \node [style=node 0] (\l) at (-18+\x*72:.5) {};
            };
        \end{scope}

        \begin{scope}[xshift=0.55cm]
		\node [style=node 0] (61) at (5.25, -2.25) {};
		\node [style=node 0] (62) at (5.25, -2.75) {};
		\node [style=node 0] (63) at (6.25, -2.75) {};
		\node [style=node 0] (64) at (6.25, -2.25) {};
		\node [style=node 0] (65) at (5.25, -3.25) {};
		\node [style=node 0] (66) at (5.25, -3.75) {};
		\node [style=node 0] (67) at (6.25, -3.75) {};
		\node [style=node 0] (68) at (6.25, -3.25) {};
		\node [style=none] (130) at (7.5, -1.5) {\footnotesize Output};
        \end{scope}
	\end{pgfonlayer}
	\begin{pgfonlayer}{edgelayer}
		\draw (0) to (4);
		\draw (4) to (3);
		\draw (3) to (2);
		\draw (2) to (1);
		\draw (1) to (0);
		\draw (0) to (3);
		\draw (0) to (2);
		\draw (4) to (1);
		\draw (4) to (2);
		\draw (1) to (3);
		\draw (30) to (34);
		\draw (34) to (33);
		\draw (33) to (32);
		\draw (32) to (31);
		\draw (31) to (30);
		\draw (30) to (33);
		\draw (30) to (32);
		\draw (34) to (31);
		\draw (34) to (32);
		\draw (31) to (33);
		\draw (36) to (35);
		\draw (36) to (38);
		\draw (37) to (35);
		\draw (37) to (38);
		\draw (42) to (41);
		\draw (42) to (44);
		\draw (43) to (41);
		\draw (43) to (44);
		\draw (36) to (44);
		\draw (36) to (41);
		\draw (41) to (37);
		\draw (37) to (44);
		\draw (62) to (61);
		\draw (62) to (64);
		\draw (63) to (61);
		\draw (63) to (64);
		\draw (66) to (65);
		\draw (66) to (68);
		\draw (67) to (65);
		\draw (67) to (68);
		\draw (62) to (68);
		\draw (62) to (65);
		\draw (65) to (63);
		\draw (63) to (68);
		\draw (70) to (69);
		\draw (70) to (72);
		\draw (71) to (69);
		\draw (71) to (72);
		\draw (74) to (73);
		\draw (74) to (76);
		\draw (75) to (73);
		\draw (75) to (76);
		\draw (70) to (76);
		\draw (70) to (73);
		\draw (73) to (71);
		\draw (71) to (76);
		\draw (42) to (30);
		\draw (42) to (31);
		\draw (43) to (30);
		\draw (43) to (31);
		\draw (103) to (107);
		\draw (107) to (106);
		\draw (106) to (105);
		\draw (105) to (104);
		\draw (104) to (103);
		\draw (103) to (106);
		\draw (103) to (105);
		\draw (107) to (104);
		\draw (107) to (105);
		\draw (104) to (106);
		\draw (109) to (108);
		\draw (109) to (111);
		\draw (110) to (108);
		\draw (110) to (111);
		\draw (113) to (112);
		\draw (113) to (115);
		\draw (114) to (112);
		\draw (114) to (115);
		\draw (109) to (115);
		\draw (109) to (112);
		\draw (112) to (110);
		\draw (110) to (115);
		\draw (113) to (103);
		\draw (113) to (104);
		\draw (114) to (103);
		\draw (114) to (104);
		\draw [style=input edge] (117) to (69);
		\draw [style=input edge] (117) to (72);
		\draw [style=input edge] (116) to (69);
		\draw [style=input edge] (116) to (72);
		\draw [style=input edge] (1) to (74);
		\draw [style=input edge] (1) to (75);
		\draw [style=input edge] (2) to (74);
		\draw [style=input edge] (2) to (75);
		\draw (118) to (122);
		\draw (122) to (121);
		\draw (121) to (120);
		\draw (120) to (119);
		\draw (119) to (118);
		\draw (118) to (121);
		\draw (118) to (120);
		\draw (122) to (119);
		\draw (122) to (120);
		\draw (119) to (121);
		\draw [style=input edge] (124) to (35);
		\draw [style=input edge] (124) to (38);
		\draw [style=input edge] (123) to (35);
		\draw [style=input edge] (123) to (38);
		\draw [style=input edge] (119) to (35);
		\draw [style=input edge] (119) to (38);
		\draw [style=input edge] (120) to (35);
		\draw [style=input edge] (120) to (38);
		\draw [style=input edge] (71) to (111);
		\draw [style=input edge] (72) to (111);
		\draw [style=input edge] (76) to (108);
		\draw [style=input edge] (75) to (108);
		\draw [style=input edge] (38) to (111);
		\draw [style=input edge] (37) to (111);
		\draw [style=input edge] (44) to (108);
		\draw [style=input edge] (43) to (108);
		\draw (125) to (129);
		\draw (129) to (128);
		\draw (128) to (127);
		\draw (127) to (126);
		\draw (126) to (125);
		\draw (125) to (128);
		\draw (125) to (127);
		\draw (129) to (126);
		\draw (129) to (127);
		\draw (126) to (128);
		\draw [style=input edge] (108) to (61);
		\draw [style=input edge] (109) to (61);
		\draw [style=input edge] (112) to (64);
		\draw [style=input edge] (113) to (64);
		\draw [style=<-, shorten <=0pt, shorten >=-2pt] (63) to (130);
		\draw [style=input edge] (125) to (66);
		\draw [style=input edge] (125) to (67);
		\draw [style=input edge] (126) to (66);
		\draw [style=input edge] (126) to (67);
	\end{pgfonlayer}

    	\begin{pgfonlayer}{background}
        \node[zero_gate_region, fit=(116)(117)] {};
        \node[zero_gate_region, fit=(123)(124)] {};

        \node[one_gate_region, one_region, fit=(0)(1)(2)(3)(4)] {};
        \node[one_gate_region, one_region, fit=(118)(119)(120)(121)(122)] {};
        \node[one_gate_region, one_region, fit=(125)(126)(127)(128)(129)] {};

        \node[and_gate_region, fit=(69)(70)(71)(72)(73)(74)(75)(76)] {};

        \node[or_gate_region, one_region, fit=(30)(31)(32)(33)(34)(35)(36)(37)(38)(41)(42)(43)(44)] {};

        \node[or_gate_region, one_region, fit=(103)(104)(105)(106)(107)(108)(109)(110)(111)(112)(113)(114)(115)] {};

        \node[and_gate_region, one_region, fit=(61)(62)(63)(64)(65)(66)(67)(68)] {};
    \end{pgfonlayer}
\end{tikzpicture}
\end{document}
    \caption{An example for the reduction described in~\Cref{result:mcv_to_2-e_kcore_approx}, for~$k=2$, using the example circuit from~\Cref{fig:example_circuit}.}
    \Description{Example of the approximate core reduction using the example circuit.}
    \label{fig:example_core_approx_mcvp}
\end{figure}

    We argue that, at any time, if the circuit value of~$C$ is~1, then~${K_{s^*} \geq 2k}$, otherwise~${K_{s^*} \leq k+1}$. First note that all vertices in a 1-gate graph are in the~$2k$-core of~$G_C$, and no vertex in a 0-gate graph is in the~$(k+1)$-core (hence neither in the $(k+2)$-core) of~$G_C$, because they have maximum degree~$k$. The output vertices of the graph for an~\BAND-gate are in the~$2k$-core if the vertices in both~$I_1$ and~$I_2$ are in the~$2k$-core, and are not in the~$(k+2)$-core if no vertex in~$I_1$ or~no vertex in~$I_2$ is in the~$(k+2)$-core. An analogous argument holds for an~\BOR-gate (the~$K_{2k+1}$ on the right in~\Cref{fig:example_core_approx_mcvp} assures that only one between~$I_1$ and~$I_2$ is sufficient to push all vertices of the gate into the~$2k$-core). The fact that all vertices from the~1-gate graphs are in the~$2k$-core propagates through~$G_C$ reaching~$s^*$ if the circuit value is~1, otherwise it does not propagate and~$s^*$ is not in the~$(k+2)$-core. This proves the correctness of the reduction.

    If~$C$ has size~$N$, then~$G_C$ will have~$N'=\Oh(Nk^2)$ vertices and~$M'=\Oh(Nk^3)$ edges. Each update in~$C$ corresponds to~$\Oh(k^2)$ updates in~$G_C$ and each query in~$C$ corresponds to a single query in~$G_C$. So~\MCV can be solved in~$\dynOh{p(N',M')}{k^2 u(N',M')}{q(N',M')}$ time, and remember we can pick~$k=\Oh(\epsilon^{-1})$.
\end{proof}

The bounds on approximation ratios for efficient algorithms are not yet tight. Indeed, in the following sections we show that a polylogarithmic algorithm with ratio~${2-\epsilon}$ is unlikely to exist. \citet{sun_fully_2020} gave a polylogarithmic algorithm with ratio~${4+\epsilon}$, and it is still not known whether a polylogarithmic algorithm with ratio between~$2$ and~$4$ exists. \citet{sun_fully_2020} also gave a polylogarithmic algorithm for the incremental \ApproxCore{$(2+\epsilon)$}, so the gap for incremental algorithms is much smaller, but it is still unknown whether an efficient 2-approximation exists.

These bounds are similar to the ones by~\citet{anderson_p-complete_1984} for \NC approximation algorithms for core value on static graphs, who showed an \NC algorithm with ratio $2-\epsilon$ is unlikely to exist and an \NC algorithm with ratio~$2+\epsilon$ exists. This highlights the similarity between hardness to parallelize (\P-hardness under \NC reductions) and hardness to dynamize (\P-hardness under \IPL reductions), a topic that will be discussed further in~\Cref{sec:ipl_lower}.

\subsection{Truss decomposition} \label{sec:truss}

The notion of \kcores and core value can be generalized to edges~\cite{cohen_trusses_2008,saito_extracting_2006} instead of vertices. The~\deff{\ktruss} of a graph is its maximal subgraph in which each edge is in at least~$k-2$ triangles. This definition captures a stricter notion of connectedness, and identifies denser cohesive subgraphs than \kcores. For example, the \ktruss of a graph is contained in its $(k-1)$-core, but the opposite does not necessarily hold. \Cref{fig:truss_example} shows an example of the truss decomposition of a graph. There exists an~$\Oh(m^{1.5})$ algorithm~\cite{wang_truss_2012} for finding the truss decomposition of a graph. We can define truss value and a \problem{TrussValue} problem analogous to \CoreValue, and we can prove lower bounds on a weaker version of this problem as follows.

\begin{figure}[h]
    \centering
    \documentclass[tikz]{standalone}
\usepackage{tikz}
\usetikzlibrary{backgrounds, calc}

\begin{document}
\begin{tikzpicture}[
    vertex/.style={circle, draw=black, fill=white, thick, inner sep=1pt, minimum size=7pt},
    every edge/.style={draw, black, thick},
    truss4_region/.style={fill=red!20, rounded corners=12pt},
    truss3_region/.style={fill=blue!20, rounded corners=15pt},
    truss2_region/.style={fill=green!20, rounded corners=18pt},
    scale=0.7
]
    \node[vertex] (a) at (0,1.5) {A};
    \node[vertex] (b) at (2,1.5) {B};
    \node[vertex] (c) at (2,-0.5) {C};
    \node[vertex] (d) at (0,-0.5) {D};

    \node[vertex] (e) at (3.5,0.5) {E};
    \node[vertex] (f) at (1,-2) {F};
    
    \node[vertex] (g) at (3,-1.5) {G};

    \pgfdeclarelayer{background}
    \pgfsetlayers{background,main}

    \begin{pgfonlayer}{background}
        \fill[truss2_region] 
            ($(f.south west)+(-0.5,-0.5)$) -- 
            ($(d.south west)+(-0.5,-0.2)$) -- 
            ($(a.north west)+(-0.5,0.5)$) -- 
            ($(b.north east)+(0.5,0.5)$) -- 
            ($(e.north east)+(0.5,0.5)$) -- 
            ($(e.east)+(0.5,0)$) -- 
            ($(g.east)+(0.5,-0.3)$) -- 
            ($(g.south east)+(0.3,-0.5)$) -- 
            ($(f.south)+(0,-0.5)$) --
            cycle;

        \fill[truss3_region] 
            ($(f.south)+(-0.5,-0.4)$) -- 
            ($(d.south west)+(-0.4,-0.1)$) -- 
            ($(a.north west)+(-0.4,0.4)$) -- 
            ($(b.north east)+(0.4,0.4)$) -- 
            ($(e.north east)+(0.4,0.4)$) -- 
            ($(e.south east)+(0.4,-0.4)$) --
            ($(c.south east)+(0.4,-0.4)$) --
            ($(f.south)+(0.5,-0.4)$) --
            cycle;

        \fill[truss4_region] 
            ($(d.south west)+(-0.3,-0.3)$) rectangle ($(b.north east)+(0.3,0.3)$);
    \end{pgfonlayer}

    \draw (a) -- (b); \draw (a) -- (c); \draw (a) -- (d);
    \draw (b) -- (c); \draw (b) -- (d); \draw (c) -- (d);

    \draw (b) -- (e); \draw (c) -- (e);
    \draw (c) -- (f); \draw (d) -- (f);
    
    \draw (e) -- (g);
    \draw (g) -- (f);

\end{tikzpicture}
\end{document}
    \caption{A graph with its \textcolor{green!80!black}{2-truss}, \textcolor{blue}{3-truss}, and \textcolor{red}{4-truss} highlighted.}
    \Description{Example graph with the 2-truss, 3-truss, and 4-truss subgraphs highlighted.}
    \label{fig:truss_example}
\end{figure}
\begin{problemStatement}{\problem{$k$-Truss}}
    Given a graph~$G$ with a fixed edge~$e^*$, process a series of edge insertions/deletions, and queries: Is~$e^*$ in the~$k$-truss of~$G$?
\end{problemStatement}

\begin{proposition} \label{res:mcv_to_truss}
    For any~$k \geq 4$, if fully (partially) dynamic \problem{$k$-Truss} on a graph with~$n$ vertices and~$m$ edges can be solved in~$\dynTime{p(n,m)}{u(n,m)}{q(n,m)}$ time, then fully (partially) dynamic \MCV on a circuit of size~$N$ can be solved in~$\dynOh{p(N',M')}{u(N',M')}{q(N',M')}$ time, with~$N'=\Oh(N+k)$ and~$M'=\Oh(Nk)$.
\end{proposition}

\begin{proof}
    Let us consider the case~$k=4$. Given a circuit~$C$ with output gate~$g^*$, we will describe a graph~$G_C$ with edge~$e^*$ such that~$e^*$ is in the $4$-truss if and only if the circuit value of~$C$ is~1.

    Gates in~$C$ correspond to subgraphs in~$G_C$ as depicted in~\Cref{fig:truss_mcvp}, where each gate~$i$ has one or two output edges~$O_i$, and wires from a gate~$j$ to gate~$i$ are depicted as red dashed edges. Edge~$e^*$ will be an output edge of gate~$g^*$, and this concludes the description of~$G_C$, with an example shown in~\Cref{fig:example_truss_mcvp}. The insertion or deletion of wires in~$C$ will be simulated by the insertion or deletion of the corresponding edges in~$G_C$, preserving the relation between~$C$ and~$G_C$. 

\begin{figure*}[h]
\centering
\captionsetup[subfigure]{justification=centering}
\begin{subfigure}{0.32\textwidth}
    \centering
    \begin{tikzpicture}
	\begin{pgfonlayer}{nodelayer}
		\node [style=node 0] (17) at (-1, 0) {};
		\node [style=node 0] (32) at (1, 0) {};
		\node [style=node 0] (33) at (-1, 1) {};
		\node [style=node 0] (34) at (1, 1) {};
		\node [style=hollow] (35) at (-0.5, -0.5) {};
		\node [style=hollow] (36) at (0.5, -0.5) {};
	\end{pgfonlayer}
	\begin{pgfonlayer}{edgelayer}
		\draw (17) to (32);
		\draw [style=thick edge] (33) to (34);
		\draw (17) to (35);
		\draw (35) to (36);
		\draw (36) to (32);
		\draw (17) to (36);
		\draw (35) to (32);
	\end{pgfonlayer}
\end{tikzpicture}
    \subcaption{A double edge and below the subgraph it represents \label{fig:truss_mcvp:double}}
    \Description{Two vertices with a double edge between them. Below the subgraph it represents, a K4 from which we are using a single edge.}
\end{subfigure}%
\begin{subfigure}{0.25\textwidth}
    \centering
    \begin{tikzpicture}
	\begin{pgfonlayer}{nodelayer}
		\node [style=node 0] (0) at (0, 1) {};
		\node [style=node 0] (1) at (0, -1) {};
		\node [style=none] (2) at (0.5, 0) {$O_i$};
	\end{pgfonlayer}
	\begin{pgfonlayer}{edgelayer}
		\draw [style=output edge] (0) to (1);
	\end{pgfonlayer}
\end{tikzpicture}
    \subcaption{$\beta_i = 0$}
    \Description{Two vertices with a single edge between them.}
\end{subfigure}%
\begin{subfigure}{0.25\textwidth}
    \centering
    \begin{tikzpicture}
	\begin{pgfonlayer}{nodelayer}
		\node [style=node 0] (0) at (0, 1) {};
		\node [style=node 0] (1) at (0, -1) {};
		\node [style=none] (4) at (0.5, 0) {$O_i$};
	\end{pgfonlayer}
	\begin{pgfonlayer}{edgelayer}
		\draw [style=thick output edge] (0) to (1);
	\end{pgfonlayer}
\end{tikzpicture}
    \subcaption{$\beta_i = 1$}
    \Description{Two vertices with a double edge between them.}
\end{subfigure}%
\\ \vspace{3mm}
\begin{subfigure}{0.40\textwidth}
    \centering
    \begin{tikzpicture}
	\begin{pgfonlayer}{nodelayer}
		\node [style=node 0] (2) at (2, 2) {};
		\node [style=node 0] (4) at (0, 0) {};
		\node [style=node 0] (5) at (1, 1) {};
		\node [style=none] (6) at (0.7, 0.3) {$O_i$};
		\node [style=none] (7) at (-0.25, -0.65) {$O_i$};
		\node [style=node 0] (14) at (2, -1) {};
		\node [style=node 0] (20) at (-2, 0) {};
		\node [style=node 0] (21) at (-2, -1) {};
		\node [style=none] (22) at (-2.25, -0.5) {$O_k$};
		\node [style=node 0] (23) at (-1, -1) {};
		\node [style=node 0] (24) at (-1, 2) {};
		\node [style=node 0] (25) at (-2, 1) {};
		\node [style=node 0] (26) at (-2, 2) {};
		\node [style=none] (27) at (-2.25, 1.5) {$O_j$};
	\end{pgfonlayer}
	\begin{pgfonlayer}{edgelayer}
		\draw [style=output edge] (4) to (5);
		\draw (5) to (2);
		\draw (4) to (14);
		\draw (5) to (14);
		\draw [style=thick edge] (14) to (2);
		\draw [style=output edge] (20) to (21);
		\draw (23) to (24);
		\draw [style=input edge] (25) to (24);
		\draw [style=output edge] (25) to (26);
		\draw [style=thick input edge] (25) to (23);
		\draw [style=thick input edge] (26) to (24);
		\draw (24) to (23);
		\draw [style=input edge] (20) to (23);
		\draw [style=thick input edge] (20) to (24);
		\draw [style=thick input edge] (21) to (23);
		\draw [style=thick edge] (24) to (2);
		\draw (5) to (24);
		\draw (4) to (24);
		\draw [style=thick edge] (14) to (23);
		\draw [style=output edge] (23) to (4);
	\end{pgfonlayer}
\end{tikzpicture}
    \subcaption{$\beta_i = \vee(\beta_j, \beta_k)$} \label{fig:truss_mcvp_or}
    \Description{Graph with vertices A through J, double edges AB, BC, CD, single edges AD, AE, AF, EC, EF, output edges DE and EF, input double edges GA, HD, IA, JD, input single edges HA, ID, and the output edges from the previous gate are GH and IJ. }
\end{subfigure}%
\begin{subfigure}{0.58\textwidth}
    \centering
    \documentclass[tikz]{standalone}
\usepackage{common/tikzit}

\begin{document}
\begin{tikzpicture}
	\begin{pgfonlayer}{nodelayer}
		\node [style=none] (6) at (0.7, 0.3) {$O_i$};
		\node [style=none] (7) at (-0.25, -0.65) {$O_i$};
		\node [style=none] (10) at (-2.25, 0.5) {$O_j$};
		\node [style=none] (24) at (3.3, 0.5) {$O_k$};
		\node [style=node 0] (0) at (-1, -1) {};
		\node [style=node 0] (1) at (-1, 2) {};
		\node [style=node 0] (20) at (2, -1) {};
		\node [style=node 0] (21) at (2, 2) {};
		\node [style=node 0] (4) at (0, 0) {};
		\node [style=node 0] (5) at (1, 1) {};
		\node [style=node 0] (8) at (-2, 1) {};
		\node [style=node 0] (9) at (-2, 0) {};
		\node [style=node 0] (22) at (3, 0) {};
		\node [style=node 0] (23) at (3, 1) {};
	\end{pgfonlayer}
	\begin{pgfonlayer}{edgelayer}
		\draw [style=output edge] (0) to (4);
		\draw [style=output edge] (4) to (5);
		\draw (1) -- (0) node[midway,left,xshift=1pt] {\footnotesize $e_1$};
		\draw (5) -- (1) node[midway,above] {\footnotesize $e_3$};
		\draw (4) -- (1) node[midway, above,xshift=3pt] {\footnotesize $e_2$};
		\draw [style=thick input edge] (8) to (1);
		\draw [style=thick input edge] (9) to (0);
		\draw [style=output edge] (8) to (9);
		\draw [style=input edge] (8) to (0);
		\draw (20) -- (21);
		\draw [style=input edge] (22) to (21);
		\draw [style=output edge] (22) to (23);
		\draw [style=thick input edge] (22) to (20);
		\draw [style=thick input edge] (23) to (21);
		\draw (5) -- (21) node[midway,above,xshift=-1pt] {\footnotesize $e_4$};
		\draw [style=thick edge] (1) to (21);
		\draw (4) -- (20) node[midway,above] {\footnotesize $e_6$};
		\draw (5) -- (20) node[midway,above,xshift=3pt] {\footnotesize $e_5$};
		\draw [style=thick edge] (0) to (20);
	\end{pgfonlayer}
\end{tikzpicture}

\end{document}
    \subcaption{$\beta_i = \wedge(\beta_j, \beta_k)$} \label{fig:truss_mcvp_and}
    \Description{Graph with vertices A through J, double edges AB, CD, single edges AD, BC, AE, AF, EC, EF, output edges DE and EF, input double edges GA, HD, IB, JC, input single edges GD, JB, and the output edges from the previous gate are GH and IJ. }
\end{subfigure}%
\caption{Building blocks of the reduction from \MCV to \fourTruss,~$\beta_i$ is the value of gate~$i$. Dashed \textcolor{red}{red} edges represent the wires~$ji$ and~$ki$. Double edges are actually a~$K_4$ from which we are using a single edge, as in (a). With this reduction, each edge~$O_i$ is in the~$4$-truss if and only if the value of gate~$i$ is~$1$.} \label{fig:truss_mcvp}
\end{figure*}

\begin{figure}[h] 
    \centering
    \documentclass[tikz, border=10mm]{standalone}
\usetikzlibrary{backgrounds, fit}
\usepackage{common/tikzit}

\pgfdeclarelayer{background}
\pgfdeclarelayer{edgelayer}
\pgfdeclarelayer{nodelayer}
\pgfsetlayers{background,edgelayer,nodelayer,main}

\begin{document}
\begin{tikzpicture}[
    node 0/.style={circle, draw=black, fill=black, inner sep=1.5pt},
    hollow/.style={circle, draw=black, fill=white, inner sep=1.5pt},
    none/.style={},
    input edge/.append style={thick},
    thick edge/.append style={double=cyan!15},
    thick input edge/.append style={thick,double=red!15},
    region/.style={rounded corners, fill opacity=0.15, draw, inner sep=3pt},
	one_region/.style={line width=1.5pt},
    zero_gate_region/.style={region, fill=gray},
    one_gate_region/.style={region, fill=green},
    and_gate_region/.style={region, fill=orange},
    or_gate_region/.style={region, fill=cyan},
    3core/.style={},
    scale=0.65,
]
	\begin{pgfonlayer}{nodelayer}
		\node [style=node 0] (94) at (-2, 4.75) {};
		\node [style=node 0] (99) at (-2, 5.25) {};
		\node [style=node 0] (126) at (-2, 1.25) {};
		\node [style=node 0] (127) at (-2, 1.75) {};
		\node [style=node 0] (76) at (-2, -1.5) {};
		\node [style=node 0] (77) at (-2, -1) {};
		\node [style=node 0] (128) at (-2, -3.5) {};
		\node [style=node 0] (129) at (-2, -3) {};
        
		\node [style=node 0] (120) at (1.25, 2.25) {};
		\node [style=node 0] (121) at (1.25, 4.25) {};
		\node [style=node 0] (122) at (-0.25, 2.75) {};
		\node [style=node 0] (123) at (0.75, 3.75) {};
		\node [style=node 0] (124) at (-0.75, 2.25) {};
		\node [style=node 0] (125) at (-0.75, 4.25) {};
		
        \node [style=node 0] (46) at (-0.25, -2.75) {};
		\node [style=node 0] (47) at (0.75, -1.75) {};
		\node [style=node 0] (52) at (-0.75, -3.25) {};
		\node [style=node 0] (53) at (1.25, -3.25) {};
		\node [style=node 0] (56) at (1.25, -1.25) {};
		\node [style=node 0] (57) at (-0.75, -1.25) {};
        
		\node [style=node 0] (114) at (3.5, 0) {};
		\node [style=node 0] (115) at (4.5, 1) {};
		\node [style=node 0] (116) at (3, -0.5) {};
		\node [style=node 0] (117) at (5, -0.5) {};
		\node [style=node 0] (118) at (5, 1.5) {};
		\node [style=node 0] (119) at (3, 1.5) {};
        
		\node [style=node 0] (136) at (5, -3) {};
		\node [style=node 0] (137) at (5, -2.5) {};
        
		\node [style=node 0] (130) at (8.25, -2) {};
		\node [style=node 0] (131) at (8.25, 0) {};
		\node [style=node 0] (132) at (6.75, -1.5) {};
		\node [style=node 0] (133) at (7.75, -0.5) {};
		\node [style=node 0] (134) at (6.25, -2) {};
		\node [style=node 0] (135) at (6.25, 0) {};
        
		\node [style=none] (138) at (7.75, 1.5) {\footnotesize Output};
		\node [style=none] (139) at (7.25, -1) {};
	\end{pgfonlayer}
	\begin{pgfonlayer}{edgelayer}
		\draw (46) to (47);
		\draw (52) to (53);
		\draw (56) to (57);
		\draw (47) to (56);
		\draw (46) to (57);
		\draw [style=thick edge] (56) to (57);
		\draw (57) to (47);
		\draw (46) to (53);
		\draw (47) to (53);
		\draw [style=thick edge] (52) to (53);
		\draw (46) to (52);
		\draw [style=thick edge] (53) to (56);
		\draw (52) to (57);
		\draw (57) to (52);
		\draw (77) to (76);
		\draw (99) to (94);
		\draw (114) to (115);
		\draw (116) to (117);
		\draw (118) to (119);
		\draw (115) to (118);
		\draw (114) to (119);
		\draw [style=thick edge] (118) to (119);
		\draw (119) to (115);
		\draw (114) to (117);
		\draw (115) to (117);
		\draw [style=thick edge] (116) to (117);
		\draw (114) to (116);
		\draw [style=thick edge] (117) to (118);
		\draw (116) to (119);
		\draw (119) to (116);
		\draw [style=thick edge,double=orange!15] (120) to (121);
		\draw (122) to (123);
		\draw (123) to (121);
		\draw (122) to (120);
		\draw (123) to (120);
		\draw (124) to (125);
		\draw (122) to (125);
		\draw [style=thick edge,double=orange!15] (125) to (124);
		\draw (124) to (120);
		\draw (121) to (125);
		\draw (122) to (124);
		\draw (125) to (123);
		\draw [style=thick edge,double=green!15] (127) to (126);
		\draw [style=thick edge,double=green!15] (129) to (128);
		\draw [style=thick input edge] (94) to (125);
		\draw [style=thick input edge] (99) to (121);
		\draw [style=input edge] (99) to (125);
		\draw [style=thick input edge] (127) to (124);
		\draw [style=thick input edge] (126) to (120);
		\draw [style=input edge] (126) to (124);
		\draw [style=thick input edge] (77) to (57);
		\draw [style=thick input edge] (76) to (52);
		\draw [style=thick input edge] (129) to (57);
		\draw [style=thick input edge] (128) to (52);
		\draw [style=input edge] (76) to (57);
		\draw [style=input edge] (129) to (52);
		\draw [style=thick edge,double=orange!15] (130) to (131);
		\draw (132) to (133);
		\draw (133) to (131);
		\draw (132) to (130);
		\draw (133) to (130);
		\draw (134) to (135);
		\draw (132) to (135);
		\draw [style=thick edge,double=orange!15] (135) to (134);
		\draw (134) to (130);
		\draw (131) to (135);
		\draw (132) to (134);
		\draw (135) to (133);
		\draw [style=thick edge,double=green!15] (137) to (136);
		\draw [style=thick input edge] (137) to (134);
		\draw [style=thick input edge] (136) to (130);
		\draw [style=input edge] (136) to (134);
		\draw [style=thick input edge] (46) to (119);
		\draw [style=thick input edge] (47) to (116);
		\draw [style=input edge] (46) to (116);
		\draw [style=thick input edge] (122) to (116);
		\draw [style=thick input edge] (123) to (119);
		\draw [style=input edge] (122) to (119);
		\draw [style=thick input edge] (115) to (131);
		\draw [style=thick input edge] (114) to (135);
		\draw [style=input edge] (115) to (135);
		\draw [style=->, shorten <= -3pt, shorten >= 2pt] (138) to (139);
	\end{pgfonlayer}

    \begin{pgfonlayer}{background}
        \node[zero_gate_region, fit=(94)(99)] {};
        \node[zero_gate_region, fit=(76)(77)] {};

        \node[one_gate_region, one_region, fit=(126)(127)] {};
        \node[one_gate_region, one_region, fit=(128)(129)] {};
        \node[one_gate_region, one_region, fit=(136)(137)] {};

        \node[and_gate_region, fit=(120)(121)(122)(123)(124)(125)] {};

        \node[or_gate_region, one_region, fit=(46)(47)(52)(53)(56)(57)] {};

        \node[or_gate_region, one_region, fit=(114)(115)(116)(117)(118)(119)] {};

        \node[and_gate_region, one_region, fit=(130)(131)(132)(133)(134)(135)] {};
    \end{pgfonlayer}
\end{tikzpicture}

\end{document}
    \caption{An example of the reduction described in~\Cref{res:mcv_to_truss}, using double edges from~\Cref{fig:truss_mcvp:double} and the example circuit from~\Cref{fig:example_circuit}.}
    \Description{Example of the reduction from monotone circuit value to 4-truss using the example circuit.}
    \label{fig:example_truss_mcvp}
\end{figure}

    Let us argue that, at any time,~$e^*$ is in the~$4$-truss of~$G_C$ if and only if the circuit value of~$C$ is~1. First note that the output edge in the graph for a~1-gate is in the~$4$-truss of~$G_C$, and the output edge for a 0-gate is not in the~$4$-truss, because it is in at most one triangle even when the edges representing a wire are added. For the \BAND-gate in~\Cref{fig:truss_mcvp_and}, if~$O_j$ is not in the~$4$-truss, then neither is~$e_1$, and consequently neither are~$e_2, \ldots, e_6$, nor both output edges~$O_i$, even if each~$O_i$ has a wire connected to it. Conversely,~$O_i$ is always in the~$4$-truss when both~$O_j$ and~$O_k$ are in the~$4$-truss. The analogous properties hold for the \BOR-gate in~\Cref{fig:truss_mcvp_or}. This way, the~$4$-truss propagates from the output edges of the 1-gate graphs, reaching edge $e^*$ if and only if the circuit value of~$C$ is~1.
    
    When~$k > 4$, the graph~$G_C$ contains~$k-4$ additional vertices connected to each other and to all other vertices. Each original edge will be connected to $k-4$ other vertices, and so will have exactly~$k-4$ extra triangles. Additionally, the original graph has a~4-truss if and only if it forms a~$k$-truss using these additional vertices. Thus, the output edge~$e^*$ is in the~$k$-truss if and only if the circuit value of~$C$ is~1.

    If~$C$ has size~$N$, then~$G_C$ has~$N'=\Oh(N+k)$ vertices and~$M'=\Oh(Nk)$ edges. For any~$k \geq 4$, each update in~$C$ can be simulated with~$7=\Oh(1)$ updates in~$G_C$, and each query in~$C$ corresponds to a query in~$G_C$. Thus, \MCV on a circuit of size~$N$ can be solved in~$\dynOh{p(N',M')}{u(N',M')}{q(N',M')}$ time.
\end{proof}


\subsection{Directed core decomposition} \label{sec:directed_core}

\citet{giatsidis_d-cores_2011} extended the notion of \kcores to directed graphs. The~\klcore of a directed graph is its maximal subgraph in which each vertex has in-degree at least~$k$ and out-degree at least~$\ell$. Computing a fixed~\klcore on a directed graph on~$m$ arcs can also be done with an~$\Oh(m)$ algorithm, but vertices do not have a single analog to ``core value'', so the decomposition problem is harder, and less explored. Still, there are algorithms with non-trivial space-complexity to speed up listing all vertices in any~\klcore in time proportional to its size, after some preprocessing~\cite{fang_effective_2019,giatsidis_d-cores_2011}.

\begin{problemStatement}{\problem{$(k,\ell)$-Core}}
    Given a directed graph~$G$ with a fixed vertex~$s^*$, process a series of arc insertions/deletions, and queries: Is~$s^*$ in the~$(k,\ell)$-core of~$G$?
\end{problemStatement}

The~\problem{$(k,\ell)$-Core} problem is the directed version of the~\problem{$k$-Core} problem. It is easy to create an algorithm for~\problem{$k$-Core} from an algorithm for~\problem{$(k,k)$-Core}, by simply replacing each edge with two arcs, one in each direction.
We can thus prove that \MCV can be reduced to \problem{$(k,k)$-Core} for any~$k \geq 3$ like in~\Cref{result:mcv_to_3core}. However, we can strengthen this by taking advantage of the directed edges and reducing \MCV directly to~\problem{$(2,0)$-Core}, yielding the following.

\begin{proposition} \label{result:mcv_to_02core}
  For any~$k\geq 2$ and~$\ell \geq 0$, if fully (partially) dynamic \problem{$(k, \ell)$-Core} on a directed graph on $n$ vertices and $m$ arcs can be solved in 
  $\dynTime{p(n, m)}{u(n, m)}{q(n, m)}$ time, then fully (partially) dynamic \MCV on a circuit of size~$N$ can be solved in $\dynOh{p(N', M')}{u(N', M')}{q(N', M')}$ time, with~$N'=\Oh(N+k+\ell)$ and~$M'=\Oh(N(k+\ell))$.
\end{proposition}

\begin{proof}
    When~$k=2$ and~$\ell=0$ the proof is analogous to the proof of~\Cref{result:mcv_to_3core}, but using the building blocks from~\Cref{fig:02core_cvp}. As arcs are directed, the construction is simpler than in the undirected version.
    
    When~$k>2$ or~$\ell>0$, add a new directed clique of size~$\max(k-1,\ell+1)$, with~$k-2$ outgoing arcs to each original vertex and~$\ell$ incoming arcs from each original vertex. 
\end{proof}

\begin{figure*}[h]
\centering
\captionsetup[subfigure]{justification=centering}
\begin{subfigure}{0.18\textwidth}
    \centering
\begin{tikzpicture}
	\begin{pgfonlayer}{nodelayer}
		\node [style=node 0] (4) at (2, 0) {};
		\node [style=none] (6) at (2, 0.25) {$O$};
		\node [style=none] (9) at (3, 0) {};
		\node [style=none] (10) at (1, 0.5) {};
		\node [style=none] (11) at (1, -0.5) {};
	\end{pgfonlayer}
	\begin{pgfonlayer}{edgelayer}
		\draw [style=output arc] (4) to (9.center);
	\end{pgfonlayer}
\end{tikzpicture}
    \captionsetup{textformat=simple}
    \subcaption{$0$-gate}
    \Description{One vertex with an output arc.}
    \label{fig:02core_0_cvp}
\end{subfigure}%
\begin{subfigure}{0.26\textwidth}
    \centering
    \begin{tikzpicture}
	\begin{pgfonlayer}{nodelayer}
		\node [style=node 0] (1) at (1, 0.5) {};
		\node [style=node 0] (2) at (1, -0.5) {};
		\node [style=node 0] (4) at (2, 0) {};
		\node [style=none] (6) at (2.1, 0.25) {$O$};
		\node [style=none] (9) at (3, 0) {};
	\end{pgfonlayer}
	\begin{pgfonlayer}{edgelayer}
		\draw [style=output arc] (4) to (9.center);
		\draw [style=directed, bend left=15] (1) to (2);
		\draw [style=directed, bend left=15] (1) to (4);
		\draw [style=directed, bend left=15] (4) to (2);
		\draw [style=directed, bend left=15] (2) to (1);
		\draw [style=directed, bend left=15] (4) to (1);
		\draw [style=directed, bend left=15] (2) to (4);
	\end{pgfonlayer}
\end{tikzpicture}
    \captionsetup{textformat=simple}
    \subcaption{$1$-gate}
    \Description{A complete digraph on three vertices with one output arc from one of them.}
    \label{fig:02core_1_cvp}
\end{subfigure}%
\begin{subfigure}{0.26\textwidth}
    \centering
    \begin{tikzpicture}
	\begin{pgfonlayer}{nodelayer}
		\node [style=node 0] (0) at (0, 0.5) {};
		\node [style=none] (13) at (-1, 1.1) {$I_1$};
		\node [style=hollow] (14) at (-1, 0.75) {};
		\node [style=hollow] (15) at (-1, 0.25) {};
		\node [style=none] (20) at (-1, -0.1) {$I_2$};
		\node [style=none] (21) at (0, 0.85) {$O$};
		\node [style=node 0] (23) at (0.5, -0.5) {};
		\node [style=node 0] (24) at (-0.5, -0.5) {};
		\node [style=none] (25) at (1, 0.75) {};
		\node [style=none] (26) at (1, 0.25) {};
	\end{pgfonlayer}
	\begin{pgfonlayer}{edgelayer}
		\draw [style=input arc] (14) to (0);
		\draw [style=input arc] (15) to (0);
		\draw [style=directed, bend left=15] (24) to (0);
		\draw [style=output arc] (0) to (25.center);
		\draw [style=output arc] (0) to (26.center);
		\draw [style=directed, bend left=15] (24) to (23);
		\draw [style=directed, bend left=15] (0) to (24);
		\draw [style=directed] (0) to (23);
		\draw [style=directed, bend left=15] (23) to (24);
	\end{pgfonlayer}
\end{tikzpicture}
    \captionsetup{textformat=simple}
    \subcaption{\BOR-gate}
    \Description{Vertices A, B, C with all arcs except CA. Vertex A has two input and two output arcs.}
    \label{fig:02core_or_cvp}
\end{subfigure}%
\begin{subfigure}{0.26\textwidth}
    \centering
    \begin{tikzpicture}
	\begin{pgfonlayer}{nodelayer}
		\node [style=node 0] (0) at (0, 0) {};
		\node [style=none] (13) at (-1, 0.6) {$I_1$};
		\node [style=hollow] (14) at (-1, 0.25) {};
		\node [style=hollow] (15) at (-1, -0.25) {};
		\node [style=none] (16) at (1, 0.25) {};
		\node [style=none] (19) at (1, -0.25) {};
		\node [style=none] (20) at (-1, -0.6) {$I_2$};
		\node [style=none] (21) at (0, 0.35) {$O$};
	\end{pgfonlayer}
	\begin{pgfonlayer}{edgelayer}
		\draw [style=output arc] (0) to (16.center);
		\draw [style=output arc] (0) to (19.center);
		\draw [style=input arc] (14) to (0);
		\draw [style=input arc] (15) to (0);
	\end{pgfonlayer}
\end{tikzpicture}
    \captionsetup{textformat=simple}
    \subcaption{\BAND-gate}
    \Description{Single vertex with two input and two output arcs.}
    \label{fig:02core_and_cvp}
\end{subfigure}%
\caption{Building blocks of the reduction from \MCV to \problem{$(2,0)$-Core}.}
\label{fig:02core_cvp}
\end{figure*}

\subsection{Bipartite core decomposition} \label{sec:bipartite_core}

\citet{ahmed_visualisation_2007} extended the notion of \kcores to bipartite graphs. The~$(\alpha,\beta)$-\deff{bcore} of a bipartite graph ${G=(V = V_A \cupdot V_B,E)}$ is its maximal subgraph in which each vertex from~$V_A$ has degree at least~$\alpha$ and each vertex from~$V_B$ has degree at least~$\beta$. The~\problem{$(\alpha,\beta)$-BCore} problem is defined analogously to~\problem{$k$-Core}.

\begin{proposition} \label{res:3core_to_k2_bcore}
  For any~$\alpha \geq 3$ and~$\beta \geq 2$, if fully (partially) dynamic \problem{$(\alpha, \beta)$-BCore} on a bipartite graph on $n$ vertices and $m$ edges can be solved in 
  $\dynTime{p(n, m)}{u(n, m)}{q(n, m)}$ time, then fully (partially) dynamic \problem{$\alpha$-Core} on a graph on~$N$ vertices and~$M$ edges can be solved in $\dynOh{p(N', M')}{u(N', M')}{q(N', M')}$ time, with~$N'=\Oh(M+\alpha+\beta)$ and~$M'=\Oh((M+\alpha)\beta)$.
\end{proposition}

\begin{proof}
    When~$\beta = 2$, given a graph~$G = (V,E)$ on~$N$ vertices and~$M$ edges, we can build its~\deff{incidence graph} $G' = (V \cup E, E')$ with~$E' = \{\{u,e\} : u \in V, u \in e \in E\}$, which is a bipartite graph with~$N+M$ vertices and $2M$ edges. It is easy to see that a vertex~$u \in V$ is in the~$\alpha$-core of~$G$ if and only if~$u$ is in the~$(\alpha,2)$-bcore of~$G'$ (see~\Cref{fig:incidence_example}).

\begin{figure}[h]
    \centering
    \documentclass[border=1cm,tikz]{standalone}

\usetikzlibrary{positioning, shapes.geometric, calc, fit, backgrounds}

\pgfdeclarelayer{background}
\pgfsetlayers{background,main}

\begin{document}

\begin{tikzpicture}[
    vertex/.style={circle, draw=black, fill=white, thick, inner sep=1pt, minimum size=4pt},
    edgevertex/.style={circle, draw=black, fill=black, inner sep=0pt, minimum size=3pt},
    myedge/.style={thick, black},
    k4vertex/.style={vertex},
    k4edgevertex/.style={edgevertex},
    k4edge/.style={myedge},
    k4region/.style={fill=orange!30, draw=orange!50, thick, rounded corners=5pt},
    every label/.style={text height=1.5ex},
        every node/.append style={font=\footnotesize},
        scale=0.7,
    auto]

    \begin{scope}[local bounding box=GraphG]
        
        \node[vertex] (A) at (-2.5, 0) {A};
        
        \node[k4vertex] (B) at (-1, -1.5) {B};
        \node[k4vertex] (D) at (-1, 1.5) {D};
        \node[k4vertex] (C) at (1, -1.5) {C};
        \node[k4vertex] (E) at (1, 1.5) {E};
        
        \node[vertex] (F) at (2.5, 0) {F};

        \draw[myedge] (A) -- node[below left] {a} (B);
        \draw[myedge] (A) -- node[above left] {b} (D);
        \draw[myedge] (F) -- node[below right] {c} (C);
        \draw[myedge] (F) -- node[above right] {d} (E);
        
        \draw[k4edge] (B) -- node[below] {e} (C);
        \draw[k4edge] (C) -- node[right] {f} (E);
        \draw[k4edge] (E) -- node[above] {g} (D);
        \draw[k4edge] (D) -- node[left]  {h} (B);
        
        \draw[k4edge] (B) -- node[above left, pos=0.7] {i} (E);
        \draw[k4edge] (D) -- node[below left, pos=0.7] {j} (C);

        \begin{pgfonlayer}{background}
            \node[k4region, inner sep=5pt, fit=(B) (C) (D) (E)] {};
        \end{pgfonlayer}
    \end{scope}

    \begin{scope}[shift={(5,1.5)}, local bounding box=GraphIG]
        
        \node[vertex]   (vA) at (0, 0) {A};
        \node[k4vertex] (vB) at (1, 0) {B};
        \node[k4vertex] (vC) at (2, 0) {C};
        \node[k4vertex] (vD) at (3, 0) {D};
        \node[k4vertex] (vE) at (4, 0) {E};
        \node[vertex]   (vF) at (5, 0) {F};

        \node[edgevertex, label=below:a] (ea) at (0.0, -3) {};
        \node[edgevertex, label=below:b] (eb) at (0.6, -3) {};
        \node[edgevertex, label=below:c] (ec) at (1.2, -3) {};
        \node[edgevertex, label=below:d] (ed) at (1.8, -3) {};
        
        \node[k4edgevertex, label=below:e] (ee) at (2.4, -3) {};
        \node[k4edgevertex, label=below:f] (ef) at (3.0, -3) {};
        \node[k4edgevertex, label=below:g] (eg) at (3.6, -3) {};
        \node[k4edgevertex, label=below:h] (eh) at (4.2, -3) {};
        \node[k4edgevertex, label=below:i] (ei) at (4.8, -3) {};
        \node[k4edgevertex, label=below:j] (ej) at (5.4, -3) {};

        \draw[myedge] (vA) -- (ea); \draw[myedge] (vB) -- (ea);
        \draw[myedge] (vA) -- (eb); \draw[myedge] (vD) -- (eb);
        \draw[myedge] (vF) -- (ec); \draw[myedge] (vC) -- (ec);
        \draw[myedge] (vF) -- (ed); \draw[myedge] (vE) -- (ed);
        
        \draw[myedge] (vB) -- (ee); \draw[myedge] (vC) -- (ee);
        \draw[myedge] (vC) -- (ef); \draw[myedge] (vE) -- (ef);
        \draw[myedge] (vE) -- (eg); \draw[myedge] (vD) -- (eg);
        \draw[myedge] (vD) -- (eh); \draw[myedge] (vB) -- (eh);
        \draw[myedge] (vB) -- (ei); \draw[myedge] (vE) -- (ei);
        \draw[myedge] (vD) -- (ej); \draw[myedge] (vC) -- (ej);

        \begin{pgfonlayer}{background}
            \filldraw[k4region] 
                ($(vB.north west)+(-0.4, 0.2)$) -- 
                ($(vE.north east)+(0.1, 0.2)$) -- 
                ($(ej.south east)+(0.3, -0.7)$) -- 
                ($(ee.south west)+(-0.0, -0.7)$) -- 
                cycle;
        \end{pgfonlayer}
    \end{scope}

\end{tikzpicture}

\end{document}
    \caption{Example of a graph and its incidence graph. Its $3$-core, and the~$(3,2)$-bcore of the incidence graph are shaded.}
    \Description{Example graph and its incidence graph, with the 3-core and the 3 by 2 bcore highlighted.}
    \label{fig:incidence_example}
\end{figure}

    For any~$\beta > 2$, we can add a~$K_{\alpha,\beta}$ subgraph to~$G'$ and connect~$\beta-2$ vertices in the partition of size $\beta$ to each vertex in~$E$. This way, a vertex~$u \in V$ is in the~$\alpha$-core of~$G$ if and only if it is in the~$(\alpha,\beta)$-bcore of~$G'$, and~$G'$ has~$N'=\Oh(N+M+\alpha+\beta)$ vertices and~$M'=\Oh((M+\alpha)\beta)$. As we assumed that~$M=\Omega(N)$ for simplicity, $N'= \Oh(M+\alpha+\beta)$.

    Whenever an edge~$e=uv$ is inserted/removed from~$G$, we update~$G'$ by inserting/removing~$ue$ and~$ve$. This maintains the relation between~$G$ and~$G'$, and we derive an~$\dynOh{p(N',M')}{u(N',M')}{q(N',M')}$ time algorithm for~\problem{$\alpha$-Core}.    
\end{proof}

By combining \Cref{result:mcv_to_3core,res:3core_to_k2_bcore}, we obtain the following corollary.

\begin{corollary}
    For any~$\alpha \geq 3$ and~$\beta \geq 2$, if fully (partially) dynamic \problem{$(\alpha, \beta)$-BCore} on a bipartite graph on $n$ vertices and $m$ edges can be solved in 
    $\dynTime{p(n, m)}{u(n, m)}{q(n, m)}$ time, then fully (partially) dynamic \MCV on a circuit of size~$N$ can be solved in $\dynOh{p(N',M')}{u(N',M')}{q(N',M')}$ time, with~$N'=\Oh(N\alpha+\beta)$ and~$M'=\Oh(N\alpha\beta)$.
\end{corollary}

\subsection{Planar core decomposition} \label{sec:planar_core}

Now we consider the core decomposition problem in planar graphs. The reduction from~\Cref{result:mcv_to_3core} for $k=3$ preserves planarity when the original circuit is planar. However, no hardness results exist for the planar monotone circuit value problem,\footnote{In some models, such as in parallel computation, the problem can actually be solved efficiently~\cite{ramachandran_efficient_1996}.} so the results from~\Cref{sec:all_lower} do \emph{not} follow for dynamic planar~\kcore.

In \Cref{sec:no_crossing}, we prove that one of the most common techniques to prove hardness of planar problems cannot be used for the~\kcore problem. If this is indeed a hard problem, more intricate techniques will be needed to prove it.

 \section{Hardness results} \label{sec:all_lower}
We start by deriving several hardness results for \MCV using different strategies, and at the end of the section we apply these results for core decomposition variants as well.

\subsection{Computational complexity lower bounds} \label{sec:ipl_lower}
We will first explore hardness results adapted from classic computational complexity, which are similar to~$\NP$-completeness results and the~$\P \neq \NP$ conjecture.
They follow from an intriguing connection between dynamic and parallel problems which seems to have passed unnoticed in the recent years.

In 1994, \citet{miltersen_complexity_1994} defined the following complexity class for dynamic problems. 
A decision problem~$\pi$ is in \complexityClass{incr\text{-}POLYLOGTIME} (we use \IPL, for short) if, after some polynomial preprocessing time over instance~$I$ (a binary string), we can maintain~$\pi(I) \in \{0,1\}$, the answer to~$\pi$ on instance~$I$, given a sequence of 1-bit updates to~$I$ in $\text{polylog} (|I|)$ time per update.
A \deff{polylog incremental reduction} from a decision problem~$\pi_1$ to a decision problem~$\pi_2$ is a polynomial algorithm that maps each instance~$I_1$ of~$\pi_1$ to an instance~$I_2$ of~$\pi_2$ with ${\pi_1(I_1)=\pi_2(I_2)}$, and a polylogarithmic algorithm that maps a 1-bit update to~$I_1$ to a polylog amount of 1-bit updates to~$I_2$, such that~$\pi_1(I_1)=\pi_2(I_2)$ after the updates are applied to both instances.
A decision problem~$\pi$ is~\P-\deff{complete under \IPL reductions} if~$\pi$ is in~\P and there is a polylog incremental reduction from any problem in~\P to~$\pi$. This implies that if~$\pi$ is in \IPL then so is \emph{every} problem in~\P, which is highly unlikely (i.e., it is conjectured that~$\P \not\subseteq \IPL$). For example, the circuit value problem, which we use extensively, is~\P-complete under \IPL reductions~\cite{miltersen_complexity_1994}.

Our definitions of dynamic problems and dynamic reductions given in~\Cref{sec:mcv_to_3core} behave differently from the definitions in the previous paragraph, since we base ourselves in modern definitions of dynamic graph problems~\cite{hanauer_recent_2022}. However, the following result formalizes the intuitive notion that these definitions are somewhat equivalent.

\begin{theorem} \label{res:mcv_bound_ipl}
   There exists no algorithm for the fully dynamic \MCV on a circuit of size~$n$ with~$\dynTime{\poly(n)}{\polylog(n)}{\polylog(n)}$ worst-case time, unless~${\P \subseteq \IPL}$. 
\end{theorem}
\begin{proof}
    Suppose there is an algorithm~\A for \MCV with complexity~$\dynTime{\poly(n)}{\polylog(n)}{\polylog(n)}$. Take any problem~$\pi \in \P$ with initial instance~$I$ with~$|I| = n$. Using the standard reduction~\cite{greenlaw_limits_1995}, we simulate the Turing machine for~$\pi$ using circuits, that is, we obtain a bounded monotone Boolean circuit~$C^\pi_I$ such that
    \begin{enumerate}[label = (\arabic*)]
        \item the circuit value of~$C^\pi_I$ is 1 if and only if~$\pi(I) = 1$,
        \item the size of~$C^\pi_I$ is~$\Oh(\poly(n))$ since~$\pi \in \P$, and
        \item the only part of~$C^\pi_I$ which depends on~$I$ is, for each~$i \in [n]$, a~0-gate when~${I_i = 0}$ and a~1-gate when~${I_i = 1}$.
    \end{enumerate}

    For each~$i \in [n]$, we add the opposite gates to~$C^\pi_I$ so that there exists a 0-gate~$g^0_i$ and a~1-gate~$g^1_i$. Note that~$g^{1-I_i}_i$ is not connected to any other gates. This concludes the description of the circuit~$C^\pi_I$.

    Whenever we receive a~1-bit change to~$I$, that is,~$i \in [n]$ such that we want to flip~$I_i$, if there exists a (single) outgoing wire from~$g^{I_i}_i$ to some gate~$f$, we delete that wire and insert the wire from~$g^{1-I_i}_i$ to $f$ instead.
    After the circuit modifications, note that the new circuit is exactly the circuit~$C^\pi_{I'}$ for instance~$I'$ which is~$I$ with the~$i$-th bit flipped, so properties~\mbox{(1)--(3)} are maintained and~$\pi(I') = 1$ if and only if the circuit value of~$C^\pi_{I'}$ is~1, which can be determined with a single query.

    Building and preprocessing~$C^\pi_I$ using~\A takes~$\poly(|C^\pi_I|) = \poly(n)$ time, and each~1-bit change results in at most one deletion, one insertion, and one query, which takes~$\polylog(|C^\pi_I|) = \polylog(n)$ time, so~$\pi \in \IPL$.
\end{proof}

In this case, we say that \MCV is \deff{\P-hard under \IPL reductions}, meaning that solving it in~$\dynTime{\poly(n)}{\polylog(n)}{\polylog(n)}$ time would imply that all problems in~\P can be solved efficiently in the incremental setting. This is similar to the~$\P/\NP$ and the~$\NC/\P$ relation.

Surprisingly, all~\P-complete problems (under \NC reductions\footnote{A reduction is \NC if it takes polylog parallel time using a polynomial number of processors~\cite{greenlaw_limits_1995}.}), which are widely believed not to have efficient parallel algorithms, are also~\P-complete under~\IPL reductions, with the exception of artificially constructed problems which have extensive amount of internal duplication~\cite{miltersen_complexity_1994}. In 1995, Greenlaw, Hoover, and Ruzzo~\cite{greenlaw_limits_1995} provided an extensive list of~\P-complete problems. Recently, Couto has made available online a compendium of~\P-complete problems~\cite{yan_p_completeness_compendium_2026}, including the list from~\cite{greenlaw_limits_1995} but also newer results published after 1995. Among other things, these sources may be used to determine which problems are known to be~\P-hard under~\IPL reductions, that is, hard to solve in the dynamic setting.

\subsection{OMv-based lower bounds} \label{sec:undirected_omv_lower}

In the \OMv problem for ``online'' $N \x N$ Boolean matrix multiplication, one is given a matrix~$A$, and then the columns of a matrix~$B$ one at a time. Matrix~$A$ may be preprocessed, and the task is to output the multiplication of each column vector of~$B$ by~$A$ before receiving the next column. In Boolean matrix multiplication, \BAND plays the role of multiplication and \BOR plays the role of addition. The OMv conjecture essentially states that there is no algorithm faster by a polynomial factor than the trivial~$\Theta(N^3)$ one. 

\begin{conjecture}[OMv~\cite{henzinger_unifying_2015}] \label{conj:omv}
    For any~$\epsilon > 0$, there is no algorithm for \OMv with polynomial preprocessing time and~$\Oh(N^{3-\epsilon})$ computation time, with an error probability of at most~$\frac{1}{3}$.
\end{conjecture}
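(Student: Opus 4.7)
The final statement is Conjecture~\ref{conj:omv}, the OMv conjecture of \citet{henzinger_unifying_2015}, which is a foundational hypothesis in fine-grained complexity rather than a theorem with a known proof. The paper does not attempt to establish it; it is introduced here purely as an assumption that the reductions in Section~\ref{sec:undirected_omv_lower} will leverage to obtain conditional lower bounds for \CoreValue and its relatives. Consequently, my proposal is not to prove the conjecture but to explain why it is adopted and what shape a genuine proof would have to take.

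The writeup I would produce is short: cite Henzinger et al., note that the trivial $\Theta(N^3)$ algorithm for Boolean matrix--vector multiplication has resisted polynomial improvement despite decades of effort, and summarize the corroborating evidence. In particular, \OMv is known to be equivalent, up to polylog factors, to the \OuMv and \uMv variants, and it sits in a rich web of ``3SUM-hard'' and ``APSP-hard'' style conjectures that collectively organize much of dynamic and fine-grained complexity. Because falsifying any one of these would topple the rest, practitioners treat \OMv as a basic axiom; the paper follows that convention.

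If one actually attempted a proof, the plan would be to derive Conjecture~\ref{conj:omv} from a strictly weaker primitive, mirroring the direction of \citet{henzinger_unifying_2015} who used \OMv to prove hardness of many dynamic problems. Candidate primitives include a strong form of combinatorial BMM hardness or the nonexistence of truly subcubic triangle-detection algorithms with preprocessing, but no such reduction is known and in fact these are regarded as of comparable strength. An unconditional proof would be even more ambitious: ruling out every $O(N^{3-\epsilon})$ online algorithm, even with arbitrary polynomial preprocessing, would amount to a super-quadratic unconditional lower bound in a general model of computation, and we have no technique that can prove such a bound today.

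The main obstacle is therefore intrinsic to the statement: \OMv is a celebrated open problem, and any proof would constitute a major breakthrough in fine-grained complexity. The honest and only viable move for this paper is the one it makes---record Conjecture~\ref{conj:omv} with a citation, treat it as a black box, and concentrate the technical work on the reductions that translate \OMv-hardness into hardness of dynamic core and truss decomposition.
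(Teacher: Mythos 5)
You are right: this is a conjecture, not a theorem, and the paper simply states it with a citation to Henzinger et al.\ and uses it as an unproven hypothesis for conditional lower bounds, exactly as you describe. Your treatment matches the paper's.
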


We will work with a simpler variant of the same problem which makes reductions easier, but that is as hard as the original problem~\cite{henzinger_unifying_2015}.

\begin{problemStatement}{\OuMv}
    Given an~$N \times N$ Boolean matrix~$M$ and an online sequence of~$Q$ Boolean vectors pairs $(u^1, v^1),\allowbreak \ldots,\allowbreak (u^Q, v^Q)$, compute each~$(u^k)^\top Mv^k \in \{0,1\}$ before receiving~$(u^{k+1}, v^{k+1})$. Let~$p(N)$ be the preprocessing time before receiving the first vector pair and~$c(N, Q)$ be the computation time.
\end{problemStatement}

We say \OuMv has~\deff{size}~$N$ and~$Q$ queries. Notice that~$u^\top M v = 1$ if and only if there exist~$i, j \in [N]$ such that~$u_i = M_{ij} = v_j = 1$, and the pair~$(i, j)$ is called a~\deff{witness} in this case. On the left part of \Cref{fig:omv_to_mcv_example}, an instance of OuMv is shown with its corresponding witness.

\begin{figure}[h]
    \centering
    \documentclass[tikz, border=5mm]{standalone}
\usepackage[american]{circuitikz}
\usetikzlibrary{positioning, matrix, decorations.pathreplacing, calc}

\begin{document}

\begin{tikzpicture}[
    gate/.style={draw, minimum size=0.2cm, scale=0.8},
    inputnode/.style={draw, rectangle, scale=1, minimum size = 0.8cm, font=\Large},
    one_gate_color/.style={fill=green!30, no input leads, line width=2pt},
    or_gate_color/.style={fill=cyan!30, no input leads, no output leads},
    gateone/.style={very thick},
    label/.style={font=\bfseries},
    many/.style={loosely dotted, very thick, shorten >= 5pt, shorten <= 5pt},
    cell/.style={rectangle, draw=none, minimum size=0.4cm, font=\normalsize},
    highlight/.style={cell, fill=red!30},
    matrix_paren/.style={decorate, decoration={curveto, amplitude=10pt, raise=5pt}, thick},
    vector_paren/.style={decorate, decoration={curveto, amplitude=5pt, raise=3pt}, thick},
    every edge/.append style={in=180,out=0,looseness=0.6},
]

\begin{scope}[xshift=-5.25cm, yshift=-15pt]

    \matrix (M) [matrix of nodes, nodes={cell}, column sep=0.1cm, row sep=0.1cm]
    {
        1 & 1 & 0 & 1 \\
        0 & 0 & |[highlight]| 1 & 0 \\
        0 & 1 & 0 & 1 \\
        1 & 0 & 1 & 1 \\
    };
    \draw[matrix_paren, decoration={mirror}] (M-1-1.north west) -- (M-4-1.south west);
    \draw[matrix_paren] (M-1-4.north east) -- (M-4-4.south east);
    \matrix (u) [matrix of nodes, nodes={cell}, left=0.5cm of M, row sep=0.1cm]
    {
        0 \\
        |[highlight]|1 \\
        1 \\
        0 \\
    };
    \draw[vector_paren, decoration={mirror}] (u-1-1.north west) -- (u-4-1.south west);
    \draw[vector_paren] (u-1-1.north east) -- (u-4-1.south east);
    \matrix (v) [matrix of nodes, nodes={cell}, above=0.5cm of M, column sep=0.1cm]
    {
        1 & 0 & |[highlight]| 1 & 0 \\
    };
    \draw[vector_paren, yshift=1cm] (v-1-1.north west) -- (v-1-4.north east);
    \draw[vector_paren, decoration={mirror}] (v-1-1.south west) -- (v-1-4.south east);
    \node[font=\normalsize, above=0.1cm of u] {$u$};
    \node[font=\normalsize, left=0.1cm of v] {$v$};
    \node[font=\normalsize, above left=0.1cm and 0.1cm of M-1-1.north west] {$M$};
\end{scope}

\node[align=center] at (-2.25, -2.5) {\textbf{if $u_i = 1$}};
\node[align=center] at (0.5, -2.5) {\textbf{if $M_{ij} = 1$}};
\node[align=center] at (3, -2.5) {\textbf{if $v_j = 1$}};


\node[gate, inputnode, one_gate_color, label] (in1) at (-3, 0) {1};

\node[american or port, gate, or_gate_color, number inputs=3] (L1) at (-0.5, 1.5) {};
\node[american or port, gate, or_gate_color, number inputs=3, gateone] (L2) at (-0.5, 0.5) {};
\node[american or port, gate, or_gate_color, number inputs=3, gateone] (L3) at (-0.5, -0.5) {};
\node[american or port, gate, or_gate_color, number inputs=3] (L4) at (-0.5, -1.5) {};

\node[american or port, gate, or_gate_color, number inputs=2] (R1) at (2.5, 1.5) {};
\node[american or port, gate, or_gate_color, number inputs=2, gateone] (R2) at (2.5, 0.5) {};
\node[american or port, gate, or_gate_color, number inputs=2, gateone] (R3) at (2.5, -0.5) {};
\node[american or port, gate, or_gate_color, number inputs=3, gateone] (R4) at (2.5, -1.5) {};

\node[american or port, gate, or_gate_color, number inputs=2, gateone] (g_star) at (4.5, 0) {};

\node[above=2pt of L1.east] {$L_1$};
\node[above=2pt of L2.east] {$L_2$};
\node[above=2pt of L3.east] {$L_3$};
\node[above=2pt of L4.east] {$L_4$};
\node[above=2pt of R1.east] {$R_1$};
\node[above=2pt of R2.east] {$R_2$};
\node[above=2pt of R3.east] {$R_3$};
\node[above=2pt of R4.east] {$R_4$};
\node[right=-4pt of g_star] {$g^*$};

\path[very thick, red] ($(in1.south east) !2/3! (in1.north east)$) edge (L2.bin 2);
\path[very thick, red] (L2.bout) edge (R3.bin 1);
\path[very thick, red] (R3.bout) edge (g_star.bin 2);

\path ($(in1.south east) !1/3! (in1.north east)$) edge (L3.bin 2);
\path (L1.bout) edge (R1.bin 1);
\path (L1.bout) edge (R2.bin 1);
\path (L1.bout) edge (R4.bin 1);
\path (L3.bout) edge (R2.bin 2);
\path (L3.bout) edge (R4.bin 2);
\path (L4.bout) edge (R1.bin 2);
\path (L4.bout) edge (R3.bin 2);
\path (L4.bout) edge (R4.bin 3);
\path (R1.bout) edge (g_star.bin 1);

\end{tikzpicture}
\end{document}
    \caption{On the left, an example input to \OuMv, in \textcolor{red}{red} a witness that its answer is~1. On the right, the corresponding circuit~$C^{u,v}_M$ described in~\Cref{result:oumv_to_mcv}.}
    \Description{OuMv example and circuit construction. The left side shows a matrix and vectors with one witness pair, and the right side shows the corresponding circuit.}
    \label{fig:omv_to_mcv_example}
\end{figure}

\begin{proposition}[Theorem 2.7 in~\cite{henzinger_unifying_2015}] \label{prop:oumv}
   For any~$\epsilon > 0$, there is no algorithm for \OuMv of size $N$ and $Q$ queries with $p(N) = \poly(N)$ preprocessing time and $c(N, Q) = {\Oh(N^{2-\epsilon} Q + N^2 Q^{1-\epsilon})}$ computation time unless the \OMvConj is false.
\end{proposition}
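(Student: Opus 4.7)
The plan is to reduce \OMv to \OuMv so that an \OuMv algorithm matching the stated bound contradicts the \OMvConj. Suppose, for the sake of contradiction, that there is an \OuMv algorithm~$\mathcal A$ with polynomial preprocessing and computation time $c(N, Q) = \Oh(N^{2-\epsilon} Q + N^2 Q^{1-\epsilon})$ for some~$\epsilon > 0$. I would use~$\mathcal A$ as a subroutine for \OMv: given an \OMv matrix~$M$ of size~$N \times N$ and online columns~$v^1, \ldots, v^N$, first preprocess~$M$ via~$\mathcal A$'s preprocessing in $\poly(N)$ time.

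When each~$v^k$ arrives online, I would recover~$Mv^k$ coordinate by coordinate by issuing~$N$ \OuMv queries~$(e_i, v^k)$ for~$i \in [N]$, exploiting the identity~$(Mv^k)_i = e_i^\top M v^k$; the assembled vector is output before~$v^{k+1}$ arrives, as required by \OMv. Across all~$N$ columns, the reduction issues a total of~$Q = N^2$ \OuMv queries, so the overall \OMv cost is~$\poly(N) + c(N, N^2)$.

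The remaining step is to argue that this total is strictly sub-cubic in~$N$, contradicting the \OMvConj. The main obstacle is that naively substituting~$Q = N^2$ into the assumed bound yields only~$\Oh(N^{4-\epsilon})$, which is weaker than the~$\Oh(N^{3-\epsilon'})$ needed to refute OMv. Closing this gap---the technical heart of Henzinger et al.'s original proof---requires a more refined parameter choice: reducing \OMv of size~$n$ to \OuMv with a carefully chosen matrix size~$N$ and query count~$Q$ so that the second additive term~$N^2 Q^{1-\epsilon}$ contributes the dominant savings, for instance by partitioning the~$N^2$ output bits into batches answered in several \OuMv regimes (with possibly fresh preprocessings) where the bound is tighter than the per-bit cost~$\Oh(N^{2-\epsilon})$. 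Verifying that such a regime-splitting yields a truly sub-cubic \OMv algorithm would complete the contradiction, and I expect this parameter balancing to be the trickiest part of the argument.
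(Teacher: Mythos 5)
This proposition is not proved in the paper at all: it is imported verbatim as Theorem~2.7 of Henzinger et al.\ \cite{henzinger_unifying_2015}, so there is no internal argument to compare against, and the only question is whether your blind attempt would itself constitute a proof. It does not, and you essentially say so yourself. The concrete part of your reduction (keep the same $N \times N$ matrix, recover $Mv^k$ coordinatewise via basis-vector queries $(e_i, v^k)$, hence $Q = N^2$ queries in total) cannot be repaired by re-balancing $Q$ or by splitting the queries into batches with fresh preprocessings: as soon as you pay at least one \OuMv query per output bit of the \OMv instance against a full-size matrix, the first term of the assumed bound alone gives $\Oh(N^{2-\epsilon} \cdot N^2) = \Oh(N^{4-\epsilon})$, which is not subcubic for any $\epsilon \leq 1$; and since the statement quantifies over every fixed $\epsilon > 0$, a proof must derive a contradiction precisely in the regime of small $\epsilon$, where this term dominates. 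The cost is linear in $Q$ at rate $N^{2-\epsilon}$, so no ``regime-splitting'' of the same-size instance changes this accounting.

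The missing idea---which is the actual content of the cited proof---is to change the matrix dimensions rather than the query schedule. Henzinger et al.\ first show that the \OMvConj implies hardness of the rectangular versions of \OMv/\OuMv with three independent parameters $n_1, n_2, n_3$ (an $n_1 \times n_2$ matrix and $n_3$ vectors or vector pairs), by partitioning the $n \times n$ matrix into $n_1 \times n_2$ blocks and the vector sequence into groups, with the block dimensions chosen as a function of $\epsilon$; the lower bound obtained is essentially $n_1 n_2 n_3$ up to a polynomial saving in one parameter, which specializes to the $\Oh(N^{2-\epsilon} Q + N^2 Q^{1-\epsilon})$ form quoted here. Only because the single-bit queries $u^\top M v$ are issued against blocks whose row dimension is polynomially smaller than $n$ can the per-output-bit overhead of recovering a full vector from inner products be absorbed while keeping the total subcubic. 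Your sketch names this parameter balancing as ``the trickiest part'' but does not specify the blocking or verify the resulting running time, so the proposal stops exactly at the step that carries the proof; as it stands there is a genuine gap.
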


We will use the \OMvConj to prove lower bounds for \MCV.
We start by arguing that, in a bounded circuit, it is possible to simulate gates with unbounded in- or out-degree, which will make our reductions simpler to understand.
Note the \BAND and \BOR functions are commutative and associative, so their inputs may be combined in any order.
Thus, we can replace any gate~$g$ of fixed maximum in-degree~$d$ with a leaves-to-root balanced\footnotemark{} binary tree made of gates of the same type as~$g$ and with at most~$d$ gates. Then we connect input wires to the leaf gates and the value of the root gate will always have the same value as~$g$. Similarly, for unbounded out-degree, we use a root-to-leaves binary tree of~\BOR-gates, and connect output wires to the leaves, which will all have the same value as~$g$. See~\Cref{fig:unbounded_to_bounded_mcv}. \footnotetext{In fact, the tree does not need to be balanced for this to work, but it does keep the depth of the circuit logarithmic.}

\begin{figure}[ht]
\centering
    \centering
    \documentclass[tikz, border=5mm]{standalone}
\usepackage{../common/tikzit}
\usepackage{../content/tikzit-styles}
\usepackage[american]{circuitikz}
\usetikzlibrary{positioning}

\begin{document}
\begin{tikzpicture}[xscale=0.25,yscale=0.35,
    gate/.style={draw, no input leads, no output leads,scale=0.5},
    and_gate_color/.style={fill=orange!30},
    or_gate_color/.style={fill=cyan!30, number inputs=3},
	every edge/.append style={in=180,out=0,looseness=0.6},
]
	\begin{pgfonlayer}{nodelayer}
		\node [american and port, gate, and_gate_color] (and_L3_N1) at (0, 0) {};
		\node [american and port, gate, and_gate_color] (and_L2_N1) at (-4, 2) {};
		\node [american and port, gate, and_gate_color] (and_L2_N2) at (-4, -2) {};
		\node [american and port, gate, and_gate_color] (and_L1_N3) at (-8, -1) {};
		\node [american and port, gate, and_gate_color] (and_L1_N2) at (-8, 1) {};
		\node [american and port, gate, and_gate_color] (and_L1_N1) at (-8, 3) {};
		\node [style=none,above=10pt of and_L3_N1] (17) {$g$};
		\node [american or port, gate, or_gate_color] (or_L4_N1) at (4, 1.5) {};
		\node [american or port, gate, or_gate_color] (or_L4_N2) at (4, -1.5) {};
		\node [american or port, gate, or_gate_color] (or_L5_N1) at (8, 0) {};
		\node [american or port, gate, or_gate_color] (or_L5_N2) at (8, -3) {};
		\node [style=none, left=1.5 of and_L2_N2.bin 2] (input_L1_N4_B1) {};
		\node [style=none, left=0.5 of and_L1_N3.bin 1] (input_L1_N3_B1) {};
		\node [style=none, left=0.5 of and_L1_N2.bin 1] (input_L1_N2_B1) {};
		\node [style=none, left=0.5 of and_L1_N1.bin 1] (input_L1_N1_B1) {};
		\node [style=none, left=0.5 of and_L1_N1.bin 2] (input_L1_N1_B2) {};
		\node [style=none, left=0.5 of and_L1_N2.bin 2] (input_L1_N2_B2) {};
		\node [style=none, left=0.5 of and_L1_N3.bin 2] (input_L1_N3_B2) {};
		\node [style=none, above right=0.5 of or_L5_N1.bout] (output_L5_N1_O2) {};
		\node [style=none, above right=0.5 of or_L5_N2.bout] (output_L5_N2_O2) {};
		\node [style=none, below right=0.5 of or_L5_N1.bout] (output_L5_N1_O1) {};
		\node [style=none, below right=0.5 of or_L5_N2.bout] (output_L5_N2_O1) {};
		\node [style=none, right=0.5 of or_L4_N1.bout] (output_L4_N1_O1) {};
	\end{pgfonlayer}
	\begin{pgfonlayer}{edgelayer}
		\path (and_L2_N1.bout) edge (and_L3_N1.bin 1);
		\path (and_L2_N2.bout) edge (and_L3_N1.bin 2);

		\path (and_L1_N1.bout) edge (and_L2_N1.bin 1);
		\path (and_L1_N2.bout) edge (and_L2_N1.bin 2);

		\path (and_L1_N3.bout) edge (and_L2_N2.bin 1);

		\path (and_L3_N1.bout) edge (or_L4_N1.bin 2);
		\path (and_L3_N1.bout) edge (or_L4_N2.bin 2);

		\path (or_L4_N2.bout) edge (or_L5_N1.bin 2);
		\path (or_L4_N2.bout) edge (or_L5_N2.bin 2);

		\path [style=dashed] (input_L1_N1_B1.center) edge (and_L1_N1.bin 1);
		\path [style=dashed] (input_L1_N1_B2.center) edge (and_L1_N1.bin 2);

		\path [style=dashed] (input_L1_N2_B1.center) edge (and_L1_N2.bin 1);
		\path [style=dashed] (input_L1_N2_B2.center) edge (and_L1_N2.bin 2);

		\path [style=dashed] (input_L1_N3_B1.center) edge (and_L1_N3.bin 1);
		\path [style=dashed] (input_L1_N3_B2.center) edge (and_L1_N3.bin 2);

		\path [style=dashed] (input_L1_N4_B1.center) edge (and_L2_N2.bin 2);

		\path [style=dashed] (or_L4_N1.bout) edge (output_L4_N1_O1.center);
		\path [style=dashed] (or_L5_N1.bout) edge (output_L5_N1_O1.center);
		\path [style=dashed] (or_L5_N1.bout) edge (output_L5_N1_O2.center);
		\path [style=dashed] (or_L5_N2.bout) edge (output_L5_N2_O1.center);
		\path [style=dashed] (or_L5_N2.bout) edge (output_L5_N2_O2.center);
	\end{pgfonlayer}
\end{tikzpicture}
\end{document}
    \caption{Simulating an \BAND-gate with in-degree 7 and out-degree 5.}
    \Description{6 AND gates and 4 OR gates, with wires A1-A4, A2-A4, A3-A5, A4-A6, A5-A6, A6-O1, A6-O2, O2-O3, O2-O4. Unused AND inputs are input wires, and unused OR outputs are output wires.}
    \label{fig:unbounded_to_bounded_mcv}
\end{figure}%

\begin{proposition} \label{result:oumv_to_mcv}
    If fully (partially) dynamic \MCV on a circuit of size~$n$ can be solved in $\dynTime{p(n)}{u(n)}{q(n)}$ amortized (worst-case, resp.) time, then \OuMv of size~$N$ and~$Q$ queries can be solved in~$\Oh(p(N^2))$ preprocessing time and~${\Oh((NQ + N^2) u(N^2) + (Q+N^2) q(N^2))}$ computation time.
\end{proposition}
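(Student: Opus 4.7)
The plan is to construct, for each \OuMv instance with matrix $M$, a fixed circuit $C_M$ of size $\Oh(N^2)$ with designated output gate $g^*$, such that toggling $\Oh(N)$ wires per query reduces each \OuMv query $(u^k, v^k)$ to a single \MCV query on $C_M$ whose answer is exactly $(u^k)^\top M v^k$. The construction follows the sketch in Figure~\ref{fig:omv_to_mcv}. For each row index $i \in [N]$, introduce a ``row'' \BOR-gate $L_i$ together with a dedicated $1$-gate whose single out-wire may be inserted into $L_i$. For each column index $j \in [N]$, introduce a ``column'' \BOR-gate $R_j$ that collects the outputs of the $L_i$ with $M_{ij}=1$ through a balanced binary \BOR-tree; these ``$M$-wires'' are inserted once at preprocessing and never touched again. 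The output gate $g^*$ is a balanced binary \BOR-tree over the $R_j$'s, with $R_j$ connected to the $j$-th leaf of this tree. By construction, at any time $g^*$ evaluates to $1$ iff some pair $(i,j)$ satisfies $u_i = M_{ij} = v_j = 1$, i.e., iff $(u^k)^\top M v^k = 1$.

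The toggleable wires for query $k$ are exactly two groups of at most $N$ wires each: the row-toggles (from the $i$-th $1$-gate to $L_i$, present iff $u_i^k = 1$) and the column-toggles (from $R_j$ to its leaf in the output tree, present iff $v_j^k = 1$). To answer the first query, I would insert the at most $2N$ wires needed; to answer query $k>1$, I would update only the symmetric difference with query $k-1$, at most $4N$ \MCV updates, then issue one \MCV query on $g^*$. The circuit has $\Oh(N^2)$ gates and wires: $\Oh(N)$ per-row and per-column gates, at most $N^2$ $M$-wires, and $\Oh(N)$ internal gates per \BOR-tree. Thus preprocessing costs $\Oh(p(N^2))$, updates across all $Q$ queries total $\Oh(NQ)$ at cost $u(N^2)$ each, and queries total $Q$ at cost $q(N^2)$ each. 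Using the amortized accounting of the paper with $m_0 = \Oh(N^2)$ yields total time $\Oh((NQ + N^2)u(N^2) + (Q + N^2)q(N^2))$; in the worst-case setting, the additive $N^2$ terms can be dropped and the bound still holds.

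The main obstacle is respecting the bounded in- and out-degree constraints of \MCV circuits while keeping $|C_M| = \Oh(N^2)$ and the per-query update count at $\Oh(N)$. The remark preceding this proposition shows that a gate of unbounded degree $d$ can be simulated by a balanced binary tree of $\Oh(d)$ bounded-degree gates of the same type. Applying this to the $R_j$'s (which collect up to $N$ inputs), to $g^*$ (with up to $N$ inputs), and to the fanout of each $L_i$ (up to $N$ $R_j$'s), multiplies the sizes by only a constant factor. Care is needed, however, to place each toggleable wire at a leaf of its tree rather than at an internal node, so that a flip in a single coordinate of $u^k$ or $v^k$ costs exactly one \MCV wire flip and not $\Theta(\log N)$; this is ensured by designating, during preprocessing, a unique leaf in each tree for each toggleable wire.
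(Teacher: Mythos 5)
Your construction for the fully dynamic case is essentially the paper's: the same bipartite layer of \BOR-gates $L_1,\dots,L_N$ and $R_1,\dots,R_N$ encoding $M$ by fixed wires, $\Oh(N)$ toggleable wires per vector pair (from constant-$1$ sources into the $L_i$ and from the $R_j$ into $g^*$), one \MCV query per round, and balanced binary \BOR-trees to respect the bounded fan-in/fan-out while keeping the size $\Oh(N^2)$ and each toggle a single wire flip. Your symmetric-difference bookkeeping between consecutive queries is a harmless variant of the paper's ``insert $2N$ wires, query, delete them,'' and your accounting under the paper's amortization convention is correct.

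However, there is a genuine gap: the proposition also asserts the implication for \emph{partially} dynamic \MCV (incremental or decremental, with worst-case time bounds), and your reduction inherently needs both insertions and deletions in every round --- after the first query you must remove the previous round's toggles --- so it gives nothing when the assumed \MCV algorithm supports only one kind of update. The paper handles this with a rollback technique: for the incremental case, insert the $\Oh(N)$ toggle wires, perform the query, and then record and undo every change the algorithm made to its memory, restoring the data structures to their state for $C_M$; for the decremental case, start with \emph{all} toggle wires present, delete the ones corresponding to zero entries of $u$ and $v$, query, and roll back. This is also precisely why the statement demands worst-case (not amortized) bounds in the partially dynamic case: rewinding the algorithm's memory is incompatible with charging work against a long insertion-only sequence. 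Your proposal never addresses the partially dynamic half of the statement, so as written it proves only the fully dynamic claim; adding the rollback argument (and the all-wires-present initialization for the decremental case) would complete it.
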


\begin{proof}
Given an~$N \x N$ Boolean matrix~$M$, we will describe a circuit~$C_M$ with output gate~$g^*$ and, when given a vector pair~$(u,v)$, we will show how to modify~$C_M$ to build~$C^{u,v}_M$ so that the circuit value of~$C^{u,v}_M$ is 1 if and only if~$u^\top M v = 1$. 

Build $C_M$ with~a 1-gate~$\1$ and \BOR-gates~$L_1, \ldots,L_N, R_1, \ldots, R_N$, and~$g^*$. Add wire~$L_i R_j$ for each~$i, j \in [N]$ such that~$M_{ij}=1$. See~\Cref{fig:omv_to_mcv}.
To build~$C^{u,v}_M$ from~$C_M$, $u$, and~$v$, add wires~$\1 L_i$ for each~$i \in [N]$ with~$u_i = 1$ and wires~$R_j g^*$ for each~$j \in [N]$ with~$v_j = 1$. This concludes the description of~$C_M$ and~$C^{u,v}_M$, and an example is shown in~\Cref{fig:omv_to_mcv_example}.
Note that~$C_M$ and~$C^{u,v}_M$ can be simulated with~$\Oh(N^2)$ gates of bounded in- and out-degree, and~$C^{u,v}_M$ adds at most~$2N$ wires to~$C_M$.

\begin{figure}[h]
\centering
    \documentclass[tikz, border=5mm]{standalone}
\usepackage{../common/tikzit}
\usepackage{../content/tikzit-styles}
\usepackage[american]{circuitikz}
\usetikzlibrary{positioning,calc}
\usepackage{dsfont}

\newcommand{\1}{\mathds{1}}

\begin{document}
\begin{tikzpicture}[xscale=0.8,
    gate/.style={draw, no input leads, no output leads, scale=0.75},
    inputnode/.style={draw, rectangle},
    zero_gate_color/.style={fill=gray!30},
    one_gate_color/.style={fill=green!30},
    or_gate_color/.style={fill=cyan!30, number inputs=3},
    dashed-directed/.style={dashed},
	every edge/.append style={in=180,out=0,looseness=0.6},
]
	\begin{pgfonlayer}{nodelayer}
		\node [american or port, gate, or_gate_color] (or_L2_N1) at (-0.5, 1.25) {};
		\node [inputnode, one_gate_color] (input_1) at (-4, 0) {1};
		\node [american or port, gate, or_gate_color] (or_L2_N2) at (-0.5, 0) {};
		\node [american or port, gate, or_gate_color] (or_L2_N3) at (-0.5, -2) {};
		\node [american or port, gate, or_gate_color] (or_L3_N1) at (3, 1.25) {};
		\node [american or port, gate, or_gate_color] (or_L3_N2) at (3, -0.5) {};
		\node [american or port, gate, or_gate_color] (or_L3_N3) at (3, -2) {};
		\node [american or port, gate, or_gate_color] (or_L4_N1) at (6, -0.5) {};
		\node [style=none, above=5pt of or_L4_N1] (12) {$g^*$};
		\node [style=none, above=5pt of input_1] (14) {$\1$};
		\node [style=none, right=3pt of or_L2_N1.bout] (18) {$L_1$};
		\node [style=none, above right=3pt of or_L2_N2.bout] (19) {$L_i$};
		\node [style=none, right=3pt of or_L2_N3.bout] (20) {$L_N$};
		\node [style=none, left=4pt of or_L3_N1.bin 2] (21) {$R_1$};
		\node [style=none, above left=4pt and 7pt of or_L3_N2.bin 2] (22) {$R_j$};
		\node [style=none, left=4pt of or_L3_N3.bin 2] (23) {$R_N$};
		\node [style=none, above right=3pt and 1pt of or_L3_N2.bout] (24) {if $v_j = 1$};
	\end{pgfonlayer}
	\begin{pgfonlayer}{edgelayer}
		\draw [style=cdots arrow] (or_L2_N1.south) to (or_L2_N2.north);
		\draw [style=cdots arrow] (or_L2_N2.south) to (or_L2_N3.north);
		\draw [style=cdots arrow] (or_L3_N1.south) to (or_L3_N2.north);
		\draw [style=cdots arrow] (or_L3_N2.south) to (or_L3_N3.north);
		
		\path [style=dashed-directed] (or_L2_N2.bout) edge node[midway,below=3pt] {if $M_{ij} = 1$} (or_L3_N2.bin 2);
		\path [style=dashed-directed] (or_L3_N1.bout) edge (or_L4_N1.bin 1);
		\path [style=dashed-directed] (or_L3_N2.bout) edge (or_L4_N1.bin 2);
		\path [style=dashed-directed] (or_L3_N3.bout) edge (or_L4_N1.bin 3);
		\path [style=dashed-directed] ($(input_1.north east) !1/4! (input_1.south east)$) edge[looseness=0.3] (or_L2_N1.bin 2);
		\path [style=dashed-directed] (input_1.east) edge node[midway,below] {if $u_i = 1$} (or_L2_N2.bin 2);
		\path [style=dashed-directed] ($(input_1.north east) !3/4! (input_1.south east)$) edge[looseness=0.3] (or_L2_N3.bin 2);
	\end{pgfonlayer}
\end{tikzpicture}
\end{document}
    \captionsetup{textformat=simple}
\caption{
Sketch of~$C^{u,v}_M$, omitting binary trees that simulate unbounded in- and out-degree.}
\Description{Sketch of the circuit used in the OuMv reduction, as described in the proof.}
\label{fig:omv_to_mcv}
\end{figure}

The circuit value of~$C^{u,v}_M$ is 1 if and only if a directed path exists from the single 1-gate~$\1$ to~$g^*$, as other gates are~\BOR-gates. This occurs if and only if there are~$i,j \in [N]$ for which wires~$\1 L_i$, $L_i R_j$, and~$R_j g^*$ exist, i.e.,~$u_i = M_{ij} = v_j = 1$, so~$(i,j)$ is a witness to $u^\top M v=1$.

Now let us describe the algorithm for~\OuMv based on \MCV, starting first by the amortized fully dynamic case. Build~$C_M$, an instance to~\MCV, and preprocess it in~$\Oh(p(N^2))$ time. Then, after receiving each vector pair~$(u,v)$, build~$C^{u,v}_M$ by adding at most~$2N$ wires, perform a single query to decide whether~$u^\top M v = 1$, and finally delete at most~$2N$ wires to recover~$C_M$. Do this for each of the~$Q$ vector pairs, for a total of~$\Oh(NQ)$ updates and~$Q$ queries, which by our definition of amortization takes~$\Oh((NQ + N^2)u(N^2) + (Q+N^2)q(N^2))$ computation time.

For the incremental case of \MCV, proceed the same way but, when building~$C^{u,v}_M$ for a vector pair~$(u,v)$, record all changes done to the data structures of the algorithm, which sum up to~$\Oh(N u(N^2) + q(N^2))$ changes, and then undo them (rollback) to restore~$C_M$.
For the decremental case, before preprocessing, add wires~$\1 L_i$ and~$R_j g^*$ for each~$i,j \in [N]$ to~$C_M$. Then, to build~$C^{u,v}_M$, delete all wires~$\1 L_i$ for each~$i \in [N]$ such that~$u_i=0$, and~$R_j g^*$ for each~$j \in [N]$ such that~$v_j = 0$. Finally, use the same rollback technique to restore the initial circuit. Both partially dynamic cases take~$\Oh(N u(N^2) + q(N^2))$ per vector pair, and thus take~$\Oh(NQ u(N^2) + Qq(N^2))$ computation time.

In the partially dynamic cases, note the rewinding forbids an amortized analysis over the whole sequence of insertions, which is why we require a worst-case time bound for those cases, otherwise building~$C^{u,v}_M$ could take~$\Theta(N^2 u(N^2))$~time.
\end{proof}

By using this reduction, we are able to obtain lower bounds on \MCV using the \OMvConj.

\begin{theorem} \label{result:mcv_bound_omv}
   For any~$\epsilon > 0$, there exists no algorithm for the fully (partially) dynamic \MCV on a circuit of size~$n$ with~$\dynTime{\poly(n)}{\Oh(n^{\frac{1}{2}-\epsilon})}{\Oh(n^{1-\epsilon})}$ amortized (worst-case, resp.) time, unless the \OMvConj is false. 
\end{theorem}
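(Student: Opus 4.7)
The plan is to contradict the \OMvConj via the chain $\text{OMv} \implies \text{\OuMv hardness (Prop.~\ref{prop:oumv})} \xrightarrow{\text{Prop.~\ref{result:oumv_to_mcv}}} \text{\MCV hardness}$. I assume for contradiction an \MCV algorithm with preprocessing~$\poly(n)$, update~$u(n) = \Oh(n^{1/2-\epsilon})$, and query~$q(n) = \Oh(n^{1-\epsilon})$, plug these into the reduction of Proposition~\ref{result:oumv_to_mcv}, and then pick the number of \OuMv queries~$Q$ as a function of the matrix size~$N$ so that the resulting computation time beats the \OuMv lower bound of Proposition~\ref{prop:oumv} for some fixed~$\epsilon' > 0$.

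For the fully dynamic case, substituting $u(N^2) = \Oh(N^{1-2\epsilon})$ and $q(N^2) = \Oh(N^{2-2\epsilon})$ into the amortized bound of Proposition~\ref{result:oumv_to_mcv} yields
\[
\Oh\bigl((NQ + N^2)\, N^{1-2\epsilon} + (Q + N^2)\, N^{2-2\epsilon}\bigr) \;=\; \Oh\bigl(Q\, N^{2-2\epsilon} + N^{4-2\epsilon}\bigr).
\]
Setting~$Q = N^2$ collapses this to~$\Oh(N^{4-2\epsilon})$, while Proposition~\ref{prop:oumv} with the same~$Q = N^2$ and with~$\epsilon' := \epsilon$ would forbid any algorithm of time~$\Oh(N^{2-\epsilon}\cdot N^2 + N^2 \cdot (N^2)^{1-\epsilon}) = \Oh(N^{4-\epsilon})$. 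Since~$N^{4-2\epsilon} = o(N^{4-\epsilon})$, this contradicts \OMv. For the partially dynamic case, the reduction's worst-case computation time is~$\Oh(NQ\,u(N^2) + Q\,q(N^2)) = \Oh(Q\,N^{2-2\epsilon})$; taking~$Q = N$ gives~$\Oh(N^{3-2\epsilon})$, which is~$o(N^{3-\epsilon})$ and hence contradicts the OMv-based lower bound~$\Omega(N^{2-\epsilon}Q + N^2 Q^{1-\epsilon}) = \Omega(N^{3-\epsilon})$ obtained from Proposition~\ref{prop:oumv} at~$Q = N$.

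The only delicate point, which I flag as the main obstacle, is the additive~$N^{4-2\epsilon}$ term coming from the~$(Q + N^2)\,q(N^2)$ contribution in the amortized bound. For small~$Q$ (say~$Q = N$) this term dominates and gives only a trivial~$\Oh(N^{4-2\epsilon})$ bound, which is \emph{above} the~$\Omega(N^{3-\epsilon})$ OMv lower bound at~$Q = N$ and thus yields no contradiction. Choosing~$Q = N^2$ pushes the OMv lower bound up to~$\Omega(N^{4-\epsilon})$, matching the order of the bottleneck term while still leaving an~$N^\epsilon$ gap in our favour. Verifying that~$Q = N^2$ is admissible in Proposition~\ref{prop:oumv} (the statement allows~$Q$ to be any function of~$N$, provided preprocessing is~$\poly(N)$) is what closes the argument; everything else is routine arithmetic, and the identical scheme — with~$\epsilon'$ taken to be any value in~$(0, 2\epsilon)$ — carries over to both dynamic regimes.
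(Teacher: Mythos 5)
Your proposal is correct and takes essentially the same route as the paper: assume the \MCV algorithm, plug it into Proposition~\ref{result:oumv_to_mcv}, pick~$Q = N^2$ so the preprocessing is~$\poly(N)$ and the computation time is~$\Oh(N^{2-\epsilon}Q)$, and contradict Proposition~\ref{prop:oumv}. The only cosmetic difference is in the partially dynamic case, where the paper simply reuses~$Q=N^2$ with the proposition's uniform bound, while you invoke the tighter worst-case bound~$\Oh(NQ\,u(N^2)+Q\,q(N^2))$ from its proof and take~$Q=N$; both choices yield the contradiction.
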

\begin{proof}
    Suppose there is an algorithm~\A for \MCV with complexity~$\dynTime{\poly(n)}{\Oh(n^{\frac{1}{2}-\epsilon})}{\Oh(n^{1-\epsilon})}$.
    Choose ${Q = N^2}$. By Proposition~\ref{result:oumv_to_mcv}, we can solve \OuMv with size~$N$ and~$Q$ queries with~$\Oh(\poly(N^2)) = \poly(N)$ preprocessing time and
    \(
    \Oh((NQ + N^2) u(N^2) + (Q + N^2) q(N^2))
    = \Oh(NQN^{1-2\epsilon} + QN^{2-2\epsilon}) = \Oh(N^{2-\epsilon}Q),
    \)
    computation time which, by \Cref{prop:oumv}, falsifies the \OMvConj.
\end{proof}


\subsection{SETH-based lower bounds} \label{sec:undirected_seth_lower}

In the \kSAT problem, we are given a~$k$-CNF formula on~$N$ variables and~$M$ clauses, that is, the formula is a conjunction where all clauses have exactly~$k$ literals. The task is to determine whether the formula is satisfiable by some assignment for its variables.
The strong exponential time hypothesis (SETH) essentially states that there is no algorithm for SAT faster than the trivial~$\Theta^*(2^N)$ one\footnote{Recall that $\Ohp$ omits polynomial factors, that is,~$\Ohp(f(n)) = \Oh(f(n) \poly(n))$. The same holds for $\Theta^*$.} by an exponential factor:

\begin{conjecture}[SETH~\cite{impagliazzo_complexity_2001}] \label{conj:seth}
    For any~$\epsilon > 0$, there exists a~$k$ such that no algorithm for~\kSAT on $N$ variables with~$\Ohp(2^{(1-\epsilon)N})$ running time exists.
\end{conjecture}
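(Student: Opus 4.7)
The final statement of the excerpt, Conjecture~\ref{conj:seth}, is the Strong Exponential Time Hypothesis (SETH), and the paper adopts it as a working hypothesis rather than establishes it as a theorem. There is accordingly no proof to propose: SETH is not known to follow from any weaker widely believed assumption, and proving it is out of reach of current techniques. Any genuine proof would immediately imply $\P \neq \NP$, because a polynomial-time algorithm for \kSAT would in particular run in time $\Ohp(2^{(1-\epsilon)N})$ for every $\epsilon > 0$, contradicting the conjecture for that $\epsilon$ and every $k$. More strongly, SETH is strictly stronger than the Exponential Time Hypothesis, which already implies $\P \neq \NP$, so any proof attempt would have to subsume a resolution of the $\P$ versus $\NP$ problem and go further.

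If one were forced to sketch an approach anyway, the only honest route would be via unconditional lower bounds for \kSAT in a sufficiently strong model: one would aim to show that any algorithm (or circuit family) solving \kSAT on $N$ variables requires $\Omega(2^{(1-\epsilon)N})$ operations for every $\epsilon > 0$, with $k$ chosen as a function of $\epsilon$. The current state of the art is very far from this target. The best known algorithms for \kSAT, from the PPSZ framework and local search, run in time of the form $2^{(1 - \Theta(1/k))N}$, consistent with SETH but offering no direct evidence for it. On the lower-bound side, even the recent super-logarithmic circuit bounds of~\citet{larsen_super-logarithmic_2023} remain astronomically below the $\Omega(2^{(1-\epsilon)N})$ regime required, so there is at present no plausible route to a proof.

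The fundamental obstacle is therefore not a technical difficulty to be routed around but the very status of SETH as an unresolved meta-assumption underlying much of modern fine-grained complexity. Accordingly, the paper does not prove Conjecture~\ref{conj:seth}: it simply states it, in the same manner as the \OMvConj in~\Cref{sec:undirected_omv_lower}, and the remainder of~\Cref{sec:undirected_seth_lower} will use it as a hypothesis to derive conditional lower bounds for \CoreValue and \TrussValue.
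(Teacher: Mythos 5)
You are right: this is the Strong Exponential Time Hypothesis, which the paper states as a conjecture (citing Impagliazzo and Paturi) and uses only as a hypothesis for conditional lower bounds; no proof is given or expected. Your assessment matches the paper's treatment exactly.
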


By using the \SETH, we can show stronger conditional lower bounds, indicating that there is essentially no algorithm for~\MCV better than the trivial one, in which we just apply the static algorithm to the current circuit on every query. 

\begin{proposition} \label{result:ksat_to_mcv}
    For any positive~$\delta < \frac12$, if the fully (partially) dynamic \MCV on a circuit of size~$n$ can be solved in~$\dynTime{p(n)}{u(n)}{q(n)}$ amortized (worst-case, resp.) time, then \kSAT on $N$ variables with~$\Oh(N)$ clauses can be solved in~$\Oh(p(N') + 2^{(1-\delta)N} \left(N u(N') + q(N')\right))$ time, where $N' = \Oh(2^{\delta N} N)$.
\end{proposition}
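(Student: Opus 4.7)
The plan is a standard ``split and list'' reduction. Given a $k$-CNF formula $\Phi$ on $N$ variables with $M = \Oh(N)$ clauses, I would partition its variables into $L$ of size $\lfloor \delta N \rfloor$ and $R$ of size $\lceil (1-\delta) N \rceil$, then build a monotone bounded circuit $C_\Phi$ whose output equals $1$ iff some $\alpha \in \Bzo^L$ together with the currently wired $R$-assignment satisfies $\Phi$.

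To construct $C_\Phi$, enumerate during the preprocessing all $2^{\delta N}$ assignments $\alpha$ to $L$. For each $\alpha$ I would build a subcircuit $C_\alpha$: for every clause $c$ of $\Phi$, if $\alpha$ already satisfies $c$ through one of its $L$-literals, substitute a $1$-gate for $c$; otherwise, replace $c$ by an \BOR over its $R$-literals (at most $k = \Oh(1)$ of them). Then \BAND together the $M$ clause gadgets via a balanced tree of size $\Oh(N)$. Finally, \BOR the outputs of all $C_\alpha$ through a balanced tree over $2^{\delta N}$ inputs. To avoid \BNOT-gates and keep the circuit monotone, for each $x \in R$ keep two dedicated input gates, one representing the literal $x$ and one representing $\bar x$, so that both polarities are directly available to the clause gadgets. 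Since each $C_\alpha$ has $\Oh(N)$ gates, the total size is $|C_\Phi| = \Oh(2^{\delta N} N) = N'$, and the circuit can be built and fed to the dynamic \MCV algorithm in $\Oh(p(N'))$ preprocessing time.

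In the online phase, iterate over all $2^{(1-\delta) N}$ assignments $\beta$ to $R$. For each $\beta$ I would perform $\Oh(N)$ wire updates to connect a $1$-gate to exactly the $R$-literal inputs that are true under $\beta$, then do one query on the output gate of $C_\Phi$. The formula $\Phi$ is satisfiable iff at least one of these queries returns $1$. In the fully dynamic setting the amortized bounds apply directly and give the claimed total time. For the partially dynamic setting I would reuse the rollback technique from \Cref{result:oumv_to_mcv}: in the incremental case, log and undo the $\Oh(N)$ insertions after each query; in the decremental case, preinstall during preprocessing every wire that might ever be needed and, before each query, delete only those not matching $\beta$, rolling back afterwards. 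Each iteration then costs $\Oh(N u(N') + q(N'))$ worst-case time.

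The main obstacle is keeping the circuit bounded monotone, since \MCV demands in- and out-degree at most $2$: this is what forces the balanced trees for the $M$-wide \BAND and the $2^{\delta N}$-wide \BOR aggregations, and the duplicated positive/negative inputs for the $R$-variables. The remaining steps are bookkeeping: verifying that $|C_\Phi| = \Oh(2^{\delta N} N)$, that each $\beta$ triggers $\Oh(N)$ wire updates, and that summing the cost over the $2^{(1-\delta) N}$ iterations yields the promised $\Oh(p(N') + 2^{(1-\delta) N}(N u(N') + q(N')))$ bound.
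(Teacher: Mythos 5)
Your proposal is correct, and at the level of strategy it matches the paper: split the variables into a small block of size $\delta N$ that is hard-wired into the circuit during preprocessing and a large block of size $(1-\delta)N$ that is enumerated online, with $\Oh(N)$ wire updates plus one query per assignment, rollback for the partially dynamic cases, circuit size $N'=\Oh(2^{\delta N}N)$, and $\delta<\frac12$ used only in the final accounting to absorb the $N'$-dependent terms into $2^{(1-\delta)N}(Nu(N')+q(N'))$. Where you genuinely differ is the circuit encoding. The paper uses a ``negative'' encoding that recycles the bipartite reachability template of its \OMv reduction: one \BOR-gate $L_c$ per clause, one \BOR-gate $R_u$ per small-block assignment, a wire $L_cR_u$ exactly when $u$ does \emph{not} satisfy $c$, an \BAND output over all the $R_u$, and online wires $\1 L_c$ for the clauses not satisfied by the current large-block assignment, so that satisfiability appears as circuit value $0$ and updates are keyed to clauses. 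You instead build, for each small-block assignment $\alpha$, a direct CNF-evaluation subcircuit over duplicated positive/negative $R$-literal inputs, take a wide \BOR over the $2^{\delta N}$ subcircuits, and key the online updates to the true literals of $\beta$, so satisfiability appears as value $1$. Both encodings give identical asymptotics; the paper's buys uniformity with its \OMv gadget and per-clause updates, yours is the more natural split-and-list formulation at the cost of literal duplication. One point you should state explicitly: besides the wide \BAND and \BOR aggregation trees, the gates you attach wires to online also violate the degree bounds of \MCV ($1$-gates have out-degree at most $1$, and a single $R$-literal gate can feed $\Omega(N2^{\delta N})$ clause gadgets across the subcircuits), so both need the root-to-leaves \BOR fan-out trees from the paper's degree-simulation argument, with online wires attached to reserved leaves; since the total number of literal-to-gadget wires is $\Oh(kN2^{\delta N})$, this still keeps $N'=\Oh(2^{\delta N}N)$, so the claimed bound is unaffected.
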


\begin{proof}
    Given a CNF formula~$F$ on a set~$V$ of~$N$ variables and a set~$C$ of~$\Oh(N)$ clauses, choose any~$U \subseteq V$ of size~$\ceil{\delta N}$. We will describe a circuit~$C^U_F$ with output gate~$g^*$ and how to build a circuit~$C^U_F(v)$ for any partial assignment~$v \in \Bzo^{V \setminus U}$ to~$V \setminus U$, so that the circuit value of~$C^U_F(v)$ is 0 if and only if there exists a partial assignment~$u \in \Bzo^U$ to~$U$ such that the assignment~$u + v$ to~$V$ satisfies~$F$. 

    Build~$C^U_F$ as in \Cref{fig:seth_to_mcv}, with a 1-gate~$\1$, an \BAND-gate~$g^*$, and \BOR-gates~$L_c$, for each clause~$c \in C$, and~$R_u$, for each partial assignment~$u \in \Bzo^U$. Add wire~$L_c R_u$ for each~$c \in C$ and~$u \in \Bzo^U$ such that~$u$ does \emph{not} satisfy clause~$c$, and wire~$R_u g^*$ for each~$u \in \Bzo^U$. To build~$C^U_F(v)$ from~$C^U_F$ and~$v$, add wires~$\1 L_c$ for each~$c \in C$ such that~$v$ does \emph{not} satisfy~$c$. This concludes the description of~$C^U_F$ and~$C^U_F(v)$, and an example is shown in~\Cref{fig:seth_mcv_example}. Note that both circuits can be simulated with~$N'=\Oh(2^{\delta N} N)$ gates of bounded in- and out-degree, and~$C^U_F(v)$ adds at most~$|C| = \Oh(N)$ wires to~$C^U_F$.

\begin{figure}[h]
    \centering
    \documentclass[tikz, border=5mm]{standalone}
\usepackage{../common/tikzit}
\usepackage{../content/tikzit-styles}
\usepackage[american]{circuitikz}
\usetikzlibrary{positioning,fit}
\usepackage{dsfont}

\newcommand\Bzo{\{0,1\}}
\newcommand\1{\mathds{1}}

\begin{document}
\begin{tikzpicture}[xscale=0.65,yscale=0.6,
    gate/.style={draw, no input leads, no output leads, scale=0.6},
    inputnode/.style={draw, rectangle},
    one_gate_color/.style={fill=green!30},
    and_gate_color/.style={fill=orange!30},
    or_gate_color/.style={fill=cyan!30, number inputs=3},
    directed/.style={},
    dashed-directed/.style={dashed},
	every edge/.append style={in=180,out=0,looseness=0.6},
	region/.style={rounded corners=10pt, inner sep=5pt, fill=gray!15, draw=none},
]
	\begin{pgfonlayer}{nodelayer}
		\node [american or port, gate, or_gate_color] (or_L2_N1) at (0, 3) {};
		\node [american or port, gate, or_gate_color] (or_L2_N2) at (0, 0.5) {};
		\node [american or port, gate, or_gate_color] (or_L2_N3) at (0, -2) {};
		\node [inputnode, one_gate_color] (input_1) at (-5, 0.5) {1};
		\node [american or port, gate, or_gate_color] (or_L3_N1) at (5, 5) {};
		\node [american or port, gate, or_gate_color] (or_L3_N2) at (5, 0.5) {};
		\node [american or port, gate, or_gate_color] (or_L3_N3) at (5, -4) {};
        \ctikzset{tripoles/american and port/height=1.2}
		\node [american and port, gate, and_gate_color, number inputs=3] (and_L4_N1) at (7.5, 0.5) {};
		\node [style=none, above=15pt of or_L2_N1] (left_layer_label) {$C$};
		\node [style=none, above=15pt of or_L3_N1] (right_layer_label) {$\Bzo^U$};
		\node [style=none, above right=3pt and 1pt of or_L2_N2.bout] (label_Lc) {$L_c$};
		\node [style=none, above right=3pt and 0pt of or_L3_N2.bout] (label_Ru) {$R_u$};
		\node [style=none] (stage_v_label) at (-4, 5) {Stage $v \in \Bzo^{V \setminus U}$};
		\node [style=none, above=5pt of and_L4_N1] (output_g_star_label) {$g^*$};
		\node [style=none, above=5pt of input_1] (input_1_label) {$\1$};
	\end{pgfonlayer}
	\begin{pgfonlayer}{edgelayer}
        \node [region, fit=(or_L2_N1) (or_L2_N3)] (left_box) {};
        \node [region, fit=(or_L3_N1) (or_L3_N3)] (right_box) {};
		\draw [style=cdots arrow] (or_L2_N1.south) to (or_L2_N2.north);
		\draw [style=cdots arrow] (or_L2_N3.north) to (or_L2_N2.south);
		\draw [style=cdots arrow] (or_L3_N3.north) to (or_L3_N2.south);
		\draw [style=cdots arrow] (or_L3_N2.north) to (or_L3_N1.south);
		
		\path [style=dashed-directed] (or_L2_N2.bout) edge node[above, midway, font=\small] {if $u$ does} node[below, midway, font=\small] {not satisfy $c$} (or_L3_N2.bin 2);
		\path [style=directed] (or_L3_N1.bout) edge (and_L4_N1.bin 1);
		\path [style=directed] (or_L3_N2.bout) edge (and_L4_N1.bin 2);
		\path [style=directed] (or_L3_N3.bout) edge (and_L4_N1.bin 3);

		\path [style=dashed-directed] ($(input_1.north east) !1/4! (input_1.south east)$) edge[out=30] (or_L2_N1.bin 2);
		\path [style=dashed-directed] (input_1.east) edge node[above, midway, font=\small] {if $v$ does} node[below, midway, font=\small] {not satisfy $c$} (or_L2_N2.bin 2);
		\path [style=dashed-directed] ($(input_1.north east) !3/4! (input_1.south east)$) edge[out=-30] (or_L2_N3.bin 2);
	\end{pgfonlayer}
\end{tikzpicture}
\end{document}
    \caption{Reduction from \kSAT to \MCV on stage~$v$, omitting binary trees used to simulate unbounded in- and out-degree.}
    \Description{Sketch of the circuit used in the k-SAT reduction, as described in the proof.}
    \label{fig:seth_to_mcv}
\end{figure}%
\begin{figure}[h]
    \centering
    \documentclass[tikz, border=5mm]{standalone}
\usepackage[american]{circuitikz}
\usepackage{expl3}

\usetikzlibrary{positioning, matrix, decorations.pathreplacing, calc}


\begin{document}
\begin{tikzpicture}[
    gate/.style={draw, minimum size=0.2cm, scale=0.8},
    inputnode/.style={draw, rectangle, scale=1, minimum size = 0.8cm, font=\Large},
    one_gate_color/.style={fill=green!30, no input leads, line width=2pt},
    or_gate_color/.style={fill=cyan!30, no input leads, no output leads},
    and_gate_color/.style={fill=orange!30, no input leads, no output leads},
    gateone/.style={very thick},
    label/.style={font=\bfseries},
    many/.style={loosely dotted, very thick, shorten >= 5pt, shorten <= 5pt},
    cell/.style={rectangle, draw=none, minimum size=0.4cm, font=\normalsize},
    highlight/.style={cell, fill=red!30},
    matrix_paren/.style={decorate, decoration={curveto, amplitude=10pt, raise=5pt}, thick},
    vector_paren/.style={decorate, decoration={curveto, amplitude=5pt, raise=3pt}, thick},
    every edge/.append style={in=180,out=0,looseness=0.6},
    every node/.append style={font=\footnotesize},
]


\node[gate, inputnode, one_gate_color, label] (in1) at (-3, 0) {1};

\ctikzset{tripoles/american or port/height=0.5}
\renewcommand{\d}{1.1}
\node[american or port, gate, or_gate_color, number inputs=3] (C1) at ($(-0.5,  3*\d/2)$) {};
\node[american or port, gate, or_gate_color, number inputs=3, gateone] (C2) at ($(-0.5,  \d/2)$) {};
\node[american or port, gate, or_gate_color, number inputs=3, gateone] (C3) at ($(-0.5,  -\d/2)$) {};
\node[american or port, gate, or_gate_color, number inputs=3, gateone] (C4) at ($(-0.5,  -3*\d/2)$) {};

\ctikzset{tripoles/american or port/height=0.5}
\renewcommand{\d}{0.75}
\node[american or port, gate, or_gate_color, number inputs=3] (V000) at ($(2.5,  7*\d/2)$)   {000};
\node[american or port, gate, or_gate_color, number inputs=2, gateone] (V001) at ($(V000) - (0, \d)$)   {001};
\node[american or port, gate, or_gate_color, number inputs=2, gateone] (V010) at ($(V001) - (0, \d)$)   {010};
\node[american or port, gate, or_gate_color, number inputs=2, gateone] (V011) at ($(V010) - (0, \d)$)   {011};
\node[american or port, gate, or_gate_color, number inputs=3, gateone] (V100) at ($(V011) - (0, \d)$)   {100};
\node[american or port, gate, or_gate_color, number inputs=3, gateone] (V101) at ($(V100) - (0, \d)$)   {101};
\node[american or port, gate, or_gate_color, number inputs=2, gateone] (V110) at ($(V101) - (0, \d)$)   {110};
\node[american or port, gate, or_gate_color, number inputs=3, gateone] (V111) at ($(V110) - (0, \d)$)   {111};

\ctikzset{tripoles/american and port/height=2}
\node[american and port, gate, and_gate_color, number inputs=8] (g_star) at (4.5, 0) {};
\ctikzset{tripoles/american and port/height=1}
\ctikzset{tripoles/american or port/height=1}

\node[above=0pt of C1] {$x_1 \vee \overline{x_2} \vee x_4$};
\node[above=0pt of C2] {$\overline{x_1} \vee x_3 \vee \overline{x_5}$};
\node[above=0pt of C3] {$x_2 \vee \overline{x_4} \vee x_6$};
\node[above=0pt of C4] {$\overline{x_3} \vee x_5 \vee \overline{x_6}$};
\node[above=2pt of V000] {\small $(x_4, x_5, x_6) \in \{0,1\}^U$};
\node[right=-4pt of g_star] {\small $g^*$};


\path ($(in1.south east) !3/4! (in1.north east)$) edge (C2.bin 2);
\path ($(in1.south east) !2/4! (in1.north east)$) edge (C3.bin 2);
\draw ($(in1.south east) !1/4! (in1.north east)$) to[out=0,in=90,looseness=0.7] ++(0.3, -0.8) to[out=-90,in=180,looseness=0.7] (C4.bin 2);

\path (C1.bout) edge (V000.bin 2);
\path (C1.bout) edge (V001.bin 1);
\path (C1.bout) edge (V010.bin 1);
\path (C1.bout) edge (V011.bin 1);
\path (C2.bout) edge (V010.bin 2);
\path (C2.bout) edge (V011.bin 2);
\path (C2.bout) edge (V110.bin 1);
\path (C2.bout) edge (V111.bin 2);
\path (C3.bout) edge (V100.bin 2);
\path (C3.bout) edge (V110.bin 2);
\path (C4.bout) edge (V001.bin 2);
\path (C4.bout) edge (V101.bin 2);

\path (V000.bout) edge (g_star.bin 1);
\path (V001.bout) edge (g_star.bin 2);
\path (V010.bout) edge (g_star.bin 3);
\path (V011.bout) edge (g_star.bin 4);
\path (V100.bout) edge (g_star.bin 5);
\path (V101.bout) edge (g_star.bin 6);
\path (V110.bout) edge (g_star.bin 7);
\path (V111.bout) edge (g_star.bin 8);

\end{tikzpicture}
\end{document}
    \caption{For~$F=(x_1 \vee \overline{x_2} \vee x_4) (\overline{x_1} \vee x_3 \vee \overline{x_5})(x_2 \vee \overline{x_4} \vee x_6)(\overline{x_3} \vee x_5 \vee \overline{x_6})$, instance of~3-SAT, the figure shows the corresponding circuit~$C^U_F(v)$ described in~\Cref{result:oumv_to_mcv}, where~$U = \{x_4,x_5,x_6\}$ and~$v$ is the partial assignment~$(x_1,x_2,x_3) = (0,1,0)$. Since the circuit value is~0, we can determine that~$(0,1,0,0,0,0)$ is a satisfying assignment.}
    \Description{Example SAT instance and the corresponding circuit for one partial assignment.}
    \label{fig:seth_mcv_example}
\end{figure}%

    Note that in~$C^U_F(v)$ the value of gate~$R_u$ for any~$u \in \Bzo^U$ is 1 if and only if there is a path from~$\1$ to~$R_u$, since~$\1$ is the only~1-gate and all~$L$ and~$R$ gates are~\BOR-gates. Thus the circuit value of~$C^U_F(v)$ is 0 if and only if~$R_u$ is unreachable from~$\1$ for some~$u \in \Bzo^U$. In that case, for each~$c \in C$, we have that~$\1 L_c$ or~$L_c R_u$ does not exist. By our construction, that means~$u$ or~$v$ satisfies each~$c$, and thus the assignment~$u+v$ satisfies~$F$.

    Now let us describe the algorithm for~\kSAT based on \MCV, starting first by the amortized fully dynamic case. Build~$C^U_F$ and preprocess it in~$p(N')$ time. Then proceed in~$2^{N-\ceil{\delta N}}$ stages, one for each partial assignment~$v \in \Bzo^{V \setminus U}$. On each stage~$v$, build~$C^U_F(v)$ by adding~$\Oh(N)$ wires to~$C^U_F$, perform a single query and if the circuit value of~$C^U_F$ is~0 output that~$F$ is satisfiable, otherwise delete~$\Oh(N)$ wires to recover~$C^U_F$. We do a total of~$\Oh(2^{(1-\delta)N}N)$ updates and~$\Oh(2^{(1-\delta)N})$ queries, which by our definition of amortization takes
    \[
    \Oh\left(\left(2^{(1-\delta)N} N + N'\right) u(N') + \left(2^{(1-\delta)N} + N'\right) q(N')\right)
    = \Oh\left(2^{(1-\delta)N} \left(N u(N') + q(N')\right)\right)
    \]
    time since~$\delta < \frac12$, finishing the proof for this case.

    For the incremental case of \MCV, build~$C^U_F$ normally and, for each $v$, build~$C^U_F(v)$ keeping track of the changes, as in the proof of Proposition~\ref{result:oumv_to_mcv}, then query the circuit value of $C^U_F(v)$ and use rollback to recover~$C^U_F$. In the decremental case of \MCV, before preprocessing, add wire~$\1 L_c$ to~$C^U_F$ for each~$c\in C$ and, to obtain circuit~$C^U_F(v)$, remove wires instead, then rollback to restore the initial circuit. Both partially dynamic cases take~${\Oh(N u(N') + q(N'))}$ time per stage, which sum up to~$p(N') + \Oh(2^{(1-\delta)N} \left(N u(N') + q(N')\right))$ time.
\end{proof}

Based on the Sparsification Lemma~\cite{impagliazzo_which_2001}, it can be shown that, to falsify the \SETH, it suffices to consider \kSAT instances on $N$ variables with~$\Oh(N)$ clauses~\cite[Proposition~1]{abboud_popular_2014_full}, as in 
\Cref{result:ksat_to_mcv}.
So, now, with an appropriate choice of~$\delta$, we derive lower bounds on \MCV using the \SETH.

\begin{theorem} \label{res:mcv_bound_seth}
    For any~$\epsilon > 0$, there exists no algorithm for the fully (partially) dynamic \MCV on a circuit of size~$n$ with~$\dynTime{\poly(n)}{\Oh(n^{1-\epsilon})}{\Oh(n^{1-\epsilon})}$ amortized (worst-case, resp.) time, unless the \SETH is false.
\end{theorem}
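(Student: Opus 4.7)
The plan is to derive a contradiction with the SETH by applying \Cref{result:ksat_to_mcv} to a hypothetical fast algorithm for \MCV, with a careful choice of the parameter~$\delta$. Suppose, for contradiction, that for some $\epsilon > 0$ there exists an algorithm for the fully (or partially) dynamic \MCV running in $\dynTime{\poly(n)}{\Oh(n^{1-\epsilon})}{\Oh(n^{1-\epsilon})}$ amortized (resp.\ worst-case) time, and let $c$ be a constant such that the preprocessing time is $\Oh(n^c)$. Fix any $k$. By the Sparsification Lemma, to refute the SETH it suffices to give a $\Ohp(2^{(1-\epsilon')N})$ algorithm for \kSAT on instances with $\Oh(N)$ clauses, for some $\epsilon' > 0$ depending only on $\epsilon$.

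Apply \Cref{result:ksat_to_mcv} with a small positive $\delta$ to be chosen below. The resulting circuit has size $N' = \Oh(2^{\delta N} N)$, so the per-update and per-query costs satisfy $u(N') = \Oh(2^{\delta(1-\epsilon)N}\poly(N))$ and similarly for $q(N')$, and the preprocessing is bounded by $p(N') = \Oh(2^{c\delta N}\poly(N))$. Substituting into the total running time for \kSAT yields
\[
\Oh\bigl(2^{c\delta N}\poly(N)\bigr) + \Oh\bigl(2^{(1-\delta)N} \cdot 2^{\delta(1-\epsilon)N}\poly(N)\bigr) = \Ohp\bigl(2^{c\delta N}\bigr) + \Ohp\bigl(2^{(1-\delta\epsilon)N}\bigr).
\]

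Now I choose $\delta$ small enough to make both exponents strictly less than $1$ by a constant. Taking $\delta = \min\{1/(2c),\,1/4\}$ ensures $\delta < \tfrac12$ as required by \Cref{result:ksat_to_mcv}, the preprocessing contribution is $\Ohp(2^{N/2})$, and the second term is $\Ohp(2^{(1-\delta\epsilon)N})$. Letting $\epsilon' = \min\{1/2,\,\delta\epsilon\} > 0$, the total time is $\Ohp(2^{(1-\epsilon')N})$, contradicting the SETH. The main obstacle is purely bookkeeping: one must balance~$\delta$ so that the blowup $2^{\delta N}$ from the reduction does not overwhelm the savings $n^{-\epsilon}$ coming from the assumed update and query speedups, and also so that the polynomial preprocessing $n^c$ does not dominate; both are handled simultaneously by the single choice $\delta \le 1/(2c)$. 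The partially dynamic cases are handled identically, using the worst-case bound in \Cref{result:ksat_to_mcv} instead of the amortized one.
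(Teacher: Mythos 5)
Your proposal is correct and follows essentially the same route as the paper: both apply \Cref{result:ksat_to_mcv} and pick~$\delta$ small enough (you use $\delta=\min\{1/(2c),1/4\}$, the paper uses $\delta=\min\{\frac{1-\epsilon}{t},0.49\}$) so that the $2^{c\delta N}$ preprocessing term and the $2^{(1-\delta\epsilon)N}$ term are both $\Ohp(2^{(1-\epsilon')N})$ for a constant $\epsilon'>0$, contradicting the \SETH. The only differences are bookkeeping choices in bounding the preprocessing term.
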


\begin{proof}
    For any positive~$\eps$ and~$t$, suppose there exists an algorithm as in the statement, that runs in time $\dynOh{n^t}{n^{1-\epsilon}}{n^{1-\epsilon}}$ (amortized, in the fully dynamic case). Take~${\delta = \min\{\frac{1-\epsilon}{t}, 0.49\}}$. By Proposition~\ref{result:ksat_to_mcv}, \kSAT on~$N$ variables with~$\Oh(N)$ clauses can be solved in
   \(\Oh((2^{\delta N}N)^t + 2^{(1-\delta) N} N (2^{\delta N}N)^{(1 - \epsilon)})\) \(= \Ohp(2^{(1-\epsilon)N} + 2^{(1-\epsilon \delta) N})\) time.
    This contradicts the \SETH with~$\epsilon \delta$ in the role of its~$\epsilon$. The partially dynamic case follows in the same manner.
\end{proof}

\subsection{Unconditional lower bounds} \label{sec:unconditional_core}

\newcommand{\CPROBE}[1]{\textup{\textsc{cprobe[}}{\small#1}\textup{\textsc{]}}}
\newcommand{\Bxor}{\oplus}

Our previous lower bounds are based on famous conjectures such as \OMvConj and \SETH, and thus are \emph{conditional}. Unconditional lower bounds are harder to obtain, and the strongest are typically derived within the \deff{cell probe} model of computation~\cite{miltersen_complexity_1994,patrascu_lower_2008,yao_should_1981}, denoted as \CPROBE{$B$} for a number $B$. In this model, memory is divided into cells, each storing~$\Oh(B)$ bits of data. The cost of an operation is measured by the number of memory cells accessed, or \emph{probed}. Computation is free of charge, and the addresses of the cells to probe can be determined as an arbitrary function of the query and the contents of previously probed cells. This abstraction allows for bounds that inherently apply to more realistic models like the RAM model~\cite{cormen_ram_2022}, especially when considering $B = \lg N$ for a problem of size $N$, which is common practice to allow addressing all memory.
The following dynamic array problem is commonly used to get unconditional lower bounds for dynamic problems~\cite{miltersen_complexity_1994,patrascu_lower_2008}. We use $\Bxor$ to denote the exclusive OR operation, that is, for any~$a,b \in \Bzo$,~$a \Bxor b$ is 1 if exactly one of~$a$ and~$b$ are 1.

\newcommand{\DynXor}{\problem{DynXor}}
\renewcommand{\Bzo}{\{0,1\}}

\begin{problemStatement}{\DynXor}
    Given a vector~$(x_1, \ldots, x_n) \in \Bzo^n$, process a series of updates to values in the vector, and queries: Given~$i$, is~$x_1 \Bxor x_2 \Bxor \cdots \Bxor x_i$ equal to~$1$?
\end{problemStatement}

\begin{lemma}[\cite{fredman_cell_1989}] \label{res:dynxor_lower_bound}
    There exists no algorithm for \DynXor with~$o(\frac{\lg n}{\lg \lg n})$ time for updates and queries in the \CPROBE{$\lg n$} model.
\end{lemma}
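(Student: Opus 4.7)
The plan is to prove the bound via the \emph{chronogram} technique of Fredman and Saks, arguing against a random instance rather than a specific adversarial one. I would fix a distribution on a sequence of $n$ operations in which each update picks a uniformly random index $j \in [n]$ and a uniformly random bit to assign to $x_j$, and the final operation is a prefix-parity query at a uniformly random position. If the amortized per-operation probe cost is $t$, then the expected probe cost of this final query is $O(t)$, so it suffices to lower-bound the expected number of probes that the query must make.

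Partition the $n-1$ updates into epochs $E_1, \ldots, E_k$ going backwards from the query, with $|E_i| = b^i$ for $b = c \lg n$, choosing $c$ so that the total length fits in $n$ and $k = \Theta(\lg n / \lg \lg n)$. Declare a cell to be $i$-\emph{fresh} if its most recent write happened during $E_i$; by definition, cells that are $i$-fresh for distinct values of $i$ are disjoint. The crux is the following claim: for each epoch $i$, the query probes at least one $i$-fresh cell with probability $\Omega(1)$. Summing the resulting $\Omega(k) = \Omega(\lg n / \lg \lg n)$ expected probes yields the bound.

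To prove the crux, I would use an encoding argument. Fix all operations outside $E_i$ and the query's prefix index. The $b^i$ updates in $E_i$ contribute $\Theta(b^i \lg n)$ bits of entropy, and on average over a random prefix the correct answer depends on these updates with constant probability, because each update in $E_i$ hits a position whose inclusion in the query prefix is a coin flip. Cells that are ``old'' (written before $E_i$) or ``young'' (written after $E_i$) are determined by the fixed context, so the dependence of the query answer on $E_i$ must be carried through $i$-fresh cells. A counting argument then shows that if the query failed to probe any $i$-fresh cell with probability more than $1 - \Omega(1)$, one could compress the $E_i$-updates below their entropy: the probe trace plus the snapshot of $i$-fresh cells would suffice to decode $E_i$, but the number of $i$-fresh cells is only about $t b^i$, each holding $O(\lg n)$ bits, which after subtracting the $O(t \sum_{j < i} b^j) = O(t b^i / (b-1))$ cells overwritten in younger epochs is too small to encode $\Omega(b^i \lg n)$ bits of fresh randomness.

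The main obstacle is the encoding lemma itself: one must carefully describe what the decoder sees (the probe trace, the fixed context, and the algorithm's deterministic behavior) and show that $i$-fresh cells can be identified at decoding time without extra information. Fredman and Saks handle this by running the algorithm simulation alongside the decoder and exploiting the geometric gap $b = \Theta(\lg n)$ between epoch sizes, which guarantees that the contribution of younger epochs to memory is negligible compared with $E_i$'s own entropy. Once this lemma is set up, summing the constant-probability ``hits'' over the $\Theta(\lg n / \lg \lg n)$ epochs and converting back from expected query probes to amortized per-operation cost finishes the proof.
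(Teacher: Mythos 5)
This lemma is not proved in the paper at all: it is imported directly from Fredman and Saks \cite{fredman_cell_1989}, so the only meaningful comparison is with the proof in that reference. Your proposal is a reconstruction of exactly that proof --- the chronogram method with geometrically growing epochs and a per-epoch information-transfer argument --- so you are on the same route as the cited source rather than a different one.

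Two steps as you state them are, however, genuinely wrong and would need repair before this counts as a proof. First, the sentence asserting that cells written \emph{after} $E_i$ (``young'' cells) are ``determined by the fixed context'' is false: during the younger epochs the data structure may copy arbitrary information about $E_i$ forward, so young cells can depend on $E_i$. The correct statement is that there are \emph{few} of them --- at most $t\sum_{j<i}|E_j| = O\bigl(t\,b^{i-1}\cdot\frac{b}{b-1}\bigr)$ cells of $O(\lg n)$ bits each, hence $O(t\,b^{i-1}\lg n)$ bits --- which is too little to account for the epoch-$i$ information a random prefix query needs with constant probability; your final count instead subtracts the young cells from the number of $i$-fresh cells, which is not the quantity the encoding argument compares against, and with $b=c\lg n$ for constant $c$ the domination of $t\lg n$ by $b$ is not guaranteed (the epoch ratio must be chosen as a function of the update time and word size, which is precisely where the $\lg n/\lg\lg n$ form comes from). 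Second, an amortized per-operation cost of $t$ does not bound the expected cost of the single final query by $O(t)$: amortization permits that one operation to cost $\Theta(nt)$. Fredman and Saks avoid this by analyzing a long random sequence with queries interleaved (equivalently, bounding the total probe count of the whole sequence), not a lone query appended at the end. With those standard repairs your outline matches the cited proof, but as written the crux claim and the amortization step do not hold.
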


We can then obtain lower bounds to \MCV via \DynXor.

\newcommand{\n}[1]{\overline{#1}}

\begin{proposition} \label{res:dynxor_to_mcv}
    If fully dynamic \MCV on a circuit of size~$N$ can be solved in~$\dynTime{p(N)}{u(N)}{q(N)}$ time, then \DynXor on a vector of size~$n$ can be solved with~$\Oh(u(n))$ time per update and~$\Oh(u(n)+q(n))$ time per query.
\end{proposition}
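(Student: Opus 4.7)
The plan is to build, from a vector $X = (x_1, \ldots, x_n)$, an \MCV instance $C_X$ of size $\Oh(n)$ that encodes all prefix XOR values, using the two-rail construction sketched in Figure~\ref{fig:xor_wires}. The gates would be a 0-gate $g_0$, a 1-gate $\overline{g_0}$, an OR-gate pair $(g_i, \overline{g_i})$ for each $1 \leq i \leq n$, and a single output OR-gate $O$. For each $i \geq 1$, I would wire level $i-1$ to level $i$ according to $x_i$: identity-wiring ($g_{i-1} \to g_i$ and $\overline{g_{i-1}} \to \overline{g_i}$) when $x_i = 0$, and swap-wiring ($\overline{g_{i-1}} \to g_i$ and $g_{i-1} \to \overline{g_i}$) when $x_i = 1$. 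A straightforward induction on $i$ then shows the invariant $g_i = x_1 \Bxor \cdots \Bxor x_i$, with $\overline{g_i}$ holding its negation.

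On this circuit, a \DynXor update that flips $x_i$ translates to at most four wire insertions or deletions rewiring the gates between level $i-1$ and level $i$, costing $\Oh(u(n))$. To answer a query on index $i$, I would temporarily insert the wire $g_i \to O$, ask the \MCV algorithm whether the circuit value is~$1$, and then delete that wire, costing $\Oh(u(n) + q(n))$. Since no other gate ever feeds into $O$, the value of $O$ at query time is exactly the value of $g_i$, which by the invariant is the desired prefix XOR.

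The main thing to check carefully is that \MCV's degree restrictions always hold. The gates $g_0$ and $\overline{g_0}$ each send out exactly one wire at any time, respecting the out-degree-$1$ cap for constant gates. Each internal pair $(g_i, \overline{g_i})$ with $1 \leq i \leq n-1$ has in-degree~$1$ and out-degree~$1$ in steady state, rising to out-degree~$2$ only when the query wire to $O$ is present, which stays within the OR-gate cap of~$2$; gates $g_n, \overline{g_n}$ stay below the cap as well, and $O$ keeps out-degree~$0$ as required. Preprocessing $C_X$ costs $\Oh(p(n))$ since it has $\Oh(n)$ gates and wires. I do not expect any substantial obstacle: the reduction is purely local, and the only subtle point is the degree bookkeeping during queries.
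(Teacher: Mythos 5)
Your construction is exactly the paper's: the same two-rail circuit with a 0-gate/1-gate pair at level~0, OR-gate pairs $(g_i,\overline{g_i})$ wired identity or swapped according to $x_i$, rewiring between consecutive levels on updates, and a temporary wire from $g_i$ to the output gate for queries. The argument is correct and matches the paper's proof, including the degree bookkeeping the paper only asserts in passing.
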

\begin{proof}
    Given a vector~$X \coloneqq (x_1, \ldots, x_n) \in \Bzo^n$, we will describe a circuit~$C_X$ with gates~$g_1, \ldots, g_n$ such that, for each~$i \in [n]$,~$x_1 \Bxor \cdots \Bxor x_i$ is equal to the value of~$g_i$ in the circuit.

    Create a~0-gate~$g_0$ and a~1-gate~$\n{g_0}$. Each~$x_i$ corresponds to two~$\BOR$-gates~$g_i$ and~$\n{g_i}$. If~$x_i = 0$, then add wires~$g_{i-1}g_i$ and~$\n{g_{i-1}}\,\n{g_i}$, and if~$x_i = 1$, then add wires~$\n{g_{i-1}}g_i$ and~$g_{i-1}\n{g_i}$, like in \Cref{fig:xor_wires}. Additionally, create the output $\BOR$-gate~$g^*$.
    This concludes the description of the circuit~$C_X$. Changing some value~$x_i$ of vector~$X$ will be simulated by changing the wires between~$\{g_{i-1},\n{g_{i-1}}\}$ and~$\{g_i, \n{g_i}\}$, preserving the relation between~$X$ and~$C_X$.

\begin{figure*}[h]
\centering
\captionsetup[subfigure]{justification=centering}
\begin{subfigure}{0.49\textwidth}
    \centering
    \documentclass[tikz, border=5mm]{standalone}
\usepackage{../common/tikzit}
\usepackage{../content/tikzit-styles}
\usepackage[american]{circuitikz}
\usetikzlibrary{positioning}

\begin{document}
\begin{tikzpicture}[
    gate/.style={draw, no input leads, no output leads, number inputs=3,scale=0.5},
    or_gate_color/.style={fill=cyan!30},
]
	\begin{pgfonlayer}{nodelayer}
		\node [american or port, gate, or_gate_color] (gi1) at (-1, 0) {};
		\node [american or port, gate, or_gate_color] (ngi1) at (0, 0) {};
		\node [american or port, gate, or_gate_color] (gi) at (1, 0) {};
		\node [american or port, gate, or_gate_color] (ngi) at (2, 0) {};
		\node [style=none, below=5pt of gi1] (4) {$g_{i-1}$};
		\node [style=none, below=5pt of gi] (5) {$g_i$};
		\node [style=none, below=5pt of ngi] (7) {$\overline{g_i}$};
		\node [style=none, below=5pt of ngi1] (6) {$\overline{g_{i-1}}$};
		\node [style=none, left=8pt of gi1.bin 2] (input_gi1_B1) {};
		\node [style=none, left=8pt of ngi1.bin 2] (input_ngi1_B1) {};
		\node [style=none, right=8pt of gi.bout] (output_gi_O1) {};
		\node [style=none, right=8pt of ngi.bout] (output_ngi_O1) {};
	\end{pgfonlayer}
	\begin{pgfonlayer}{edgelayer}
		\path (gi1.bout) edge[in=120, out=60, looseness=1.1] (gi.bin 2);
		\path (ngi1.bout) edge[in=120, out=60, looseness=1.1] (ngi.bin 2);
		\path (input_gi1_B1.center) edge[dashed] (gi1.bin 2);
		\path (input_ngi1_B1.center) edge[dashed] (ngi1.bin 2);
		\path (gi.bout) edge[dashed] (output_gi_O1.center);
		\path (ngi.bout) edge[dashed] (output_ngi_O1.center);
	\end{pgfonlayer}
\end{tikzpicture}
\end{document}
    \captionsetup{textformat=simple}
    \subcaption{$x_i=0$}
    \Description{OR gates g i-1 connected to bi and g i-1 bar connected to g i bar.}
\end{subfigure}%
\begin{subfigure}{0.49\textwidth}
    \centering
    \documentclass[tikz, border=5mm]{standalone}
\usepackage{../common/tikzit}
\usepackage{../content/tikzit-styles}
\usepackage[american]{circuitikz}
\usetikzlibrary{positioning}

\begin{document}
\begin{tikzpicture}[
    gate/.style={draw, no input leads, no output leads, number inputs=3,scale=0.5},
    or_gate_color/.style={fill=cyan!30},
]
	\begin{pgfonlayer}{nodelayer}
		\node [american or port, gate, or_gate_color] (gi1) at (-1, 0) {};
		\node [american or port, gate, or_gate_color] (ngi1) at (0, 0) {};
		\node [american or port, gate, or_gate_color] (gi) at (1, 0) {};
		\node [american or port, gate, or_gate_color] (ngi) at (2, 0) {};
		\node [style=none, below=5pt of gi1] (4) {$g_{i-1}$};
		\node [style=none, below=5pt of gi] (5) {$g_i$};
		\node [style=none, below=5pt of ngi] (7) {$\overline{g_i}$};
		\node [style=none, below=5pt of ngi1] (6) {$\overline{g_{i-1}}$};
		\node [style=none, left=8pt of gi1.bin 2] (input_gi1_B1) {};
		\node [style=none, left=8pt of ngi1.bin 2] (input_ngi1_B1) {};
		\node [style=none, right=8pt of gi.bout] (output_gi_O1) {};
		\node [style=none, right=8pt of ngi.bout] (output_ngi_O1) {};
	\end{pgfonlayer}
	\begin{pgfonlayer}{edgelayer}
		\path (gi1.bout) edge[in=120, out=60, looseness=1.1] (ngi.bin 2);
		\path (ngi1.bout) edge (gi.bin 2);
		\path (input_gi1_B1.center) edge[dashed] (gi1.bin 2);
		\path (input_ngi1_B1.center) edge[dashed] (ngi1.bin 2);
		\path (gi.bout) edge[dashed] (output_gi_O1.center);
		\path (ngi.bout) edge[dashed] (output_ngi_O1.center);
	\end{pgfonlayer}
\end{tikzpicture}
\end{document}
    \captionsetup{textformat=simple}
    \subcaption{$x_i=1$}
    \Description{OR gate g i-1 bar connected to g i and g i-1 connected to g i bar.}
\end{subfigure}%
\caption{Wiring gates in~$C_X$ depending on the value of~$x_i$.}
\label{fig:xor_wires}
\end{figure*}

    Let us argue that, for each~$i \in \{0\}\cup [n]$,~$x_1 \Bxor \cdots \Bxor x_i$ is equal to the value of~$g_i$, and~$\n{g_i}$ has the opposite value of~$g_i$. This can be shown inductively, and it trivially holds when~$i = 0$ by our choice of gate type for~$g_0$ and~$\n{g_0}$. For~$i > 0$, if~$x_i=0$, then the wires~$g_{i-1}g_i$ and~$\n{g_{i-1}}\,\n{g_i}$ will preserve the values from~$i-1$, and if~$x_i = 1$, the wires will flip the values, since~$g_i$ and~$\n{g_i}$ are \BOR-gates with a single input.

    Therefore, to answer a query for whether~$x_1 \Bxor \cdots \Bxor x_i$ equals~$1$, we simply add a wire from~$g_i$ to~$g^*$ and query the circuit value of~$C_X$, then remove that wire. Circuit~$C_X$ has~$\Oh(n)$ gates and wires, and all its gates have in-degree and out-degree at most~2 at all times. Each update in~$X$ translates to~$\Oh(1)$ updates in~$C_X$, and a query in \DynXor translates to two updates and one query in \MCV, so the proof is finished.
\end{proof}

\begin{theorem} \label{res:mcv_bound_cellprobe}
    There exists no algorithm for fully dynamic \MCV on a circuit of size~$n$ with~$\dynTime{\poly(n)}{o(\frac{\lg n}{\lg \lg n})}{o(\frac{\lg n}{\lg \lg n})}$ time in the~\CPROBE{$\lg n$} model.
\end{theorem}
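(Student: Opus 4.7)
The plan is to combine \Cref{res:dynxor_to_mcv} with \Cref{res:dynxor_lower_bound} via a direct contradiction argument. Suppose, for the sake of contradiction, that there exists a fully dynamic algorithm~\A for \MCV running in~$\dynTime{\poly(n)}{u(n)}{q(n)}$ time with~$u(n),\,q(n) = o(\frac{\lg n}{\lg \lg n})$ cell probes in the \CPROBE{$\lg n$} model, where $n$ is the size of the input circuit.

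Given an instance of \DynXor on a vector of size~$n$, I would build the circuit $C_X$ from the proof of \Cref{res:dynxor_to_mcv} and run~\A on it as a black box. This circuit has size~$\Theta(n)$, so each update or query of~\A on~$C_X$ costs~$o(\frac{\lg n}{\lg \lg n})$ probes in the~\CPROBE{$\lg n$} model. The reduction translates each~\DynXor update into~$\Oh(1)$ wire updates in~$C_X$ and each~\DynXor query into~$\Oh(1)$ wire updates plus a single circuit-value query in~$C_X$. Therefore the resulting algorithm for \DynXor uses~$o(\frac{\lg n}{\lg \lg n})$ cell probes per update and per query, contradicting \Cref{res:dynxor_lower_bound}.

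The only point that requires a bit of care is that the reduction in~\Cref{res:dynxor_to_mcv} must itself be implementable in the cell probe model without hidden computational cost. This is immediate: computation is free in the cell probe model, and the transformation between a \DynXor operation on index~$i$ and the corresponding constant-size batch of wire operations on gates~$g_{i-1}, \n{g_{i-1}}, g_i, \n{g_i}, g^*$ in~$C_X$ depends only on~$i$ and the current value of~$x_i$, which is explicitly maintained by our wrapper in~$\Oh(1)$ auxiliary cells. Hence the number of cell probes charged is exactly the number of cell probes performed by~\A, up to a constant factor, yielding the desired contradiction.
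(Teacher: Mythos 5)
Your proposal is correct and follows exactly the paper's argument: combine \Cref{res:dynxor_to_mcv} with \Cref{res:dynxor_lower_bound}, noting that the reduction translates each \DynXor operation into $\Oh(1)$ \MCV operations on a circuit of size $\Theta(n)$, so the assumed $o(\frac{\lg n}{\lg \lg n})$ bounds transfer and contradict the cell-probe lower bound. Your extra remark on implementability in the \CPROBE{$\lg n$} model is a harmless elaboration of the same proof.
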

\begin{proof}
    By \Cref{res:dynxor_to_mcv}, this would imply an algorithm for \DynXor with~$o(\frac{\lg n}{\lg \lg n})$ update and query time, contradicting \Cref{res:dynxor_lower_bound}.
\end{proof}

An unconditional~$\lg^{\omega(1)} n$ lower bound, however, is far beyond the scope of current techniques, as even the most recent works provide only~$\Omega\left(\frac{\lg^{\nicefrac{3}{2}} n}{\lg \lg n}\right)$ lower bounds for dynamic problems~\cite{clifford_new_2015,larsen_super-logarithmic_2023}.

\subsection{Core decomposition lower bounds}\label{sec:kcore_corollary}

This section showed, through \Cref{res:mcv_bound_ipl,result:mcv_bound_omv,res:mcv_bound_seth,res:mcv_bound_cellprobe}, several hardness results for \MCV, conditional and unconditional. \Cref{sec:mcv_to_3core}, on the other hand, through \Cref{result:mcv_to_2-e_kcore_approx,res:mcv_to_truss,result:mcv_to_02core,res:3core_to_k2_bcore}, showed how to solve \MCV using several~\kcore variants. Putting these together, we can obtain hardness results for the~\kcore variants, which is formalized in the following corollary.

\begin{corollary} \label{res:kcore_bound_all}
    For any~$\epsilon>0$,~$\delta > 0$,~$k \geq 3$ and~$\ell \geq 2$, consider each of the following dynamic problems, and associated constants~$s$ and~$c_q$:
    \begin{enumerate}
        \item[(i)] \CoreValue or \TrussValue on a graph on~$n$ vertices and~$m$ edges, with~$s = m$ and~$c_q=1$; or
        \item[(ii)] \problem{$k$-Core} or \problem{$(k+1)$-Truss} on a graph on~$n$ vertices and~$m$ edges, with~$s = \nicefrac{m}{k}$ and~$c_q=1$; or
        \item[(iii)] \ApproxCore{$(2-\epsilon)$} on a graph on~$n$ vertices and~$m$ edges, with~$s = m \epsilon^3$ and~$c_q=\epsilon^2$; or
        \item[(iv)] \problem{$(k-1,\ell-2)$-Core} on a digraph on~$n$ vertices and~$m$ arcs, with~$s = \nicefrac{m}{k+\ell}$ and~$c_q=1$; or
        \item[(v)] \problem{$(k,\ell)$-BCore} on a bipartite graph on~$n$ vertices and~$m$ edges, with~$s = \nicefrac{m}{k\ell}$ and~$c_q=1$.
    \end{enumerate}
    No algorithm exists for any of these problems under these constraints:
    \begin{enumerate}
        \item[(a)] fully dynamic with~$\dynTime{\poly(s)}{o(c_q \frac{\lg s}{\lg \lg s})}{o(\frac{\lg s}{\lg \lg s})}$ worst-case time in the \CPROBE{$\lg n$} model; or
        \item[(b)] fully dynamic with~$\dynTime{\poly(s)}{\Oh(c_q \polylog(s))}{\polylog(s)}$ worst-case time, unless~${\P\subseteq\IPL}$; or
        \item[(c)] fully (partially) dynamic with~$\dynTime{\poly(s)}{\Oh(c_q s^{\frac{1}{2}-\delta})}{\Oh(s^{1-\delta})}$ amortized (worst-case, resp.) time, unless the \OMvConj is false; or
        \item[(d)] fully (partially) dynamic with~$\dynTime{\poly(s)}{\Oh(c_q s^{1-\delta})}{\Oh(s^{1-\delta})}$ amortized (worst-case, resp.) time, unless the \SETH is false.
    \end{enumerate}
\end{corollary}

\begin{proof}
    For any problem~$\pi$ in this list, suppose there exists an algorithm~\A for~$\pi$ with the stated complexities. By using the corresponding proposition from \Cref{sec:mcv_to_3core}, obtain an algorithm~\A' for \MCV. Then, by using the corresponding theorem from this section, we obtain a contradiction.

    For example, let us apply this for case~(iii,c), that is, for obtaining lower bounds for~\ApproxCore{$(2-\epsilon)$} via the~\OMvConj. For any~$\epsilon > 0$, assume there is a fully dynamic algorithm~\A for~\ApproxCore{$(2-\epsilon)$} on a graph on~$n$ vertices and~$m$ edges with~$\dynTime{\poly(m\epsilon^3)}{\Oh(\epsilon^2 (m\epsilon^3)^{\frac{1}{2}-\delta})}{\Oh((m\epsilon^3)^{1-\delta})}$ amortized time. By~\Cref{result:mcv_to_2-e_kcore_approx}, there exists an algorithm~\A' for~\MCV on a circuit of size~$N$ with~$\dynTime{\poly(N\epsilon^{-3} \epsilon^3)}{\Oh(\epsilon^{-2} \epsilon^2 (N\epsilon^{-3}\epsilon^3)^{\frac{1}{2}-\delta})}{\Oh((N\epsilon^{-3}\epsilon^3)^{1-\delta})}$ amortized time. Note that, by~\Cref{result:mcv_bound_omv}, this~$\dynTime{\poly(N)}{\Oh(N^{\frac{1}{2}-\delta})}{\Oh(N^{1-\delta})}$ time algorithm implies that the~\OMvConj is false.
    The other cases have similar proofs, so we omit them.
\end{proof}

The results from~\Cref{res:kcore_bound_all} are summarized in~\Cref{table:summary}. Take~\problem{$k$-Core}, for example, to explore what these bounds imply for different values of~$k$. When~$k=\polylog(m)$, no algorithm for~\problem{$k$-Core} exists with~$\polylog(m)$ update/query time, under~$\P \not\subseteq \IPL$, or~$\Oh(m^{1-\delta})$ update/query time, under the~\SETH. When~$k$ is a polynomial on~$n$, such as~$k=\Oh(n^c)$, for some~$c < 1$, then the same~$\polylog(m)$ bound still holds under~$\P \not\subseteq \IPL$, while the \SETH provides a smaller, but still polynomial,~$\Oh(m^{1-c-\delta})$ bound.

\begin{table}[h]
    \centering
    \begin{tabular}{|l|c|c|c|c|}\hline
         Problem \textbackslash\ Condition & Unconditional & $\P \not\subseteq \IPL$ & \OMvConj & \SETH \\ \hline
         \multirow{2}{*}{\CoreValue} & \multirow{2}{*}{$o(\frac{\lg m}{\lg \lg m})$} & \multirow{2}{*}{$\polylog(m)$} & $\Oh(m^{\frac{1}{2}-\delta})$ & \multirow{2}{*}{$\Oh(m^{1-\delta})$} \\
           & &  &  $\Oh(m^{1-\delta})$ & \\ \hline

         \multirow{2}{*}{\TrussValue} & \multirow{2}{*}{$o(\frac{\lg m}{\lg \lg m})$} & \multirow{2}{*}{$\polylog(m)$} & $\Oh(m^{\frac{1}{2}-\delta})$ & \multirow{2}{*}{$\Oh(m^{1-\delta})$} \\
         & &  &  $\Oh(m^{1-\delta})$ & \\ \hline

         \multirow{2}{*}{\problem{$k$-Core}} & \multirow{2}{*}{$o(\frac{\lg \nicefrac{m}{k}}{\lg \lg \nicefrac{m}{k}})$} & \multirow{2}{*}{$\polylog(\nicefrac{m}{k})$} & $\Oh((\nicefrac{m}{k})^{\frac{1}{2}-\delta})$ & \multirow{2}{*}{$\Oh((\nicefrac{m}{k})^{1-\delta})$} \\
            & &  &  $\Oh((\nicefrac{m}{k})^{1-\delta})$ & \\ \hline

         \problem{$(2-\epsilon)$-Approx} & $o(\epsilon^2  \frac{\lg m \epsilon^3}{\lg \lg m \epsilon^3}) $ & $\epsilon^2 \polylog(m \epsilon^3)$ & $\Oh(\epsilon^2 (m \epsilon^3)^{\frac{1}{2}-\delta})$ & $\Oh(\epsilon^2 (m \epsilon^3)^{1-\delta})$ \\ 
         \problem{CoreValue} & $o(\frac{\lg m \epsilon^3}{\lg \lg m \epsilon^3}) $ & $\polylog(m \epsilon^3)$ & $\Oh((m \epsilon^3)^{1-\delta})$ & $\Oh((m \epsilon^3)^{1-\delta})$ \\ \hline
         
         \multirow{2}{*}{\problem{$(k+1)$-Truss}} & \multirow{2}{*}{$o(\frac{\lg \nicefrac{m}{k}}{\lg \lg \nicefrac{m}{k}})$} & \multirow{2}{*}{$\polylog(\nicefrac{m}{k})$} & $\Oh((\nicefrac{m}{k})^{\frac{1}{2}-\delta})$ & \multirow{2}{*}{$\Oh((\nicefrac{m}{k})^{1-\delta})$} \\
         & &  &  $\Oh((\nicefrac{m}{k})^{1-\delta})$ & \\ \hline

         \multirow{2}{*}{\problem{$(k-1,\ell-2)$-Core}} & \multirow{2}{*}{$o(\frac{\lg \nicefrac{m}{k+\ell}}{\lg \lg \nicefrac{m}{k+\ell}})$} & \multirow{2}{*}{$\polylog(\nicefrac{m}{k+\ell})$} & $\Oh((\nicefrac{m}{k+\ell})^{\frac{1}{2}-\delta})$ & \multirow{2}{*}{$\Oh((\nicefrac{m}{k+\ell})^{1-\delta})$} \\
          & &  &  $\Oh((\nicefrac{m}{k+\ell})^{1-\delta})$ & \\ \hline
         
         \multirow{2}{*}{\problem{$(k,\ell)$-BCore}} & \multirow{2}{*}{$o(\frac{\lg \nicefrac{m}{k\ell}}{\lg \lg \nicefrac{m}{k\ell}})$} & \multirow{2}{*}{$\polylog(\nicefrac{m}{k\ell})$} & $\Oh((\nicefrac{m}{k\ell})^{\frac{1}{2}-\delta})$ & \multirow{2}{*}{$\Oh((\nicefrac{m}{k\ell})^{1-\delta})$} \\
           & &  &  $\Oh((\nicefrac{m}{k\ell})^{1-\delta})$ & \\ \hline
    \end{tabular}
    \caption{Summary of lower bounds for core decomposition variants. For any~$\epsilon>0$,~$\delta > 0$,~$k \geq 3$ and~$\ell \geq 2$, no algorithm exists for each problem (row) and condition (column) with the given update/query complexity. The query time is omitted when it is the same as the update time, otherwise it is below the update time. The preprocessing time is omitted.}
    \label{table:summary}
\end{table}

\section{Core maintenance unboundedness} \label{sec:kcore_unboundedness}

Many papers on dynamic~\kcore in graphs~\cite{guo_simplified_2024,sariyuce_incremental_2016,zhang_maintaining_2024} approach the problem from a different perspective. They consider a \emph{maintenance} version of the problem, which tries to take advantage of the fact that, in real life dynamic graphs, few core values change with each update.

\begin{problemStatement}{Full Core Maintenance (\KcoreMaint)~\cite{zhang_maintaining_2024}}
    Given a graph, process a series of edge insertions/deletions, keeping the core values of all vertices explicitly and up to date.
\end{problemStatement}

For any edge insertion or deletion on a graph~$G$, its \deff{affected set}~$V^* \subseteq V(G)$ is the set of vertices of the inserted or deleted edge together with any vertices which had their core value changed due to the operation.
Let $||V^*||_d$ be the number of vertices within distance~$d$ in~$G$ from any vertex in~$V^*$. An algorithm~\A for \KcoreMaint is~\deff{bounded} if there exist a constant~$d \in \mathbb{N}$ and a non-decreasing function~$f$ such that~\A takes at most~$f(||V^*||_d)$ time on any operation with affected set $V^*$.
Algorithm~\A is \deff{polynomially bounded} if~$f$ is a polynomial function. A problem is called \deff{bounded} if a bounded algorithm exists for it, and \deff{unbounded} otherwise. The idea is that operations should take time proportional to how many core values are affected. The disadvantage is that we always explicitly store the core values of all vertices, so operations that change the core value of many vertices may be slow even in a bounded algorithm. This is similar to what we call ``output-sensitive algorithms''.

\citet{zhang_unboundedness_2019} proved that, under the locally persistent~(LP) model of computation, incremental~\KcoreMaint is unbounded, but decremental~\KcoreMaint is bounded. The LP model, however, is not a very strong model of computation, as it disallows global data or pointers from a vertex to other non-adjacent vertices. Algorithms like the one in~\cite{zhang_fast_2017} are \emph{not} in LP, so this result does not show that they could not be improved to become a bounded algorithm.
We will instead adopt the RAM model~\cite{cormen_ram_2022}, which is much more generic and stronger than~LP, and we will show the unboundedness of incremental~\KcoreMaint under the~\OMvConj.

\begin{theorem} \label{res:kcore_unbounded_omv}
    There exists no bounded algorithm in the RAM model for incremental \KcoreMaint, unless the \OMvConj is false.
\end{theorem}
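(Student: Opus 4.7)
My plan is to prove the contrapositive: assume there is a bounded algorithm \A for incremental \KcoreMaint and derive a refutation of the \OMvConj. The strategy is to chain the reductions from \OuMv to \MCV (\Cref{result:oumv_to_mcv}) and from \MCV to \threeCore (\Cref{result:mcv_to_3core}), and to observe that \A solves \threeCore at no extra cost: after every operation the core value of each vertex is stored explicitly, so deciding whether $K_{s^*} \geq 3$ takes $\Oh(1)$ time.

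Given an \OuMv instance with $N \times N$ matrix $M$ and $Q$ vector pairs $(u^1, v^1), \ldots, (u^Q, v^Q)$, I would first build, as preprocessing, the graph $G_M$ coming from the $3$-core realization of the circuit $C_M$ of \Cref{result:oumv_to_mcv}. This graph has $\Oh(N^2)$ vertices and bounded maximum degree $\Delta$, since the unbounded-fan-in gates of $C_M$ are simulated by $\Oh(\log N)$-deep binary trees of bounded-degree gates, and the $3$-core gadgets of \Cref{fig:3core_cvp} are themselves bounded-degree. For each vector pair, I would insert the $\Oh(N)$ edges of $G_M$ that correspond to the wires $\1 L_i$ (when $u^k_i = 1$) and $R_j g^*$ (when $v^k_j = 1$), read the stored core value of $s^*$ to decide $(u^k)^\top M v^k$, and finally apply the rollback technique from the incremental case of \Cref{result:oumv_to_mcv} to restore \A's state before the next pair.

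For the time analysis, bounded degree yields $||V^*||_d \leq \Oh(|V^*|)$. An edge insertion that does not alter any core value has $|V^*| = 2$, and thus costs $f(\Oh(1)) = \Oh(1)$. In each query, the stored value $K_{s^*}$ changes at most once, and the corresponding witness-completing insertion affects only the $\Oh(\log N)$ vertices along the circuit path from $\1$ to $s^*$, so it costs $f(\Oh(\log N))$. Rollback at most doubles the work. With $Q = N$, the total computation is $\Oh(N^2 + N f(\log N))$, and for any $f$ with $f(\log N) = o(N^{2-\epsilon})$ this is $o(N^{3-\epsilon})$, contradicting \Cref{prop:oumv} at $Q = N$.

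The main obstacle I anticipate is that \emph{bounded} permits any non-decreasing $f$, so extraordinarily fast-growing $f$ (e.g., $f(x) \geq 2^{cx}$ for $c \geq 2$) is not ruled out by the argument above. To cover all $f$, I would refine the $3$-core reduction so that the witness propagation from $\1$ to $s^*$ is broken into $\Oh(\log N)$ explicit edge insertions, each promoting exactly one extra vertex into the $3$-core and thus having $\Oh(1)$ affected set. This replaces the $f(\log N)$ term by $\Oh(\log N) \cdot f(\Oh(1)) = \Oh(\log N)$ and yields the contradiction for every non-decreasing $f$.
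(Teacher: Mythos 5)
There is a genuine gap, and it sits exactly where your proposal needs to cope with an arbitrary non-decreasing $f$. In the standard (non-feedback) chaining of \Cref{result:oumv_to_mcv} and \Cref{result:mcv_to_3core} that you use, it is not true that ``the stored value $K_{s^*}$ changes at most once'' per round and that all other insertions have $|V^*|=2$. The $1$-gate gadget is a $K_4$, which is always in the $3$-core; hence the very first wire $\1 L_i$ you insert already pushes $L_i$'s gadget and \emph{every} downstream $1$-valued gate (all $R_j$ with $M_{ij}=1$, plus the intermediate tree gates) into the $3$-core, so that single edge insertion has affected set of size up to $\Theta(N)$, and a bounded algorithm is entitled to spend $f(\Theta(N))$ time on it. Since $f$ may grow arbitrarily fast, your total of $\Oh(N^2 + N f(\log N))$ is wrong, and no contradiction with \Cref{prop:oumv} follows. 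Your anticipated fix does not repair this: $3$-core membership is a property of the current graph, so you cannot ``break the propagation into $\Oh(\log N)$ insertions each promoting exactly one vertex'' --- whichever single edge completes the structure forces the entire jump at once, and in any ordering some $\1 L_i$ insertion still causes a $\Theta(N)$-sized jump.

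The paper resolves this with two ideas your proposal lacks. First, it changes the construction: the $1$-gate gadget is replaced by the arrow of \Cref{fig:3core_arrow_cvp} whose \emph{input} vertex is $s^*$, creating a feedback loop so that the $3$-core of $G_M(u,v)$ is empty (all core values equal $2$) unless the circuit value is $1$. Consequently every insertion that does not complete a witness changes no core value, and since the maximum degree is $4$, its cost is at most the constant $f(2^{2d+1})$. Second, the one insertion that does complete a witness may have a huge affected set, but the reduction never pays for it: it runs \A for at most $f(2^{2d+1})$ steps per insertion, and if that budget is exceeded it aborts, concludes that some core value changed (which, by the feedback construction, can only mean $u^\top M v = 1$), outputs the answer, and rolls back. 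Without the feedback modification and this threshold-and-abort argument, the reduction cannot control the time of a bounded algorithm with arbitrary $f$, which is precisely the crux of \Cref{res:kcore_unbounded_omv}.
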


\begin{proof}
Suppose a bounded algorithm~\A in the RAM model for incremental \KcoreMaint exists. Given an~$N \x N$ Boolean matrix~$M$ and a vector pair~$(u,v)$, we will describe a circuit~$C^{u,v}_M$ and a graph~$G^{u,v}_M$ such that, for each gate~$g$, if its value is~1 \emph{and} the circuit value is~1, then all vertices in its graph have core value 3, otherwise the core value of all those vertices is~2.
The graph will be similar to the one built in~\Cref{result:oumv_to_mcv}, but we modify the 1-gate, which used to be a~$K_4$, so that it is in the~3-core \emph{only} when the output gate also has value 1. This has the effect that, if the circuit value is~0, then no vertex is in the 3-core, and hence a bounded algorithm must take constant time on any update that does not change the circuit value.

First, build~$C_M$ with output gate~$g^*$ from~$M$ as in the proof of~\Cref{result:oumv_to_mcv}. Let~$\1$ be the single~1-gate in~$C_M$. Then, build~$G_M$ from~$C_M$ as in~\Cref{result:mcv_to_3core}, with the following changes.
The graph of~$\1$ will instead be the arrow shown in~\Cref{fig:3core_arrow_cvp}. Vertex~$s^*$, an output vertex of the graph of~$g^*$, will play the role of the input vertex of~$\1$.  The mapping of wires to edges and the construction of~$C^{u,v}_M$ from~$C_M$, $u$, and~$v$ continues to work in the same way as in~\Cref{result:mcv_to_3core,result:oumv_to_mcv}, and this way we conclude the description of graph~$G^{u,v}_M$. An example is shown in~\Cref{fig:omv_unboundedness}. Note that~$G_M$ has~$\Oh(N^2)$ vertices and edges, and~$G^{u,v}_M$ adds at most~$2N$ edges to~$G_M$.

\begin{figure}[h]
    \centering
    \documentclass[tikz]{standalone}
\usepackage[american]{circuitikz}
\usetikzlibrary{positioning, matrix, decorations.pathreplacing, calc, fit}

\begin{document}
\begin{tikzpicture}[
    node 0/.style={circle, draw=black, fill=black, inner sep=1.5pt},
    gate/.style={draw, minimum size=0.2cm, scale=0.8},
    inputnode/.style={draw, rectangle, scale=1, minimum size = 0.8cm, font=\Large},
    one_gate_color/.style={fill=green!30, no input leads, line width=2pt},
    or_gate_color/.style={fill=cyan!30, no input leads, no output leads},
    gateone/.style={very thick},
    label/.style={font=\bfseries},
    many/.style={loosely dotted, very thick, shorten >= 5pt, shorten <= 5pt},
    cell/.style={rectangle, draw=none, minimum size=0.4cm, font=\normalsize},
    highlight/.style={cell, fill=red!30},
    matrix_paren/.style={decorate, decoration={curveto, amplitude=10pt, raise=5pt}, thick},
    vector_paren/.style={decorate, decoration={curveto, amplitude=5pt, raise=3pt}, thick},
    every edge/.append style={in=180,out=0,looseness=0.6},
    region/.style={rounded corners, fill opacity=0.15, draw, inner sep=2pt},
	one_region/.style={line width=1.5pt},
    or_gate_region/.style={region, fill=cyan},
    one_gate_region/.style={region, fill=green},
]

\useasboundingbox (-3.3, -2.1) rectangle (8.7, 2.25);

\begin{scope}[xshift=-1.5cm, yshift=-15pt]

    \matrix (M) [matrix of nodes, nodes={cell}, column sep=0.1cm, row sep=0.1cm]
    {
        1 & 0 \\
        1 & |[highlight]| 1 \\
    };
    \draw[matrix_paren, decoration={mirror}] (M-1-1.north west) -- (M-2-1.south west);
    \draw[matrix_paren] (M-1-2.north east) -- (M-2-2.south east);
    \matrix (u) [matrix of nodes, nodes={cell}, left=0.5cm of M, row sep=0.1cm]
    {
        0 \\
        |[highlight]|1 \\
    };
    \draw[vector_paren, decoration={mirror}] (u-1-1.north west) -- (u-2-1.south west);
    \draw[vector_paren] (u-1-1.north east) -- (u-2-1.south east);
    \matrix (v) [matrix of nodes, nodes={cell}, above=0.5cm of M, column sep=0.1cm]
    {
        0 & |[highlight]| 1 \\
    };
    \draw[vector_paren, yshift=1cm] (v-1-1.north west) -- (v-1-2.north east);
    \draw[vector_paren, decoration={mirror}] (v-1-1.south west) -- (v-1-2.south east);
    \node[font=\normalsize, above=0.05cm of u] {$u$};
    \node[font=\normalsize, left=0.05cm of v] {$v$};
    \node[font=\normalsize, above left=0.05cm and 0.05cm of M-1-1.north west] {$M$};
\end{scope}


\newcommand{\drawOneGate}[3]{
    \node [style=node 0] (#31) at  ($(#1, #2)$) {};      
    \node [style=node 0] (#32) at  ($(#1 + 1, #2 - 0.5)$) {};  
    \node [style=node 0] (#33) at  ($(#1 + 1, #2)$) {};  
    \node [style=node 0] (#34) at  ($(#1 + 1, #2 + 0.5)$) {};  
    \node [style=node 0] (#3 out1) at  ($(#1 + 2, #2)$) {};  
    \draw (#31) to (#32); \draw (#33) to (#32); \draw (#33) to (#34); \draw (#31) to (#34);
    \draw (#34) to (#3 out1); \draw (#33) to (#3 out1); \draw (#32) to (#3 out1);
    \node[one_gate_region, fit=(#31) (#32) (#33) (#34) (#3 out1)] {};
}

\newcommand{\drawOrGate}[3]{
    \node [style=node 0] (#31) at  ($(#1, #2 + 0.75)$) {};      
    \node [style=node 0] (#32) at  ($(#1 + 1, #2 + 0.25)$) {};  
    \node [style=node 0] (#33) at  ($(#1 + 1, #2 + 0.75)$) {};  
    \node [style=node 0] (#34) at  ($(#1 + 1, #2 + 1.25)$) {};  
    \node [style=node 0] (#3 out1) at  ($(#1 + 2, #2 + 0.75)$) {};  
    \node [style=node 0] (#36) at  ($(#1 + 1, #2 + 0.25)$) {};  
    \node [style=node 0] (#37) at  ($(#1, #2 - 0.75)$) {};      
    \node [style=node 0] (#38) at  ($(#1 + 1, #2 - 1.25)$) {};  
    \node [style=node 0] (#39) at  ($(#1 + 1, #2 - 0.75)$) {};  
    \node [style=node 0] (#310) at ($(#1 + 1, #2 - 0.25)$) {}; 
    \node [style=node 0] (#3 out2) at ($(#1 + 2, #2 - 0.75)$) {}; 
    \node [style=node 0] (#312) at ($(#1 + 1, #2 - 1.25)$) {}; 
    \node [style=node 0] (#3 in) at ($(#1 - 1, #2)$) {};        
    \draw (#31) to (#32); \draw (#33) to (#32); \draw (#33) to (#34); \draw (#31) to (#34);
    \draw (#34) to (#3 out1); \draw (#33) to (#3 out1); \draw (#36) to (#3 out1); \draw (#37) to (#38);
    \draw (#39) to (#38); \draw (#39) to (#310); \draw (#37) to (#310); \draw (#310) to (#3 out2);
    \draw (#39) to (#3 out2); \draw (#312) to (#3 out2);
    \draw (#3 in) to (#31); \draw (#3 in) to (#37);
    \node[or_gate_region, fit=(#31) (#32) (#33) (#34) (#3 out1) (#36) (#37) (#38) (#39) (#310) (#3 out2) (#312) (#3 in)] {};
}

\begin{scope}[xscale=0.6, yscale=0.6]
\drawOneGate{0}{0}{in1}

\drawOrGate{4}{1.75}{A}
\drawOrGate{4}{-1.75}{B}

\drawOrGate{8}{1.75}{C}
\drawOrGate{8}{-1.75}{D}

\drawOrGate{12}{0}{E}
\end{scope}

\node[above left=-4pt and 2pt of A1] {$L_1$};
\node[above left=-4pt and 2pt of B1] {$L_2$};
\node[above left=-4pt and 2pt of C1] {$R_1$};
\node[above left=-4pt and 2pt of D1] {$R_2$};
\node[above left=-4pt and 2pt of E1] {$g^*$};
\path[red] (in1 out1) edge (B in);
\path[red] (B out2) edge (D in);
\path[red] (D out1) edge (E in);

\path (A out1) edge (C in);
\path (B out1) edge (C in);

\path let \p1 = (E out1), \p2 = (in11), \p3 = (A4) in node (mid) at ($( \x1 * 0.5 + \x2 *0.5 , \y3 + 10)$) {};

\path (E out1) edge[out=90,in=0,looseness=0.85,orange,orange,very thick] (mid.center);
\path (mid.center) edge[out=180,in=90,looseness=1.2,orange,orange,very thick] (in11);

\end{tikzpicture}
\end{document}
    \caption{On the left, an example input to \OuMv, and on the right the final graph~$G^{u,v}_M$ described in the proof of~\Cref{res:kcore_unbounded_omv}. The different 1-gate and the \textcolor{orange}{orange} bold edge is the main difference from~\Cref{fig:example_3core_mcvp}. Note that if any of the \textcolor{red}{red} edges, which correspond to the witness on the \OuMv input, are removed, \emph{all} vertices leave the~$3$-core.}
    \Description{OuMv example and the constructed graph for the reduction.}
    \label{fig:omv_unboundedness}
\end{figure}

Note that the graphs of all gates, even when no input/output edges are present, have minimum degree at least two,\footnotemark{} so all vertices are in the~$2$-core at all times.
Suppose the circuit value of~$C^{u,v}_M$ is 0. Then~$s^*$ is not in the~$3$-core and the graph of~$\1$ is not in the~$3$-core of~$G^{u,v}_M$ because~$s^*$ is~$\1$'s input vertex, so all other vertices are not in the~$3$-core as well, and thus all vertices have core value 2.
\footnotetext{This is not the case for the~0-gate, but our reduction from \OuMv to \MCV uses only 1- and \BOR-gates.}
Now suppose the circuit value of~$C^{u,v}_M$ is 1, and consider the subcircuit of~$C^{u,v}_M$ with all gates of value 1, and the corresponding subgraph of~$G^{u,v}_M$. This subgraph has minimum degree 3, since each \BOR-gate has at least one of its input edges, and the input edge of~$\1$ is present from $s^*$ as the circuit value is 1. In fact, this subgraph is exactly the~$3$-core of~$G^{u,v}_M$. Therefore, even with the modification done to the~1-gate, we have shown that if the circuit value is 1 then the graph of each gate~$g$ is in the~$3$-core if and only if the value of~$g$ is~1.

Now we show how to solve~$\OuMv$ of size~$N$ with~$Q$ queries using our constructions. Given an~$N \x N$ matrix~$M$, build~$G_M$, which can be done in~$\Oh(N^2)$ time. After receiving each vector pair~$(u, v)$, we want to build~$G^{u,v}_M$ by adding at most~$2N$ edges to~$G_M$ and then~$u^\top M v = 1$ if and only if~$K_{s^*} = 3$. Since~\A is bounded, there exist~$d$ and~$f$ such that any insertion takes at most~$f(||V^*||_d)$ steps, where~$V^*$ is the affected set. Add edges to~$G_M$ one at a time. If, after adding an edge, the core value of~$s^*$ remains~2, then no core value changes, so~$|V^*| = 2$ and thus~$||V^*||_d \leq 2 \cdot 4^d \leq 2^{2d+1}$ since the maximum degree of the graph is~4. We modify the algorithm so that, if it takes at most~$f(2^{2d+1})$ steps for an edge insertion, then we proceed as usual. Otherwise we stop the algorithm before the~$(f(2^{2d+1})+1)$-st step, because we know some core value changed (we do not know, and it does not matter, \emph{which} core values changed),
so~$s^*$ is in the~$3$-core. In this case, we can output that~$u^\top M v = 1$ and continue to the next vector pair. Use the rollback technique, that is, record all changes to the memory of~\A and undo them before the next vector pair, which reverts the memory of~\A to its state right after building~$G_M$. This proves the correctness of the algorithm for~\OuMv.

For each of the~$Q$ vector pairs, the algorithm takes~$\Oh(2N f(2^{2d+1}))$ time, thus $\Oh(QN)$ computation time which, by \Cref{prop:oumv} with~$\epsilon = 1$, falsifies the \OMvConj.
\end{proof}

One can see this proof also disallows algorithms for core maintenance on a graph on~$m$ edges which take some of the graph size in account. That is, if each update with affected set~$V^*$ takes~$\Ohl(m^{\frac12 - \epsilon} f(||V^*||_d))$ time for any~$\epsilon > 0$ and a non-decreasing function~$f$, then the OMv conjecture is still falsified.
    
We can also extend the unboundedness result for the truss maintenance problem, as below, and also for directed and $(2-\epsilon)$-approximate core maintenance, with a similar adaptation.

\begin{problemStatement}{Full truss maintenance (\TrussMaint)~\cite{zhang_unboundedness_2019}}
    Given a graph, process a series of edge insertions/deletions, keeping the truss values of all edges explicitly and up to date.
\end{problemStatement}

\begin{restatable}{theorem}{trussUnboundedOmv} \label{res:truss_unbounded_omv}
    There exists no bounded algorithm in the RAM model for incremental \TrussMaint, unless the \OMvConj is false.
\end{restatable}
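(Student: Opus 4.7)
The plan is to adapt the argument of Theorem~\ref{res:kcore_unbounded_omv} to the truss setting, substituting the MCV-to-3-core reduction with the MCV-to-4-truss reduction of Proposition~\ref{res:mcv_to_truss}. Given an~$N \x N$ matrix~$M$ and a vector pair~$(u,v)$, I would first build the circuit~$C_M$ from~$M$ as in Proposition~\ref{result:oumv_to_mcv} (which uses only \BOR- and 1-gates), and then build a graph~$G_M$ with a designated edge~$e^*$ by applying the gadgets from Proposition~\ref{res:mcv_to_truss}, but with a modification to the 1-gate gadget analogous to the ``arrow'' modification used in Theorem~\ref{res:kcore_unbounded_omv}: the output edge of the single 1-gate~$\1$ is replaced by a small gadget with one input edge, identified with~$e^*$, such that the output edge of~$\1$ is in the 4-truss of~$G_M(u,v)$ if and only if~$e^*$ is. Since the \BOR-gadget in Figure~\ref{fig:truss_mcvp_or} already has exactly this propagation property, a ``one-input \BOR'' gadget (essentially the \BOR-gadget with one of the two inputs deleted) should be a natural candidate.

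The main invariants to verify are: (a)~at all times, every edge of~$G_M$ is in the 3-truss, guaranteed by the $K_4$ double-edges embedded in each gadget, so only the property of being in the 4-truss (not just existing in a triangle) needs to be tracked; (b)~if the circuit value of~$C_M(u,v)$ is~0, then~$e^*$ is not in the 4-truss, since~$\1$'s output is decoupled from the 4-truss and propagation from 1-gates fails; and (c)~if the circuit value is~1, then~$e^*$ is in the 4-truss, and in fact the output edges of all gates of value~1 flip from 3-truss to 4-truss. Note that~$G_M$ has bounded maximum degree, so for any inserted edge the ball of radius~$d$ contains~$O(1)$ edges, and when an insertion changes no truss value the affected edge set~$E^*$ satisfies~$\|E^*\|_d \le c_d$ for some constant~$c_d$ depending only on~$d$.

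Assuming a bounded algorithm~\A for incremental~\TrussMaint exists, with parameter~$d$ and bounding function~$f$, I would solve~\OuMv as follows. Preprocess~$G_M$ in~$\Oh(N^2)$ time; for each vector pair~$(u,v)$, insert the~$\Oh(N)$ edges of~$G_M(u,v) \setminus G_M$ one at a time, aborting any insertion that runs for more than~$f(c_d)$ steps. If some insertion aborts, declare~$u^\top M v = 1$, since by~(a) and~(b) the truss value of some edge changed, which is only possible if the circuit value flipped to~$1$. Otherwise declare~$u^\top M v = 0$. Use the rollback technique to restore~$G_M$ before processing the next vector pair. Each pair costs~$\Oh(N f(c_d)) = \Oh(N)$ time, yielding $\Oh(QN)$ total computation time, which by Proposition~\ref{prop:oumv} with~$\epsilon = 1$ contradicts the \OMvConj.

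The main obstacle is designing the modified 1-gate gadget so that it correctly couples the 4-truss membership of~$\1$'s output edge to~$e^*$ while preserving both bounded degree and the fact that all other gadgets in Proposition~\ref{res:mcv_to_truss} continue to propagate 4-truss membership as intended, and so that invariant~(a) still holds when~$e^*$ is absent or present. This is a local verification, analogous to the arrow modification for the 3-core, but it requires checking all triangles incident to the modified edges to ensure no unintended edges enter the 4-truss when the circuit value is~0.
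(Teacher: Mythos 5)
Your proposal is essentially the paper's proof: you modify the single 1-gate of the OuMv-derived circuit's truss construction so that the 4-truss can only propagate when the circuit value is 1 (the paper turns the 1-gate into a single edge wired from~$e^*$, you use a one-input \BOR-style gadget fed by~$e^*$ --- the same feedback coupling), and then run the abort-and-rollback simulation against the assumed bounded algorithm to solve \OuMv in~$\Oh(QN)$ time, contradicting the \OMvConj. The only nuance is that, as the paper observes, a constant number of edges may change truss value locally even while the circuit value stays 0, so the abort threshold~$f(c_d)$ must take~$c_d$ large enough to absorb these~$\Oh(1)$ local changes rather than assuming no truss value changes --- a constant-factor adjustment that does not affect the argument.
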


\begin{proof}
    We modify the proof of \Cref{res:kcore_unbounded_omv}. Build~$G_M$ from~$C_M$ as in \Cref{res:mcv_to_truss}, with the following changes. The graph of the single~1-gate in~$G_M$ will instead be a single edge, and we will add edges, like a wire, from the output edge~$e^*$ to the edge representing the 1-gate.
    
    With these modifications, while the circuit value of~$C_M$ is 0, only a constant number of edges change their truss value whenever a wire is inserted or removed, because the~$4$-truss does not propagate from the 1-gate unless the circuit value is~1.
    With this property in place, the rest of the proof proceeds as in \Cref{res:kcore_unbounded_omv}.
\end{proof}


\Cref{res:kcore_unbounded_omv,res:truss_unbounded_omv} show that the core and truss maintenance problems are unlikely to be bounded, even exponentially. The current line of research has been attempting to obtain a bounded algorithm for core maintenance, however, this is fated to fail.
The algorithms in~\cite{guo_simplified_2024,sariyuce_incremental_2016,zhang_fast_2017,zhang_maintaining_2024} are useful, and work fast in practice, but they inevitably take~$\tilde{\Omega}(n)$ time in the worst case, even when no vertex changes core value, and our results show this will likely always be the case.

In \Cref{fig:zhang_2017_counterexample}, we show a simple instance that proves the unboundedness of state-of-the-art core maintenance algorithms such as~\cite{zhang_maintaining_2024}. These algorithms keep an explicit order in which vertices with the lowest degree are deleted in the static core decomposition algorithm.
Start such an algorithm with the graph in \Cref{fig:zhang_2017_counterexample}, including edge~$e_1$ but excluding edge~$e_2$. Then, remove~$e_1$ and insert~$e_2$. This causes the deletion order of all~$n-6$ vertices in the middle to reverse from right-to-left to left-to-right, and the algorithm takes~$\Omega(n)$ time, even though no core value changes, proving its unboundedness.

\begin{figure}[h]
    \centering
    \begin{tikzpicture}
	\begin{pgfonlayer}{nodelayer}
		\node [style=node 0] (0) at (-3.5, 1) {};
		\node [style=node 0] (1) at (-3.5, -1) {};
		\node [style=node 0] (2) at (-2.5, 0) {};
		\node [style=node 0] (3) at (-1.5, 0) {};
		\node [style=node 0] (4) at (-0.5, 0) {};
		\node [style=none] (5) at (-2, 0.5) {$e_1$};
		\node [style=node 0] (7) at (3.5, 1) {};
		\node [style=node 0] (8) at (3.5, -1) {};
		\node [style=node 0] (9) at (2.5, 0) {};
		\node [style=node 0] (10) at (1.5, 0) {};
		\node [style=node 0] (11) at (0.5, 0) {};
		\node [style=none] (12) at (2, 0.5) {$e_2$};
		\node [style=none] (13) at (-1.5, -0.5) {$u_1$};
		\node [style=none] (14) at (-0.5, -0.5) {$u_2$};
		\node [style=none] (15) at (0.5, -0.5) {$u_{n-7}$};
		\node [style=none] (16) at (1.5, -0.5) {$u_{n-6}$};
	\end{pgfonlayer}
	\begin{pgfonlayer}{edgelayer}
		\draw (0) to (2);
		\draw (2) to (1);
		\draw (1) to (0);
		\draw (3) to (4);
		\draw [style=dashed edge] (2) to (3);
		\draw (7) to (9);
		\draw (9) to (8);
		\draw (8) to (7);
		\draw (10) to (11);
		\draw [style=dashed edge] (9) to (10);
		\draw [style=cdots arrow] (4) to (11);
	\end{pgfonlayer}
\end{tikzpicture}
    \caption{Example that proves the unboundedness of some core maintenance algorithms.}
    \Description{Two triangles connected by a path of n-6 degree-2 vertices, with edges e1 and e2 at the ends connected to the triangles.}
    \label{fig:zhang_2017_counterexample}
\end{figure}%

\section{\twoCore algorithm} \label{sec:2core_alg}

In~\Cref{sec:mcv_to_3core,sec:all_lower} we showed that a polylogarithmic algorithm for determining whether a vertex is in the~$k$-core of a dynamic graph for~$k \geq 3$ would falsify the \OMvlink and \SETH conjectures, and imply a polylogarithmic algorithm for the incremental variant of any problem in \P. In this section, we show that \emph{there is} a polylogarithmic algorithm for~$k=2$, complementing this result.
Specifically, we describe an algorithm with time complexity~$\dynOh{m \log n}{\log^2 n}{\log n}$\footnote{Here $m$ and $n$ are, as usual, the number of edges and vertices in the graph.} for the following problem:

\begin{problemStatement}{\twoCore}
    Maintain a graph, subject to edge insertion/deletion operations, with query: Given a vertex, is it in the~$2$-core?
\end{problemStatement}


The algorithm described in this section was implemented and extensively tested for correctness. Its source code is available at~\cite{yan_soares_couto_dynamic_2024}.
Related to this, and not superseding the result given here, \citet{sun_fully_2020} described a~$(4+\epsilon)$-approximation for \CoreValue that runs in polylogarithmic time.

\subsection{Using spanning forests to determine the~2-core}

For any (unrooted) maximal spanning forest~$F$ of a graph~$G$, let~$T_u$ be the component of~$F$ which contains~$u$. The \deff{$F$-subtrees of~$u$} are the connected components of~${T_u - u}$, which are trees, and we root them at the (unique) neighbor of~$u$ in them. We say that~$uv \in E(G)$ is a \deff{tree edge} if~$uv \in E(F)$ and an \deff{extra edge} otherwise. Notice that any extra edge connects vertices in the same component of~$F$, since~$F$ is spanning and maximal, and thus always creates a unique cycle when added to~$F$. A vertex is~\deff{$F$-special} if it has an extra edge incident to it. Our algorithm will maintain a maximal spanning forest~$F$ of the dynamic graph~$G$, and use~$F$ to determine which vertices are in the~2-core of~$G$.


\begin{lemma} \label{lemma:2core_iff_F}
    A vertex~$u$ is in the~$2$-core of~$G$ if and only if, for any maximal spanning forest~$F$ of~$G$,~$u$ is $F$-special or two distinct $F$-subtrees of~$u$ have an $F$-special vertex.
\end{lemma}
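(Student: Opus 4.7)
The plan is to prove both directions by leveraging the following reinterpretation of the statement: the 2-core of $G$ is, within each tree of $F$, precisely the Steiner tree of the special vertices in that tree, together with all extra edges. The condition on $u$ (being special or having special vertices in two distinct subtrees) is exactly the Steiner-tree membership condition, so the lemma follows once we show this coincidence. Throughout I would assume $F$ is a \emph{maximal} spanning forest (so every extra edge $xy$ has both endpoints in the same tree $T_x = T_y$), which is consistent with how the rest of the paper maintains~$F$.

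For the ($\Leftarrow$) direction I would exhibit a subgraph $H \subseteq G$ of minimum degree at least 2 that contains $u$, which forces $u$ into the 2-core since the 2-core is the maximal such subgraph. If $u$ is special, take $H$ to be the fundamental cycle of its extra edge $uv$ in $F$. Otherwise, let $T_1, T_2$ be the two subtrees of $u$ holding special vertices $s_1, s_2$ with extra edges $s_1t_1$ and $s_2t_2$, and take $H$ to be the union of the fundamental cycles $C_1, C_2$ together with the $F$-paths $P_1$ from $u$ to $s_1$ (through $T_1$) and $P_2$ from $u$ to $s_2$ (through $T_2$). I would verify degree-by-degree that every vertex in $H$ has degree at least 2: internal vertices of $P_1, P_2$ get degree 2 from the path itself; $u$ gets one edge from each of $P_1, P_2$ (disjoint at $u$); $s_i$ gets its tree edge on $P_i$ plus its extra edge; and vertices on $C_i$ already have degree 2 in $C_i$. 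Shared vertices/edges only increase degrees, so $H$ has minimum degree $\geq 2$.

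For the ($\Rightarrow$) direction I would show that if $u$ is not special and at most one subtree of $u$ contains a special vertex, then $u$ gets removed during the degree-1 peeling process that defines the 2-core. The key observation is that if a subtree $T_i$ of $u$ contains no special vertex, then by maximality of $F$ no extra edge has an endpoint in $V(T_i)$, and the only edge leaving $V(T_i)$ in $G$ is the tree edge from the root of $T_i$ to $u$; hence $G[\{u\}\cup V(T_i)]$ is a tree hanging off $u$, and iterated leaf-peeling eliminates all of $V(T_i)$ without touching vertices outside. After peeling every such $T_i$, vertex $u$ has degree at most 1 in the remaining graph (either 0 if every subtree lacked a special vertex, or 1 if exactly one subtree $T_j$ contained special vertices), so $u$ itself is peeled. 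Since any vertex peeled during the process is not in the 2-core, we conclude $u \notin \mathrm{2\text{-}core}(G)$.

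The only real subtlety I anticipate is the ($\Leftarrow$) subcase where the two paths $P_i$ and cycles $C_i$ are not pairwise disjoint; I would handle this by arguing purely in terms of vertex degrees in $H$ (as above) rather than trying to produce two internally vertex-disjoint $u$-to-cycle paths. A secondary care-point is to invoke the maximality of $F$ correctly so that extra edges never cross between different trees of $F$; this is needed both for the cycle construction in ($\Leftarrow$) and for the "no leak out of $T_i$" argument in ($\Rightarrow$).
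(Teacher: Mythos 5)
Your proposal is correct and follows essentially the same route as the paper: for the ``if'' direction you exhibit the same kind of minimum-degree-2 subgraph built from the fundamental cycles of the extra edges at the special vertices together with the tree paths joining them to $u$, and for the ``only if'' direction you exploit the same structural fact that a special-free subtree of $u$ is a pendant tree attached to the rest of $G$ only through $u$. The only (immaterial) difference is presentational: you phrase the converse as an explicit leaf-peeling order that eliminates $u$, while the paper argues by contradiction that a special-free subtree meeting the $2$-core would contain a degree-1 leaf of the $2$-core; the unproved Steiner-tree framing in your opening paragraph is never actually used, so it does not affect correctness.
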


\begin{proof}
    If~$u$ is $F$-special, then there is an extra edge~$uv$ in~$G$. This edge forms a cycle in~$G$ containing $u$.
    On the other hand, if~$v$ and~$w$ are in different $F$-subtrees of~$u$ and are $F$-special, then both of them are contained in cycles, and either~$u$ is contained in one of these cycles, or~$u$ is contained in the unique path in~$F$ between these two cycles. In all cases, we have a subgraph of~$G$ with minimum degree~2 containing~$u$, thus~$u$ is in the~$2$-core of~$G$.


    Otherwise, if no extra edge is incident to~$u$ and at most one of its $F$-subtrees has an $F$-special vertex, suppose~$u$ is in the~$2$-core of~$G$. Then it has at least two incident edges in the~$2$-core, and thus at least one of its $F$-subtrees without an $F$-special vertex intersects the~$2$-core. However, the intersection of that subtree with the~$2$-core is a forest, and it does not have any incident extra edges, thus it has a leaf with degree~1 in $G$ in the~$2$-core, a contradiction. So~$u$ is not in the~$2$-core of~$G$.
\end{proof}

By maintaining a maximal spanning forest~$F$ of~$G$, and a list of the extra edges, we can use Lemma~\ref{lemma:2core_iff_F} to determine whether vertex~$u$ is in the~$2$-core of~$G$. See \Cref{fig:2-core-lemma}. Determining if~$u$ is $F$-special is trivial. The difficulty is to efficiently determine whether any two distinct $F$-subtrees of~$u$ have an $F$-special vertex. For this, we use the following.

\begin{figure}[h]
    \centering
    \documentclass[tikz]{standalone}
\usepackage{tikz}
\usetikzlibrary{arrows.meta, positioning, calc, shapes.geometric}

\begin{document}

\begin{tikzpicture}[
    font=\sffamily,
    >=Stealth,
    vertex/.style={
        circle, 
        draw=black!80, 
        fill=white, 
        thick, 
        minimum size=4mm,
        inner sep=0pt,
    },
    structural edge/.style={
        draw=gray!40,
        line width=3pt,
    },
    base arc/.style={
        ->,
        thick,
        bend left=15
    },
    base loop/.style={
        ->,
        thick,
        min distance=8mm,
        in=120,
        out=60,
        looseness=5
    },
    order label/.style={
        auto,
        text=#1,
        font=\bfseries\scriptsize,
        inner sep=0.5pt,
        minimum size=3.5mm,
        fill opacity=0.9,
        text opacity=1
    },
    scale=0.6
]

    \coordinate (pos_tree) at (0,0);
    \coordinate (pos_digraph) at (6,0);
    
    \coordinate (pos_tourA) at (0,-5);
    \coordinate (pos_tourB) at (6,-5);

    \begin{scope}[shift={(pos_tree)}, local bounding box=tree]
        \node[vertex] (r) at (0,0) {r};
        \node[vertex] (u) at (-1.5,-2) {u};
        \node[vertex] (v) at (1.5,-2) {v};

        \draw[structural edge] (r) -- (u);
        \draw[structural edge] (r) -- (v);

        \node[anchor=south, font=\bfseries] at (0, 0.9) {Original Tree};
    \end{scope}

    \begin{scope}[shift={(pos_digraph)}, local bounding box=digraph]
        \node[vertex] (r) at (0,0) {r};
        \node[vertex] (u) at (-1.5,-2) {u};
        \node[vertex] (v) at (1.5,-2) {v};

        \draw[base loop, draw=gray!60] (r) to (r);
        \draw[base loop, draw=gray!60, in=135, out=225] (u) to (u); 
        \draw[base loop, draw=gray!60, in=45, out=-45] (v) to (v); 

        \draw[base arc, draw=gray!60] (r) to (u);
        \draw[base arc, draw=gray!60] (u) to (r);
        \draw[base arc, draw=gray!60] (r) to (v);
        \draw[base arc, draw=gray!60] (v) to (r);

        \node[anchor=south, font=\bfseries] at (0, 0.9) {Digraph};
    \end{scope}

    \begin{scope}[shift={(pos_tourA)}, local bounding box=tourA]
        \node[vertex] (r) at (0,0) {r};
        \node[vertex] (u) at (-1.5,-2) {u};
        \node[vertex] (v) at (1.5,-2) {v};
        \def\theme{blue!70!black}

        \draw[base loop, color=\theme] (r) to node[order label=\theme, above] {1} (r);
        
        \draw[base arc, color=\theme] (r) to node[order label=\theme, right, pos=0.6] {2} (u);
        
        \draw[base loop, in=135, out=225, color=\theme] (u) to node[order label=\theme, left] {3} (u);
        
        \draw[base arc, color=\theme] (u) to node[order label=\theme, left, pos=0.4] {4} (r);
        
        \draw[base arc, color=\theme] (r) to node[order label=\theme, right, pos=0.6] {5} (v);
        
        \draw[base loop, in=45, out=-45, color=\theme] (v) to node[order label=\theme, right] {6} (v);
        
        \draw[base arc, color=\theme] (v) to node[order label=\theme, left, pos=0.4] {7} (r);

        \node[anchor=south, font=\bfseries, text=\theme] at (0, 1.3) {Tour A};
    \end{scope}

    \begin{scope}[shift={(pos_tourB)}, local bounding box=tourB]
        \node[vertex] (r) at (0,0) {r};
        \node[vertex] (u) at (-1.5,-2) {u};
        \node[vertex] (v) at (1.5,-2) {v};
        \def\theme{red!70!black}

        \draw[base arc, color=\theme] (r) to node[order label=\theme, right, pos=0.6] {1} (u);
        
        \draw[base loop, in=135, out=225, color=\theme] (u) to node[order label=\theme, left] {2} (u);
        
        \draw[base arc, color=\theme] (u) to node[order label=\theme, left, pos=0.4] {3} (r);
        
        \draw[base arc, color=\theme] (r) to node[order label=\theme, right, pos=0.6] {5} (v);
        
        \draw[base loop, in=45, out=-45, color=\theme] (v) to node[order label=\theme, right] {6} (v);
        
        \draw[base arc, color=\theme] (v) to node[order label=\theme, left, pos=0.4] {7} (r);

        \draw[base loop, color=\theme] (r) to node[order label=\theme, above] {4} (r);

        \node[anchor=south, font=\bfseries, text=\theme] at (0, 1.3) {Tour B};
    \end{scope}

    
    \path ($(pos_tree)!0.5!(pos_digraph)$) coordinate (mid_x_top);
    \path ($(pos_tourA)!0.5!(pos_tourB)$) coordinate (mid_x_bot);
    
    \draw[dashed, gray!30] 
        ($(mid_x_top) + (0, 1.5)$) -- 
        ($(mid_x_bot) + (0, -2.9)$);
        
    \coordinate (mid_y) at ($(tree.south)!0.5!(tourA.north)$);
    
    \draw[dashed, gray!30] 
        ($(pos_tree) + (-2.5, -2.8)$) -- 
        ($(pos_digraph) + (2.5, -2.8)$);

\end{tikzpicture}
\end{document}
    \caption{Example of an Euler tour on a rooted tree. Top: The transformation of the original tree (left) into a digraph (right). Bottom: Two valid Euler tours for the original tree.}
    \Description{Example tree with root r and two children u and v. Two examples of euler tours, rr ru uu ur rv vv vr and ru uu ur rr rv vv vr.}
    \label{fig:euler_tour}
\end{figure}

\newcommand\ET{\mathit{ET}}

An \deff{Euler tour} of a rooted tree is a sequence of arcs made by any Eulerian trail of the digraph created by replacing each edge of the tree by two arcs, one in each direction, plus adding a self-loop arc on each vertex. For simplicity, we say the position of a vertex in an Euler tour is the position of its self-loop. The trail should start at the root of the tree. See \Cref{fig:euler_tour} for an example.

\begin{figure}[h]
    \centering
    \documentclass[tikz, border=5mm]{standalone}
\usepackage{tikz-qtree}
\usetikzlibrary{backgrounds,positioning,fit}

\begin{document}
\begin{tikzpicture}[
    edge from parent path={(\tikzparentnode) -- (\tikzchildnode)},
    sibling distance=20pt,
    region/.style={fill=orange!20, rounded corners, inner sep=3pt},
    ]
    \tikzstyle{vertex}=[circle, draw, inner sep=2pt, minimum size=7pt, font=\small, fill=white]
    \tikzstyle{special}=[vertex, fill=blue!50]
    \tikzstyle{extra_edge}=[-, thick, dashed]
    \tikzstyle{info}=[draw=none, fill=none, font=\small]
    \tikzstyle{queried}=[line width=1pt, draw=red]

    \begin{scope}[xshift=-2.75cm]
        \Tree [.\node[vertex, queried](u_a){u};
            [.\node[vertex](v_a){v}; \node[special](x_a){x}; \node[special](y_a){y}; ]
            [.\node[special](w_a){w}; \node[special](z_a){z}; ]
            [.\node[vertex](p_a){p}; \node[vertex](q_a){q}; ]
        ]
        \draw[extra_edge] (x_a) to[bend right=40] (y_a);
        \draw[extra_edge] (z_a) to[bend left=40] (w_a);

        \begin{scope}[on background layer]
            \node[region, fit=(v_a)(x_a)(y_a)] {};
            \node[region, fit=(w_a)(z_a)] {};
        \end{scope}
        
    \end{scope}

    \begin{scope}[xshift=2.75cm]
        \Tree [.\node[vertex, queried](p_c){p};
            [.\node[vertex](u_c){u};
                [.\node[vertex](v_c){v}; \node[special](x_c){x}; \node[special](y_c){y}; ]
                [.\node[special](w_c){w}; \node[special](z_c){z}; ]
            ]
            \node[vertex](q_c){q};
        ]
        \draw[extra_edge] (x_c) to[bend right=40] (y_c);
        \draw[extra_edge] (z_c) to[bend left=40] (w_c);

        \begin{scope}[on background layer]
            \node[region, fit=(u_c)(v_c)(w_c)(x_c)(y_c)(z_c)] {};
        \end{scope}
        
    \end{scope}

    \begin{scope}[yshift=0cm, xshift=5cm]
        \node[special, label=right:Special vertex] at (0,0) {};
        \node[vertex, queried, label=right:Queried vertex] at (0,-1) {};
        \draw[-] (-0.25, -2) -- (0.2, -2);
        \node[info, right] at (0.2, -2) {Tree edge};
        \draw[densely dashed] (-.25, -3) -- (0.2, -3);
        \node[info, right] at (0.2, -3) {Extra edge};
    \end{scope}

\end{tikzpicture}
\end{document}
    \caption{Using Lemma~\ref{lemma:2core_iff_F} to determine if a vertex is in the 2-core.
    On the left, vertex $u$ is not $F$-special, but is in the 2-core because two of its $F$-subtrees (shaded) contain $F$-special vertices. On the right, vertex $p$ is not in the 2-core, as it is not $F$-special and only one of its $F$-subtrees (shaded) contains $F$-special vertices.}
    \Description{Two examples for the lemma: left vertex u is in the two core because two special subtrees exist, right vertex p is not becaus it only has one special subtree.}
    \label{fig:2-core-lemma}
\end{figure}

\begin{lemma} \label{lemma:et_lca}
    If~$T$ is a tree rooted at a vertex~$u$, $\ET$ is an Euler tour of~$T$, and~${S \subseteq V(T) \setminus \{u\}}$, then two distinct $T$-subtrees of~$u$ have vertices from~$S$ if and only if~$u$ is in the unique path in~$T$ between~$v$ and~$w$, where~$v$ and~$w$ are the first and the last vertices from~$S$ in~$\ET$.
\end{lemma}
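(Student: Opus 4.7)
The plan hinges on the following structural fact about Euler tours: writing the subtrees of $u$ as $T_1,\ldots,T_k$ (the components of $T-u$, each rooted at the corresponding child of $u$), the Euler tour of $T$ traverses these subtrees one at a time, so the self-loop positions of the vertices in a single $T_i$ form a contiguous interval $[a_i,b_i]$ in $\ET$, and these intervals are pairwise disjoint. I would formalize and use this as the main tool. It can be justified by a short induction: after leaving the self-loop of $u$, each visit to a child $c_i$ initiates an Eulerian trail of $T_i$ rooted at $c_i$, completes it, and returns along the arc $c_i u$ before the tour moves to a different child; once the trail leaves $T_i$, it never re-enters it, because the only edge connecting $T_i$ to the rest of $T$ has already had both of its two arcs used.

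For the ($\Leftarrow$) direction of the lemma, suppose $u$ lies on the unique path from $v$ to $w$ in $T$. Since $u\notin S$ and this tree-path passes through $u$, the vertices $v$ and $w$ must belong to distinct subtrees of $u$. As $v,w\in S$, these two subtrees already witness the conclusion.

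For the ($\Rightarrow$) direction I would argue by contrapositive: assume $u$ is not on the path between $v$ and $w$, and show that a single subtree of $u$ contains all of $S$. Not being on the path means that $v$ and $w$ belong to a common subtree $T_i$. By the contiguity fact, the positions of all vertices of $T_i$ lie inside $[a_i,b_i]$, and in particular so do the positions of $v$ and $w$. Since $v$ is the first and $w$ the last position of a vertex of $S$ in $\ET$, every vertex of $S$ has its position in the interval between those of $v$ and $w$, which is a subinterval of $[a_i,b_i]$; hence every vertex of $S$ lies in $T_i$. Thus no two distinct subtrees of $u$ can contain elements of $S$, completing the contrapositive.

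The main obstacle is just the contiguity claim about $\ET$, but it is a standard property of Euler tours of rooted trees, so the proof should be short and routine; all the ``work'' of the lemma is loaded into that observation.
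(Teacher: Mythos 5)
Your proposal is correct and follows essentially the same route as the paper: the forward direction via the observation that removing $u$ separates $v$ from $w$, and the contrapositive via the fact that once an Euler tour enters a subtree of $u$ (by descending the edge to its root) it traverses it entirely before leaving and never returns, so the vertices of each subtree occupy a contiguous block of the tour containing everything between $v$ and $w$. Your explicit ``disjoint intervals'' formulation is just a slightly more packaged version of the paper's argument about the arcs $ux$ and $xu$ bracketing the subtree's portion of the tour.
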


\begin{proof}
    Let~$v$ and $w$ be the first and last vertices from~$S$ in~$\ET$, respectively.
    If~$u$ is in the unique path in $T$ between~$v$ and~$w$, then~$v$ and~$w$ are in distinct $T$-subtrees of~$u$, as they are disconnected by the removal of~$u$.

    Analogously, if~$u$ is not in the path between~$v$ and~$w$, then they are in the same $T$-subtree of~$u$. Notice that every sequence of arcs in an Euler tour is a walk in~$T$, visiting vertices and traversing edges, up or down. Furthermore, in an Euler tour, after walking down an edge~$uv$, all arcs in the $T$-subtree of~$u$ rooted at~$v$ must be traversed before walking up the edge in the reverse direction~$vu$, otherwise the sequence would not be an Eulerian trail, as arcs~$uv$ and~$vu$ must appear exactly once in the sequence. Let~$x$ be the root of the $T$-subtree of~$u$ containing~$v$ and~$w$. We conclude that the arc~$ux$ appears before~$v$ and~$xu$ appears after~$w$ in the sequence. Additionally, all arcs between~$ux$ and~$xu$ are in that same $T$-subtree of~$u$. Therefore, all vertices from~$S$, which by choice are between~$v$ and~$w$ in~$\ET$, and thus between~$ux$ and~$xu$, are also in the same $T$-subtree of~$u$.
\end{proof}

To determine whether a given vertex of~$G$ is in its~$2$-core, we can maintain both a maximal spanning forest of~$G$ and Euler tours for each of its connected components, and then use \Cref{lemma:2core_iff_F,lemma:et_lca}. We now describe data structures which allow doing that efficiently.

\subsection{Auxiliary data structures and algorithms}

We will maintain two dynamic tree structures over~$F$: an~\emph{Euler tour tree}, used to find the first and last $F$-special vertices on some Euler tour of a tree~$T_u$, and a \emph{link-cut tree}, used to determine if~$u$ is in the unique path in~$T_u$ between those vertices. For a graph on $n$ vertices, all update and query operations in both structures take amortized~$\Oh(\lg n)$ time.

\newcommand{\op}[1]{\texttt{#1}}

An Euler tour tree (ETT)~\cite{henzinger_randomized_1999,ett_stanford_notes} is a data structure for representing a forest, which supports edge insertions and deletions.
It maintains an Euler tour of each tree of the forest, stored in a balanced binary search tree (BST). When storing a maximal spanning forest $F$ of a graph $G$, to accommodate queries on the first and last $F$-special vertices in the Euler tour, the BST which stores each Euler tour can be augmented to store aggregated data on each BST node, such as the total number of $F$-special vertices in its subtree. Then, we can traverse the BST to find the first and last $F$-special vertices in~$\Oh(\lg n)$ time. The ETT structure also allows making any node the root of its tree, as that can be done by BST split and join operations.

A Link-Cut Tree (LCT)~\cite{sleator_data_1981,lct_demaine_notes} is a data structure for storing a forest, which supports edge insertions and deletions (by combining operations \op{link(v, w)}, \op{cut(v)}, and \op{evert(v)} from~\cite{sleator_data_1981}), and efficiently querying information about paths in the forest. In particular, it allows for queries on the lowest common ancestor (LCA) of two vertices (using~\op{nca(v, w)}). Given $v$ and $w$, if we make~$u$ the root of its tree (by calling \op{evert(u)}), then the LCA of~$v$ and~$w$ is~$u$ if and only if~$u$ is in the unique path between~$v$ and~$w$.

In order to maintain a maximal spanning forest $F$ of~$G$, we will use the HDT algorithm~\cite{holm_poly-logarithmic_1998} for dynamic connectivity, which has amortized time complexity~$\dynOh{m \lg n}{\lg^2 n}{\frac{\lg n}{\lg \lg n}}$, where $m$ and $n$ are the number of edges and vertices in~$G$. For a dynamic graph~$G$, HDT maintains a maximal spanning forest~$F$ of~$G$, in which it answers connectivity queries. The maximal spanning forest~$F$ changes in the following way:

\begin{itemize}
    \item When an edge~$uv$ is inserted in~$G$, if~$u$ and~$v$ are already connected in~$F$, then~$F$ does not change. Otherwise,~$uv$ is inserted in~$F$.
    \item When an edge~$uv$ is deleted from~$G$, if it is not contained in~$F$, then~$F$ does not change. Otherwise,~$uv$ is deleted from~$F$, and a single replacement edge~$wz$ is inserted in~$F$ if~$u$ and~$v$ are still connected in~$G - uv$.
\end{itemize}

The most intricate part of HDT is efficiently finding a replacement edge, if it exists. The details of how that is done are not relevant here, and do not need to be modified, as we only care about the final maximal spanning forest~$F$. Our final algorithm works like this:

\begin{enumerate}
    \item Given an initial graph~$G$, create an empty LCT and ETT.
    \item Preprocess~$G$ and apply each edge insertion/deletion operation using the HDT algorithm.
    \item Whenever the HDT algorithm inserts or removes an edge from~$F$, insert or remove that edge from both the ETT and the LCT. On the ETT, keep track of which vertices are $F$-special by maintaining how many extra edges are incident on each vertex.
    \item On a query operation, to determine whether~$u$ is in the~$2$-core of~$G$:
    \begin{enumerate}
        \item If~$u$ is $F$-special, then $u$ is in the~$2$-core.
        \item Otherwise, make~$u$ the root of its tree in the ETT, and use its BST to find~$v$ and~$w$, the first and last $F$-special vertices in the stored Euler tour of~$T_u$. If there is no $F$-special vertex in $T_u$, then $u$ is not in the 2-core.
        \item Using the LCT, make~$u$ the root of its tree. Then~$u$ will be in the~$2$-core if and only if $u$ is the LCA of~$v$ and~$w$.
    \end{enumerate}
\end{enumerate}

Every edge insertion/deletion operation results in at most one insertion and one deletion on both the ETT\footnotemark{} and the LCT. Thus, the total update time is~$\Oh(\lg^2 n)$, from the HDT algorithm itself. Furthermore, queries take~$\Oh(\lg n)$ since it performs a constant number of operations on the ETT and LCT, each taking~$\Oh(\lg n)$ time, yielding the amortized complexity~$\dynOh{m \lg n}{\lg^2 n}{\lg n}$ for~\twoCore. The correctness follows from~\Cref{lemma:2core_iff_F,lemma:et_lca}.

\footnotetext{It is possible to avoid using ETTs, and solve~\twoCore using only LCTs. ETTs are used to determine the ``first'' and ``last'' descendant of~$u$ that is $F$-special, and the HDT algorithm also uses ETTs for similar reasons. It is possible to augment the link-cut tree structure to allow for descendant queries~\cite{klein_chapter_2021,yufan_you_maintain_2019}, and thus solve these sub-problems. However, that significantly complicates the implementation and is not commonly done in the literature. So we decided to explain and implement the algorithm using ETTs.} 

\documentclass[../main-waoa.tex]{subfiles}


\begin{document}

\section{Planarizing gadgets for~\texorpdfstring{$k$}{k}-core do not exist} \label{sec:no_crossing}

Even though the maximum core value on a planar graph is~5, it is not yet known whether the \kcore problem on planar graphs has an efficient dynamic or parallel algorithm. Consider the parallel setting. In order to prove the hardness\footnotemark{} of 3-core, an \NC reduction from monotone circuit value to~3-core is done, since monotone circuit value is a known hard problem~\cite{greenlaw_limits_1995}. However, while this reduction can maintain planarity, the planar monotone circuit value problem can actually be solved efficiently in parallel~\cite{ramachandran_efficient_1996}, so the existence of the reduction does \emph{not} imply planar~3-core is hard. Moreover, that efficient algorithm for planar monotone circuit value is not known to translate back to the~\kcore problem.
\footnotetext{A parallel problem is hard if it is~\P-complete under~\NC reductions, that is, it is unlikely to be solvable in polylogarithmic parallel time.}

Very few other planar problems are known to be \P-complete (under \NC reductions), and the ones that are have no known reduction to~\kcore. Another common approach used for proving hardness of a planar problem~$\pi$ is to reduce the non-planar version of~$\pi$ to its planar version. This is done by devising a \emph{planarizing gadget}\footnote{It may also be called a crossing gadget or a crossover gadget.}. We start from a drawing of a non-planar instance of~$\pi$ and then, for each crossing, we replace it with a planarizing gadget such that the new instance has one less crossing, while its answer remains the same.
For example, planarizing gadgets are used to prove \P-completeness of planar circuit value~\cite{goldschlager_monotone_1977} and \NP-completeness of planar~3-coloring~\cite{garey_simplified_1976}.

In this section, we prove that a planarizing gadget for~\kcore cannot exist for any~$k \geq 3$. Proving the nonexistence of such gadgets, as in~\cite{gurjar_planarizing_2016,mccoll_planar_1981}, is useful because it shows this approach for proving hardness is a dead-end, providing some indication that the planar problem might not be hard after all. In this case, hardness would probably need to be proved using a reduction from one of the (scarce) hard planar problems. Even though this indicates an efficient parallel or dynamic algorithm for planar~\kcore might exist, one is yet to be found.

\newcommand{\mP}{\mathcal{P}}

\begin{definition}[Planarizing gadget] \label{def:planarc}
    Let $k \geq 3$. A graph $\mP = \mP(a, b, c, d)$ is a \emph{planarizing gadget} for \kcore if:
    \begin{enumerate}
        \item It is a plane graph (an embedded planar graph).
        \item The vertices $a, b, c, d$ lie on its outer face, and in the circular order $a, c, b, d$.
        \item For any graph $G$ containing edges $ab$ and $cd$, let $G(\mP) \coloneqq G - \{ab, cd\} + \mP$ (identifying the vertices $a,b,c,d$ of $G$ with the ones in $\mP$). Then $G$ has a non-empty \kcore if and only if $G(\mP)$ has a non-empty \kcore. 
    \end{enumerate}
\end{definition}

\newcommand{\AXB}{\mathrm{AXB}}
\newcommand{\CXD}{\mathrm{CXD}}

\begin{theorem} \label{res:no_crossing}
    There exists no planarizing gadget for~\kcore.
\end{theorem}
\begin{proof}
    First consider the case when~$k=3$, and suppose a planarizing gadget~$\mP = \mP(a,b,c,d)$ exists. Let~$k(G)$ be the~3-core of a graph~$G$, and~$G[U]$ be the subgraph of~$G$ induced by the vertex set~$U \subseteq V(G)$. See~\Cref{fig:no_crossing} for the helper graphs $G_\varnothing, G_{ab}, G_{cd}, G_a, G_f$ used in this proof.

\begin{figure*}[h]
\centering
\captionsetup[subfigure]{justification=centering}
\begin{subfigure}{0.33\textwidth}
    \centering
    \def\todraw{1}
    \documentclass[tikz, border=5mm]{standalone}
\usepackage[american]{circuitikz}
\usetikzlibrary{positioning, matrix, decorations.pathreplacing, calc, fit,scopes,backgrounds,shapes.geometric}

\begin{document}
\begin{tikzpicture}[
    node 0/.style={circle, draw=black, minimum size=11pt, font=\small, inner sep=1pt},
    node inner/.style={circle, draw=black, minimum size=8pt, inner sep=1.5pt},
]

\newcommand{\drawArrowInline}[2]{
\begin{scope}[xscale=0.5]
    \node [style=node inner] (#22) at  (1, -0.5) {};  
    \node [style=node inner] (#23) at  (1, 0) {};  
    \node [style=node inner] (#24) at  (1, 0.5) {};  
    \node [style=node inner] (#25) at  (2, 0) {};  
    \draw (#1) to (#22); \draw (#23) to (#22); \draw (#23) to (#24); \draw (#1) to (#24);
    \draw (#24) to (#25); \draw (#23) to (#25); \draw (#22) to (#25);
\end{scope}
}

\newcommand{\drawArrow}[1]{
    \node [style=node inner] (#11) at  (0,0) {};      
    \drawArrowInline{#11}{#1}
}

\def\p{0.5}

\newcommand{\drawBase}[4]{
    \node[style=node 0,#1] (a) at (-\p, \p) {a};
    \node[style=node 0,#2] (b) at (\p, -\p) {b};
    \node[style=node 0,#3] (c) at (-\p, -\p) {c};
    \node[style=node 0,#4] (d) at (\p, \p) {d};
    \draw[color=red,thick] (a) -- (b);
    \draw[color=blue,thick] (c) -- (d);
}

\ifdefined\todraw\else
\def\todraw{5}
\fi

\if\todraw0
\drawBase{}{}{}{}
\scoped[shift={(-\p,\p)}, rotate=180]
    \drawArrowInline{a}{aa};
\fi

\if\todraw1
\drawBase{fill=gray!60}{fill=gray!60}{}{}
\scoped[shift={(-\p,\p)}, rotate=180,every node/.style={fill=gray!60}]
    \drawArrowInline{a}{aa};
\scoped[shift={(\p,-\p)}, rotate=0,every node/.style={fill=gray!60}]
    \drawArrowInline{b}{bb};
\fi

\if\todraw2
\drawBase{}{}{fill=gray!60}{fill=gray!60}
\scoped[shift={(-\p,-\p)}, rotate=-180,every node/.style={fill=gray!60}]
    \drawArrowInline{c}{cc};
\scoped[shift={(\p,\p)}, rotate=0,every node/.style={fill=gray!60}]
    \drawArrowInline{d}{dd};
\fi

\if\todraw3
\drawBase{}{}{}{}
\scoped[shift={(-\p*2,\p)}, rotate=180]
    \drawArrow{aa};
\scoped[shift={(\p,-\p)}, rotate=0]
    \drawArrowInline{b}{bb};
\draw (a) -- (aa1);
\draw (a) -- (d);

\fi

\if\todraw4
\drawBase{}{}{}{}
\fi

\if\todraw5
\begin{scope}[scale=0.9]
\def\p{1}
\node[style=node 0] (a) at (-\p, \p) {a};
\node[style=node 0] (b) at (\p, -\p) {b};
\node[style=node 0] (c) at (-\p, -\p) {c};
\node[style=node 0] (d) at (\p, \p) {d};
\begin{scope}[on background layer, every node/.style={draw, rotate fit=45, fill opacity=0.3, rounded corners=8pt, inner sep=7pt, line width=1.2pt}]
\node[fill=red, fit=(a)(b)] {};
\node[fill=blue, fit=(c)(d)] {};
\end{scope}
\node (A) at (-0.65*\p, 0.65*\p) {A};
\node (B) at (0.65*\p, -0.65*\p) {B};
\node (C) at (-0.65*\p, -0.65*\p) {C};
\node (D) at (0.65*\p, 0.65*\p) {D};
\node (X) at (0,0) {X};
\node (Y) at (0,1.5*\p) {Y};
\end{scope}
\fi

\end{tikzpicture}
\end{document}
    \captionsetup{textformat=simple}
    \subcaption{$G_{ab}$ \label{fig:no_crossing:Gab}}
    \Description{Two edges AB and CD, where A and B are both input vertices of two arrows used in the 3-core reduction.}
\end{subfigure}%
\begin{subfigure}{0.33\textwidth}
    \centering
    \def\todraw{5}
    \documentclass[tikz, border=5mm]{standalone}
\usepackage[american]{circuitikz}
\usetikzlibrary{positioning, matrix, decorations.pathreplacing, calc, fit,scopes,backgrounds,shapes.geometric}

\begin{document}
\begin{tikzpicture}[
    node 0/.style={circle, draw=black, minimum size=11pt, font=\small, inner sep=1pt},
    node inner/.style={circle, draw=black, minimum size=8pt, inner sep=1.5pt},
]

\newcommand{\drawArrowInline}[2]{
\begin{scope}[xscale=0.5]
    \node [style=node inner] (#22) at  (1, -0.5) {};  
    \node [style=node inner] (#23) at  (1, 0) {};  
    \node [style=node inner] (#24) at  (1, 0.5) {};  
    \node [style=node inner] (#25) at  (2, 0) {};  
    \draw (#1) to (#22); \draw (#23) to (#22); \draw (#23) to (#24); \draw (#1) to (#24);
    \draw (#24) to (#25); \draw (#23) to (#25); \draw (#22) to (#25);
\end{scope}
}

\newcommand{\drawArrow}[1]{
    \node [style=node inner] (#11) at  (0,0) {};      
    \drawArrowInline{#11}{#1}
}

\def\p{0.5}

\newcommand{\drawBase}[4]{
    \node[style=node 0,#1] (a) at (-\p, \p) {a};
    \node[style=node 0,#2] (b) at (\p, -\p) {b};
    \node[style=node 0,#3] (c) at (-\p, -\p) {c};
    \node[style=node 0,#4] (d) at (\p, \p) {d};
    \draw[color=red,thick] (a) -- (b);
    \draw[color=blue,thick] (c) -- (d);
}

\ifdefined\todraw\else
\def\todraw{5}
\fi

\if\todraw0
\drawBase{}{}{}{}
\scoped[shift={(-\p,\p)}, rotate=180]
    \drawArrowInline{a}{aa};
\fi

\if\todraw1
\drawBase{fill=gray!60}{fill=gray!60}{}{}
\scoped[shift={(-\p,\p)}, rotate=180,every node/.style={fill=gray!60}]
    \drawArrowInline{a}{aa};
\scoped[shift={(\p,-\p)}, rotate=0,every node/.style={fill=gray!60}]
    \drawArrowInline{b}{bb};
\fi

\if\todraw2
\drawBase{}{}{fill=gray!60}{fill=gray!60}
\scoped[shift={(-\p,-\p)}, rotate=-180,every node/.style={fill=gray!60}]
    \drawArrowInline{c}{cc};
\scoped[shift={(\p,\p)}, rotate=0,every node/.style={fill=gray!60}]
    \drawArrowInline{d}{dd};
\fi

\if\todraw3
\drawBase{}{}{}{}
\scoped[shift={(-\p*2,\p)}, rotate=180]
    \drawArrow{aa};
\scoped[shift={(\p,-\p)}, rotate=0]
    \drawArrowInline{b}{bb};
\draw (a) -- (aa1);
\draw (a) -- (d);

\fi

\if\todraw4
\drawBase{}{}{}{}
\fi

\if\todraw5
\begin{scope}[scale=0.9]
\def\p{1}
\node[style=node 0] (a) at (-\p, \p) {a};
\node[style=node 0] (b) at (\p, -\p) {b};
\node[style=node 0] (c) at (-\p, -\p) {c};
\node[style=node 0] (d) at (\p, \p) {d};
\begin{scope}[on background layer, every node/.style={draw, rotate fit=45, fill opacity=0.3, rounded corners=8pt, inner sep=7pt, line width=1.2pt}]
\node[fill=red, fit=(a)(b)] {};
\node[fill=blue, fit=(c)(d)] {};
\end{scope}
\node (A) at (-0.65*\p, 0.65*\p) {A};
\node (B) at (0.65*\p, -0.65*\p) {B};
\node (C) at (-0.65*\p, -0.65*\p) {C};
\node (D) at (0.65*\p, 0.65*\p) {D};
\node (X) at (0,0) {X};
\node (Y) at (0,1.5*\p) {Y};
\end{scope}
\fi

\end{tikzpicture}
\end{document}
    \captionsetup{textformat=simple}
    \subcaption{$\mP = A \cupdot B \cupdot C \cupdot D \cupdot X \cupdot Y$ \label{fig:no-crossing:split}}
    \Description{Planarizing gadget split into sets A, B, C, D, X, and Y.}
\end{subfigure}%
\begin{subfigure}{0.33\textwidth}
    \centering
    \def\todraw{2}
    \documentclass[tikz, border=5mm]{standalone}
\usepackage[american]{circuitikz}
\usetikzlibrary{positioning, matrix, decorations.pathreplacing, calc, fit,scopes,backgrounds,shapes.geometric}

\begin{document}
\begin{tikzpicture}[
    node 0/.style={circle, draw=black, minimum size=11pt, font=\small, inner sep=1pt},
    node inner/.style={circle, draw=black, minimum size=8pt, inner sep=1.5pt},
]

\newcommand{\drawArrowInline}[2]{
\begin{scope}[xscale=0.5]
    \node [style=node inner] (#22) at  (1, -0.5) {};  
    \node [style=node inner] (#23) at  (1, 0) {};  
    \node [style=node inner] (#24) at  (1, 0.5) {};  
    \node [style=node inner] (#25) at  (2, 0) {};  
    \draw (#1) to (#22); \draw (#23) to (#22); \draw (#23) to (#24); \draw (#1) to (#24);
    \draw (#24) to (#25); \draw (#23) to (#25); \draw (#22) to (#25);
\end{scope}
}

\newcommand{\drawArrow}[1]{
    \node [style=node inner] (#11) at  (0,0) {};      
    \drawArrowInline{#11}{#1}
}

\def\p{0.5}

\newcommand{\drawBase}[4]{
    \node[style=node 0,#1] (a) at (-\p, \p) {a};
    \node[style=node 0,#2] (b) at (\p, -\p) {b};
    \node[style=node 0,#3] (c) at (-\p, -\p) {c};
    \node[style=node 0,#4] (d) at (\p, \p) {d};
    \draw[color=red,thick] (a) -- (b);
    \draw[color=blue,thick] (c) -- (d);
}

\ifdefined\todraw\else
\def\todraw{5}
\fi

\if\todraw0
\drawBase{}{}{}{}
\scoped[shift={(-\p,\p)}, rotate=180]
    \drawArrowInline{a}{aa};
\fi

\if\todraw1
\drawBase{fill=gray!60}{fill=gray!60}{}{}
\scoped[shift={(-\p,\p)}, rotate=180,every node/.style={fill=gray!60}]
    \drawArrowInline{a}{aa};
\scoped[shift={(\p,-\p)}, rotate=0,every node/.style={fill=gray!60}]
    \drawArrowInline{b}{bb};
\fi

\if\todraw2
\drawBase{}{}{fill=gray!60}{fill=gray!60}
\scoped[shift={(-\p,-\p)}, rotate=-180,every node/.style={fill=gray!60}]
    \drawArrowInline{c}{cc};
\scoped[shift={(\p,\p)}, rotate=0,every node/.style={fill=gray!60}]
    \drawArrowInline{d}{dd};
\fi

\if\todraw3
\drawBase{}{}{}{}
\scoped[shift={(-\p*2,\p)}, rotate=180]
    \drawArrow{aa};
\scoped[shift={(\p,-\p)}, rotate=0]
    \drawArrowInline{b}{bb};
\draw (a) -- (aa1);
\draw (a) -- (d);

\fi

\if\todraw4
\drawBase{}{}{}{}
\fi

\if\todraw5
\begin{scope}[scale=0.9]
\def\p{1}
\node[style=node 0] (a) at (-\p, \p) {a};
\node[style=node 0] (b) at (\p, -\p) {b};
\node[style=node 0] (c) at (-\p, -\p) {c};
\node[style=node 0] (d) at (\p, \p) {d};
\begin{scope}[on background layer, every node/.style={draw, rotate fit=45, fill opacity=0.3, rounded corners=8pt, inner sep=7pt, line width=1.2pt}]
\node[fill=red, fit=(a)(b)] {};
\node[fill=blue, fit=(c)(d)] {};
\end{scope}
\node (A) at (-0.65*\p, 0.65*\p) {A};
\node (B) at (0.65*\p, -0.65*\p) {B};
\node (C) at (-0.65*\p, -0.65*\p) {C};
\node (D) at (0.65*\p, 0.65*\p) {D};
\node (X) at (0,0) {X};
\node (Y) at (0,1.5*\p) {Y};
\end{scope}
\fi

\end{tikzpicture}
\end{document}
    \captionsetup{textformat=simple}
    \subcaption{$G_{cd}$}
    \Description{Two edges AB and CD, where C and D are both input vertices of two arrows used in the 3-core reduction.}
\end{subfigure}%
\\%
\begin{subfigure}{0.33\textwidth}
    \centering
    \def\todraw{0}
    \documentclass[tikz, border=5mm]{standalone}
\usepackage[american]{circuitikz}
\usetikzlibrary{positioning, matrix, decorations.pathreplacing, calc, fit,scopes,backgrounds,shapes.geometric}

\begin{document}
\begin{tikzpicture}[
    node 0/.style={circle, draw=black, minimum size=11pt, font=\small, inner sep=1pt},
    node inner/.style={circle, draw=black, minimum size=8pt, inner sep=1.5pt},
]

\newcommand{\drawArrowInline}[2]{
\begin{scope}[xscale=0.5]
    \node [style=node inner] (#22) at  (1, -0.5) {};  
    \node [style=node inner] (#23) at  (1, 0) {};  
    \node [style=node inner] (#24) at  (1, 0.5) {};  
    \node [style=node inner] (#25) at  (2, 0) {};  
    \draw (#1) to (#22); \draw (#23) to (#22); \draw (#23) to (#24); \draw (#1) to (#24);
    \draw (#24) to (#25); \draw (#23) to (#25); \draw (#22) to (#25);
\end{scope}
}

\newcommand{\drawArrow}[1]{
    \node [style=node inner] (#11) at  (0,0) {};      
    \drawArrowInline{#11}{#1}
}

\def\p{0.5}

\newcommand{\drawBase}[4]{
    \node[style=node 0,#1] (a) at (-\p, \p) {a};
    \node[style=node 0,#2] (b) at (\p, -\p) {b};
    \node[style=node 0,#3] (c) at (-\p, -\p) {c};
    \node[style=node 0,#4] (d) at (\p, \p) {d};
    \draw[color=red,thick] (a) -- (b);
    \draw[color=blue,thick] (c) -- (d);
}

\ifdefined\todraw\else
\def\todraw{5}
\fi

\if\todraw0
\drawBase{}{}{}{}
\scoped[shift={(-\p,\p)}, rotate=180]
    \drawArrowInline{a}{aa};
\fi

\if\todraw1
\drawBase{fill=gray!60}{fill=gray!60}{}{}
\scoped[shift={(-\p,\p)}, rotate=180,every node/.style={fill=gray!60}]
    \drawArrowInline{a}{aa};
\scoped[shift={(\p,-\p)}, rotate=0,every node/.style={fill=gray!60}]
    \drawArrowInline{b}{bb};
\fi

\if\todraw2
\drawBase{}{}{fill=gray!60}{fill=gray!60}
\scoped[shift={(-\p,-\p)}, rotate=-180,every node/.style={fill=gray!60}]
    \drawArrowInline{c}{cc};
\scoped[shift={(\p,\p)}, rotate=0,every node/.style={fill=gray!60}]
    \drawArrowInline{d}{dd};
\fi

\if\todraw3
\drawBase{}{}{}{}
\scoped[shift={(-\p*2,\p)}, rotate=180]
    \drawArrow{aa};
\scoped[shift={(\p,-\p)}, rotate=0]
    \drawArrowInline{b}{bb};
\draw (a) -- (aa1);
\draw (a) -- (d);

\fi

\if\todraw4
\drawBase{}{}{}{}
\fi

\if\todraw5
\begin{scope}[scale=0.9]
\def\p{1}
\node[style=node 0] (a) at (-\p, \p) {a};
\node[style=node 0] (b) at (\p, -\p) {b};
\node[style=node 0] (c) at (-\p, -\p) {c};
\node[style=node 0] (d) at (\p, \p) {d};
\begin{scope}[on background layer, every node/.style={draw, rotate fit=45, fill opacity=0.3, rounded corners=8pt, inner sep=7pt, line width=1.2pt}]
\node[fill=red, fit=(a)(b)] {};
\node[fill=blue, fit=(c)(d)] {};
\end{scope}
\node (A) at (-0.65*\p, 0.65*\p) {A};
\node (B) at (0.65*\p, -0.65*\p) {B};
\node (C) at (-0.65*\p, -0.65*\p) {C};
\node (D) at (0.65*\p, 0.65*\p) {D};
\node (X) at (0,0) {X};
\node (Y) at (0,1.5*\p) {Y};
\end{scope}
\fi

\end{tikzpicture}
\end{document}
    \captionsetup{textformat=simple}
    \subcaption{$G_a$}
    \Description{Two edges AB and CD, where A is an input vertex of one arrow used in the 3-core reduction.}
\end{subfigure}%
\begin{subfigure}{0.33\textwidth}
    \centering
    \def\todraw{4}
    \documentclass[tikz, border=5mm]{standalone}
\usepackage[american]{circuitikz}
\usetikzlibrary{positioning, matrix, decorations.pathreplacing, calc, fit,scopes,backgrounds,shapes.geometric}

\begin{document}
\begin{tikzpicture}[
    node 0/.style={circle, draw=black, minimum size=11pt, font=\small, inner sep=1pt},
    node inner/.style={circle, draw=black, minimum size=8pt, inner sep=1.5pt},
]

\newcommand{\drawArrowInline}[2]{
\begin{scope}[xscale=0.5]
    \node [style=node inner] (#22) at  (1, -0.5) {};  
    \node [style=node inner] (#23) at  (1, 0) {};  
    \node [style=node inner] (#24) at  (1, 0.5) {};  
    \node [style=node inner] (#25) at  (2, 0) {};  
    \draw (#1) to (#22); \draw (#23) to (#22); \draw (#23) to (#24); \draw (#1) to (#24);
    \draw (#24) to (#25); \draw (#23) to (#25); \draw (#22) to (#25);
\end{scope}
}

\newcommand{\drawArrow}[1]{
    \node [style=node inner] (#11) at  (0,0) {};      
    \drawArrowInline{#11}{#1}
}

\def\p{0.5}

\newcommand{\drawBase}[4]{
    \node[style=node 0,#1] (a) at (-\p, \p) {a};
    \node[style=node 0,#2] (b) at (\p, -\p) {b};
    \node[style=node 0,#3] (c) at (-\p, -\p) {c};
    \node[style=node 0,#4] (d) at (\p, \p) {d};
    \draw[color=red,thick] (a) -- (b);
    \draw[color=blue,thick] (c) -- (d);
}

\ifdefined\todraw\else
\def\todraw{5}
\fi

\if\todraw0
\drawBase{}{}{}{}
\scoped[shift={(-\p,\p)}, rotate=180]
    \drawArrowInline{a}{aa};
\fi

\if\todraw1
\drawBase{fill=gray!60}{fill=gray!60}{}{}
\scoped[shift={(-\p,\p)}, rotate=180,every node/.style={fill=gray!60}]
    \drawArrowInline{a}{aa};
\scoped[shift={(\p,-\p)}, rotate=0,every node/.style={fill=gray!60}]
    \drawArrowInline{b}{bb};
\fi

\if\todraw2
\drawBase{}{}{fill=gray!60}{fill=gray!60}
\scoped[shift={(-\p,-\p)}, rotate=-180,every node/.style={fill=gray!60}]
    \drawArrowInline{c}{cc};
\scoped[shift={(\p,\p)}, rotate=0,every node/.style={fill=gray!60}]
    \drawArrowInline{d}{dd};
\fi

\if\todraw3
\drawBase{}{}{}{}
\scoped[shift={(-\p*2,\p)}, rotate=180]
    \drawArrow{aa};
\scoped[shift={(\p,-\p)}, rotate=0]
    \drawArrowInline{b}{bb};
\draw (a) -- (aa1);
\draw (a) -- (d);

\fi

\if\todraw4
\drawBase{}{}{}{}
\fi

\if\todraw5
\begin{scope}[scale=0.9]
\def\p{1}
\node[style=node 0] (a) at (-\p, \p) {a};
\node[style=node 0] (b) at (\p, -\p) {b};
\node[style=node 0] (c) at (-\p, -\p) {c};
\node[style=node 0] (d) at (\p, \p) {d};
\begin{scope}[on background layer, every node/.style={draw, rotate fit=45, fill opacity=0.3, rounded corners=8pt, inner sep=7pt, line width=1.2pt}]
\node[fill=red, fit=(a)(b)] {};
\node[fill=blue, fit=(c)(d)] {};
\end{scope}
\node (A) at (-0.65*\p, 0.65*\p) {A};
\node (B) at (0.65*\p, -0.65*\p) {B};
\node (C) at (-0.65*\p, -0.65*\p) {C};
\node (D) at (0.65*\p, 0.65*\p) {D};
\node (X) at (0,0) {X};
\node (Y) at (0,1.5*\p) {Y};
\end{scope}
\fi

\end{tikzpicture}
\end{document}
    \captionsetup{textformat=simple}
    \subcaption{$G_\varnothing$}
    \Description{Two edges AB and CD.}
\end{subfigure}%
\begin{subfigure}{0.33\textwidth}
    \centering
    \def\todraw{3}
    \documentclass[tikz, border=5mm]{standalone}
\usepackage[american]{circuitikz}
\usetikzlibrary{positioning, matrix, decorations.pathreplacing, calc, fit,scopes,backgrounds,shapes.geometric}

\begin{document}
\begin{tikzpicture}[
    node 0/.style={circle, draw=black, minimum size=11pt, font=\small, inner sep=1pt},
    node inner/.style={circle, draw=black, minimum size=8pt, inner sep=1.5pt},
]

\newcommand{\drawArrowInline}[2]{
\begin{scope}[xscale=0.5]
    \node [style=node inner] (#22) at  (1, -0.5) {};  
    \node [style=node inner] (#23) at  (1, 0) {};  
    \node [style=node inner] (#24) at  (1, 0.5) {};  
    \node [style=node inner] (#25) at  (2, 0) {};  
    \draw (#1) to (#22); \draw (#23) to (#22); \draw (#23) to (#24); \draw (#1) to (#24);
    \draw (#24) to (#25); \draw (#23) to (#25); \draw (#22) to (#25);
\end{scope}
}

\newcommand{\drawArrow}[1]{
    \node [style=node inner] (#11) at  (0,0) {};      
    \drawArrowInline{#11}{#1}
}

\def\p{0.5}

\newcommand{\drawBase}[4]{
    \node[style=node 0,#1] (a) at (-\p, \p) {a};
    \node[style=node 0,#2] (b) at (\p, -\p) {b};
    \node[style=node 0,#3] (c) at (-\p, -\p) {c};
    \node[style=node 0,#4] (d) at (\p, \p) {d};
    \draw[color=red,thick] (a) -- (b);
    \draw[color=blue,thick] (c) -- (d);
}

\ifdefined\todraw\else
\def\todraw{5}
\fi

\if\todraw0
\drawBase{}{}{}{}
\scoped[shift={(-\p,\p)}, rotate=180]
    \drawArrowInline{a}{aa};
\fi

\if\todraw1
\drawBase{fill=gray!60}{fill=gray!60}{}{}
\scoped[shift={(-\p,\p)}, rotate=180,every node/.style={fill=gray!60}]
    \drawArrowInline{a}{aa};
\scoped[shift={(\p,-\p)}, rotate=0,every node/.style={fill=gray!60}]
    \drawArrowInline{b}{bb};
\fi

\if\todraw2
\drawBase{}{}{fill=gray!60}{fill=gray!60}
\scoped[shift={(-\p,-\p)}, rotate=-180,every node/.style={fill=gray!60}]
    \drawArrowInline{c}{cc};
\scoped[shift={(\p,\p)}, rotate=0,every node/.style={fill=gray!60}]
    \drawArrowInline{d}{dd};
\fi

\if\todraw3
\drawBase{}{}{}{}
\scoped[shift={(-\p*2,\p)}, rotate=180]
    \drawArrow{aa};
\scoped[shift={(\p,-\p)}, rotate=0]
    \drawArrowInline{b}{bb};
\draw (a) -- (aa1);
\draw (a) -- (d);

\fi

\if\todraw4
\drawBase{}{}{}{}
\fi

\if\todraw5
\begin{scope}[scale=0.9]
\def\p{1}
\node[style=node 0] (a) at (-\p, \p) {a};
\node[style=node 0] (b) at (\p, -\p) {b};
\node[style=node 0] (c) at (-\p, -\p) {c};
\node[style=node 0] (d) at (\p, \p) {d};
\begin{scope}[on background layer, every node/.style={draw, rotate fit=45, fill opacity=0.3, rounded corners=8pt, inner sep=7pt, line width=1.2pt}]
\node[fill=red, fit=(a)(b)] {};
\node[fill=blue, fit=(c)(d)] {};
\end{scope}
\node (A) at (-0.65*\p, 0.65*\p) {A};
\node (B) at (0.65*\p, -0.65*\p) {B};
\node (C) at (-0.65*\p, -0.65*\p) {C};
\node (D) at (0.65*\p, 0.65*\p) {D};
\node (X) at (0,0) {X};
\node (Y) at (0,1.5*\p) {Y};
\end{scope}
\fi

\end{tikzpicture}
\end{document}
    \captionsetup{textformat=simple}
    \subcaption{$G_f$ \label{fig:no_crossing:Gf}}
    \Description{Two edges AB and CD, with an extra edge AD connecting them, where B is the input vertex of one arrow used in the 3-core reduction, and A is connected by an edge to the input vertex of another arrow.}
\end{subfigure}%
\caption{Graphs for the proof of~\Cref{res:no_crossing}. Vertices in gray are in the~3-core. \label{fig:no_crossing}}
\end{figure*}

    First, note that $G_\varnothing(\mP) = \mP$ thus~$k(\mP) = k(G_\varnothing(\mP)) = k(G_\varnothing) = \varnothing$.
    Now, consider the graph~$G_{ab}$ from~\Cref{fig:no_crossing:Gab}, and let~$\mathrm{AXB} = \mP \cap k(G_{ab}(\mP))$ be the part of~$\mP$ that intersects with the~3-core of~$G_{ab}(\mP)$, which is non-empty. Then~$\{a,b\} \subseteq V(\AXB)$. Analogously, let~$\CXD = \mP \cap k(G_{cd}(\mP))$ and $\{c,d\} \subseteq V(\CXD)$. There exists a path from~$a$ to~$b$ in~$\AXB$, otherwise, take~$C_a$ to be the connected component of~$a$ in~$\AXB$, and note that~$G_a(C_a)$ would have a non-empty~3-core while~$G_a$ does not. The same follows for~$c$ and~$d$ in~$\CXD$.
    
    Since~$\mP$ is a plane graph, this partitions the vertices of~$\mP$ into sets~$A$,~$B$,~$C$,~$D$,~$X$ and~$Y$, with~$(a, b, c, d) \in A \x B \x C \x D$ and~$X = V(\AXB) \cap V(\CXD)$, like in~\Cref{fig:no-crossing:split}. There are no edges from~$C$ to~$D$, since otherwise there would be a crossing between those edges and some edge from the~$ab$-path in~$\AXB$.

    For every vertex~$u \in V(\AXB)$, we have that (1)~$\delta_{\AXB}(u)$, the degree of~$u$ in the subgraph~$\AXB$, is at least 3 if~$u \notin \{a,b\}$ and at least~1 otherwise, since~$u$ is part of~$k(G_{ab}(\mP))$. The same follows for~$\CXD$, and furthermore, since there are no edges from~$C$ to~$D$, it also follows that (2)~$\delta_{\mathrm{DX}}(v)$, for any~$v \in D$, is at least~3 if~$v \neq d$ and at least~1 otherwise, where~$\mathrm{DX} = \mP[D\cupdot X]$.

    Now, consider~$G_f$ from~\Cref{fig:no_crossing:Gf}, which has an empty~3-core. Consider~$G_f(\mP[A\cupdot X \cupdot  B\cupdot D])$ and note that, because of the edge $ad$, all its vertices have degree at least 3, the vertices in~$A \cupdot X \cupdot B$ due to~(1) and the vertices in~$D$ due to~(2). This implies~$k(G_f(\mP)) \neq \varnothing$, a contradiction to the existence of~$\mP$.
\end{proof}

For the dynamic planar~$k$-core problem, the nonexistence of the planarizing gadget implies we cannot prove its hardness directly from the OMv or SETH techniques used in~\Cref{sec:all_lower}. In the literature, there has been some work on finding lower bounds for planar dynamic problems via alternative planar reductions from~OMv~\cite{abboud_popular_2016}, but no results for core decomposition exist so far.

\end{document}